\documentclass[12pt]{article}

\usepackage{float}
\usepackage{caption}
\usepackage{subfig}
\usepackage{comment}
\usepackage{amsfonts}
\usepackage{amssymb}
\usepackage{pgfplots}
\usepackage{algorithm}
\usepackage{algorithmic}   
\usepackage{authblk}
\usepackage{amsthm}
\usepackage[margin=0.95in]{geometry}
\usepackage{graphicx, rotating}


%
%
%

\usepackage{xy}
\xyoption{matrix}
\xyoption{frame}
\xyoption{arrow}
\xyoption{curve}
\usepackage{amsmath}
\entrymodifiers={}

\newcommand{\ket}[1]{{\left\vert{#1}\right\rangle}}
\newcommand{\qw}[1][-1]{\ar @{-} [0,#1]}
\newcommand{\qwx}[1][-1]{\ar @{-} [#1,0]}


\newcommand{\gate}[1]{*{\xy *+<.6em>{#1};p\save+LU;+RU **\dir{-}\restore\save+RU;+RD **\dir{-}\restore\save+RD;+LD **\dir{-}\restore\POS+LD;+LU **\dir{-}\endxy} \qw}






\newcommand{\control}{*!<0em,.025em>-=-{\bullet}}
\newcommand{\controlo}{*+<.01em>{\xy *!A\cir<.19em>{} \endxy}}
\newcommand{\ctrl}[1]{\control \qwx[#1] \qw}
\newcommand{\ctrlo}[1]{\controlo \qwx[#1] \qw}
\newcommand{\targ}{*=<.8em,.8em>{\xy {\ar @{ - } +<.39em,0em> \ar @{ - } -<.39em,0em> \ar @{ - } +<0em,.39em> \ar @{ - } -<0em,.39em>},*\cir<.4em>{} \endxy} \qw}

\newcommand{\multigate}[2]{*+<1em,.9em>{\hphantom{#2}} \qw \POS[0,0].[#1,0];p !C *{#2},p \save+LU;+RU **\dir{-}\restore\save+RU;+RD **\dir{-}\restore\save+RD;+LD **\dir{-}\restore\save+LD;+LU **\dir{-}\restore}
\newcommand{\ghost}[1]{*+<1em,.9em>{\hphantom{#1}} \qw}

\newcommand{\gategroup}[6]{\POS"#1,#2"."#3,#2"."#1,#4"."#3,#4"!C*+<#5>\frm{#6}}

\newcommand{\lstick}[1]{*!R!<.5em,0em>=<0em>{#1}}


\newcommand{\Qcircuit}{\xymatrix @*=<0em>}




\newcommand{\naturals} {{\mathbb N}}
\newcommand{\nat} {{\mathbb N}}




\hfuzz=4pt

\newtheorem{theorem}{Theorem}
\newtheorem{rem}{Remark}

\newtheorem{cor}{Corollary}

\newcommand\e{\varepsilon}



\newcommand{\x}{  {\hat{x}_{s_1}}     }
\newcommand{\xh}{  {\hat{x}_{s_1}}     }
\newcommand{\p}{ {\mathcal{P} }}
\newcommand{\dd}{ {\delta }}

\addtocounter{algorithm}{-1}   
\addtocounter{cor}{-1} 

\setlength{\belowcaptionskip}{4pt}
\captionsetup[table]{belowskip=4pt}
\setlength{\abovecaptionskip}{4pt}
\captionsetup[table]{aboveskip=4pt}

\begin{document}
\title{Quantum Algorithms and Circuits for Scientific Computing}


\author{Mihir K. Bhaskar$^1$, Stuart Hadfield$^2$, Anargyros Papageorgiou$^2$, \\ and Iasonas Petras$^3$} 
\affil{$^1$ Department of Physics, Columbia University\footnote{Current address: Mihir K. Bhaskar, Dept. of Physics, Harvard University, Cambridge MA 02138.} \\
$^2$ Department of Computer Science, Columbia University\footnote{Corresponding author: stuartah@cs.columbia.edu.}\\
$^3$ Department of Computer Science, Princeton University}


\maketitle

\begin{abstract}
Quantum algorithms for scientific computing require modules implementing fundamental functions, 
such as the square root, 
the logarithm, and others. We require 
algorithms that have a well-controlled numerical error, 
that are uniformly scalable and reversible (unitary), and that can be implemented efficiently. 
We present quantum algorithms and circuits for computing the square root, the natural logarithm, 
and arbitrary fractional powers. We provide performance guarantees in terms of their worst-case accuracy and cost.
We further illustrate their performance by providing tests comparing them to 
the respective floating point implementations found in widely used numerical software.
\end{abstract}

\section{Introduction}  \label{sec:Intro}

The potential advantage of quantum computers over classical computers has generated a significant amount of 
interest in quantum computation, and has resulted in a large number of quantum algorithms not only for 
discrete problems, such as integer factorization, 
but also for computational problems in science and engineering, such as multivariate integration, path integration,
the solution of ordinary and partial differential equations, eigenvalue problems, and 
numerical linear algebra problems. A survey of such algorithms can be found in \cite{Qcontinuous}.

In solving scientific and engineering problems, classical algorithms typically use floating point arithmetic and numerical
libraries of special functions. The IEEE Standard for Floating 
Point Arithmetic (IEEE 754-2008) \cite{IEEE754-2008} ensures that such calculations are performed 
with well-defined precision. 
A similar standard is needed 
for quantum computation. 
Many quantum algorithms use the quantum circuit model of computation, typically employing a 
fixed-precision representation of numbers.
Yet there is no standard specifying how arithmetic operations between numbers (of possibly disproportionate 
magnitudes) held in registers of finite length are to be performed, and how to deal with error. 
Since registers have finite length it is not reasonable to expect to propagate all the results of 
intermediate calculations 
exactly throughout all the stages of an algorithm and intermediate approximations have to be made.
Most importantly, there are no existing libraries of quantum circuits with performance guarantees, implementing 
functions such as the square root of a number, an arbitrary fractional power of a number, the logarithm 
of a number or other similar elementary functions. 
The quantum circuits should be uniformly scalable, and at the same time make efficient use of quantum resources 
to meet physical constraints of potential quantum computing devices of the foreseeable future.

The need for such quantum circuits to be used as modules in other quantum algorithms is apparent.
For example, a recent paper deals with the solution of linear systems on a quantum computer \cite{Harrow}.
The authors present an algorithm that requires the (approximate) calculation of the reciprocal of a number followed by the calculation of 
trigonometric functions 
needed in a controlled rotation 
on the way to the final result. 
However, the paper does not give any details about how these operations are to be 
implemented. From a complexity theory point of view this may not be a complication, but certainly
there is a lot of work that is left to be done before one is able to implement the linear systems algorithm
in, say, the quantum circuit model of computation, and eventually use it
if a quantum computer becomes available.

It is worthwhile 
remarking on the direct applicability of classical algorithms to quantum computation. It is known that classical computation is subsumed by quantum computation, i.e., that for any classical algorithm, there exists a quantum algorithm which performs the same computation \cite{NC}. 
This follows from the fact that any classical algorithm (or circuit) can be implemented reversibly, in principle, but with the additional overhead of a possibly  large number of 
{\it ancilla} qubits that must be carried forward throughout the computation. 
For simple circuits consisting of the composition of basic logical operations, this overhead grows with the number of gates.
On the other hand, scientific computing algorithms are quite different. They typically  involve a large number of floating point arithmetic operations computed approximately according to rules that take into account the relative  magnitudes of the operands requiring mantissa shifting, normalization and rounding. Thus they are quite  expensive to implement reversibly because this would require many registers of large size to store all the intermediate results. Moreover, a mechanism is necessary for dealing with roundoff error and overflow which are not reversible operations. 
Hence, direct simulation on a quantum computer of classical algorithms for scientific computing that have
been implemented in floating point arithmetic quickly becomes quite complicated and prohibitively expensive.

As we indicated, we consider the quantum circuit model of computation where arithmetic operations are 
performed with fixed
precision.
We use a small number of elementary modules, or building blocks, to implement fundamental numerical functions.
Within each module the calculations are performed exactly. The results are logically truncated by selecting a desirable number of significant bits which are passed 
on as inputs to the next stage,
which is also implemented using an elementary module. We repeat this procedure until we obtain the final result. 
This way, it suffices to implement quantum mechanically a relatively small number of elementary modules, 
which can be done once, and then to combine 
them as necessary to obtain the quantum circuits implementing the different functions. 
The elementary modules carry out certain basic tasks such as shifting the bits of a number held in a 
quantum register, or counting bits, or computing 
expressions involving addition and/or multiplication of the inputs. The benefit of using only addition 
and multiplication is
that in fixed precision arithmetic the format of the input specifies exactly the format of the output, i.e., 
location of the decimal point in the result.
There exist numerous quantum circuits in the literature for addition and multiplication; 
see e.g. 
\cite{ SBN08, BASP96, ChuangCircuits,  cuccaro2004new, draper2000addition, DKRS06, rieffel2011quantum, TK05, Takahashi, VBE96}.

There are three advantages to this approach. The first is that one can derive error estimates 
by treating the elementary modules 
as black boxes and considering only the truncation error in the output of each. The second is that 
it is easy to obtain total resource estimates 
by adding the resources used by the individual modules. The third advantage is that the modular design 
allows one to modify or improve the implementation of the individual elementary modules in a transparent way.

We used this approach for Hamiltonian simulation and other numerical tasks in our paper \cite{Poisson} 
that deals with a quantum algorithm and circuit design for solving the Poisson equation. 
Individual elementary modules were combined to derive quantum circuits for Newton iteration and
to compute approximately the reciprocal of a number, and trigonometric and inverse trigonometric functions. 
We also provided performance guarantees in terms of cost and accuracy for both the individual elementary 
modules and the functions resulting by combining the modules. 
We remark that a recent paper \cite{wiebe2014quantum} also deals with quantum algorithms  for numerical analysis.

In this paper we continue this line of work of \cite{Poisson} by deriving quantum circuits which, given a number $w$ 
(represented using a finite number of bits), compute the functions $w^{1/2^i}$ for $i=1,\dots,k$, $\ln(w)$ 
(and thereby the logarithm in different bases), and $w^f$ with $f\in [0,1)$. 
For each circuit we provide cost and worst-case 
error estimates. We also illustrate the accuracy of our algorithms through a number of examples comparing 
their error with that of widely used numerical software such as Matlab. In summary, our tests show 
that using a moderate amount of resources, our algorithms compute the values of the functions 
matching the corresponding values obtained using scientific computing software (using floating point arithmetic) 
with 12 to 16 decimal digits of accuracy. 


We remark that our algorithms have applications beyond quantum computation. For instance, they can be used in 
signal processing and easily realized on an FPGA (Field Programmable Gate Array), providing superior performance with low cost. 
%
Moreover, there is resurgent 
interest in fixed-precision algorithms 
for low power/price applications such as mobile or embedded systems,
where hardware support for floating-point operations is often lacking
~\cite {bocchieri2008fixed}. 
 


We now summarize the contents of this paper. In Section \ref{sec:Algorithms}, we discuss the individual algorithms,
 providing block diagrams of the corresponding quantum circuits and pseudocode, and state their performance characteristics. In Section 
\ref{sec:NumericalResults}, we provide some numerical results illustrating the performance of our algorithms and 
comparing it to that of widely used numerical software. 
 In Section \ref{sec:implementation}, we give some remarks on the implementation of our algorithms. 
In Section \ref{sec:Discussion}, we summarize our results. 
Finally, a number of theorems about the worst-case 
error of our algorithms are given in the Appendix.


\section{Algorithms}  \label{sec:Algorithms}

We derive quantum algorithms and circuits computing
approximately $w^{1/2^i}$, $i=1,\dots,k$, $\ln(w)$ and $w^{f}$, $f\in [0,1)$, for a given input $w$.
We provide pseudocode and show how the algorithms are obtained by combining  elementary quantum circuit modules. We provide error and cost estimates.

The input of the algorithms is a fixed precision binary number. It is held in an $n$ qubit quantum register 
whose state is denoted $\ket w$ as shown in Fig. \ref{fig:InputRegister}.
The $m$ left most qubits are used to represent the integer part of the number and the remaining $n-m$  qubits 
represent its fractional part. 

\begin{figure}[H]

\centerline{
$\ket{w}=\underbrace{\ket{w^{(m-1)}} \otimes \ket{w^{(m-2)}} \otimes \cdots \otimes \ket{w^{(0)}}}_{{\rm integer\  part}} \otimes 
\underbrace{\ket{w^{(-1)}} \otimes \cdots \otimes \ket{w^{(m-n)}}}_{{\rm fractional\ part}},$
}
\caption{$n$ qubit fixed precision representation of a number $w\geq 0$ on a quantum register}
\label{fig:InputRegister}
\end{figure}

Thus 
$\ket{w} =\ket{w^{(m-1)}}\otimes \ket {w^{(m-2)}}\otimes \cdots\otimes \ket{w^{(0)}}\otimes\ket{w^{(-1)}}\otimes\cdots\otimes \ket{w^{(m-n)}}$,
where $w^{(j)}\in \{0,1\}$, $j = m-n, m-n+1, \ldots,0,\dots, m-1$ and $w = \sum_{j = m-n}^{m-1} w^{(j)}2^j$. 
Since less than $n$ bits may suffice for the representation of the input,   
a number of leftmost qubits in the register may be set to $\ket{0}$.
In general, we should have included a leading qubit to hold the sign of $w$, but since for the functions under consideration in this paper $w$ is a non-negative number we have
omitted the sign
qubit for simplicity.

Our algorithms use elementary modules that perform certain basic calculations. 
Some are used to shift the contents of registers, others are used as counters determining the position of the most significant bit of a 
number. 
An important elementary module computes expressions of the form $xy+z$ exactly in fixed precision arithmetic.

Following our convention concerning the fixed precision representation of numbers as we introduced it  in Fig. \ref{fig:InputRegister}, let  
$x$,$y$, and $z$ be represented using $n_1$-bits, of which $m_1$ bits  are used to represent the integer part. (It is not necessary to use the same number of bits to represent all
three numbers and this might useful in cases where we know that their magnitudes are significantly different.)
The expression $xy+z$ can be computed exactly as long as we allocate $2n_1 +1$  bits to hold the result. 
In this case, the rightmost $2(n_1-m_1)$ bits hold the fractional part of the result. Such computations can be 
implemented reversibly.
There are numerous quantum circuit designs in the literature implementing addition and multiplication 
\cite{VBE96, BASP96, draper2000addition, cuccaro2004new, TK05, van2005fast, DKRS06, Takahashi, portugal2006reversible, SBN08, takahashi2008fast, rieffel2011quantum, saeedi2013synthesis, kepleyquantum}. 
Therefore,  we can use them to design a quantum circuit implementing $xy+z$. In fact, we can design a quantum 
circuit template for implementing such expressions and use it to derive the actual quantum circuit for any $n_1$, $m_1$ and
values of the $x$, $y$, $z$  represented with fixed-precision.
We use Fig. \ref{fig:basicModule} below to represent such a quantum circuit.

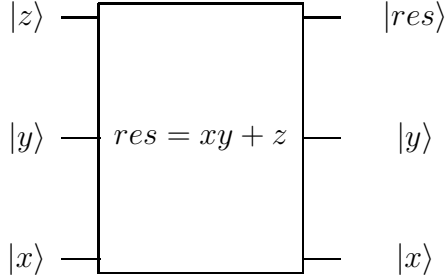
\begin{figure}[H]
\centerline{
\Qcircuit @C = 1.2 em @R = 3 em{
\lstick{\ket{z}}    &     \multigate{2}{res=xy+z}     & \qw      & &    \ket{res} \\
\lstick{\ket{y}}    &      \ghost{res=ax+b}       &      \qw     &&     \ket{y}  \\
\lstick{\ket{x}}    &      \ghost{res=ax+b}       &      \qw     &  &   \ket{x}  \\
}
}
\caption{Elementary module using fixed precision arithmetic to implement exactly $res\leftarrow xy+z$ 
for $x$, $y$, and $z$. Note that register sizes, ancilla registers, and their values are not indicated.} 
\label{fig:basicModule}
\end{figure}

Note that Fig. \ref{fig:basicModule} is an abstraction of an elementary module computing $res\leftarrow xy+z$. It is not meant to reveal or imply any of the implementation decisions including ancilla registers, saved values, and other details used for addition and multiplication. 


Any desired number $b$ of significant digits after the decimal point in the result $\ket{res}$ can be selected and passed 
on to the next stage of the computation. This corresponds to a truncation of the result to the desired accuracy.

Recall that our algorithms take as input $w$ represented by a quantum state $\ket{w}$ as shown in Fig. \ref{fig:InputRegister}. We remark that particularly for the algorithms in this
paper the integer parts of 
the inputs and outputs of the instances of the quantum circuit of  Fig. \ref{fig:basicModule} can be represented exactly using an equal number of qubits
to that used  for $\ket{w}$, i.e. $m$ qubits. 

In an earlier paper \cite{Poisson} we have cascaded such elementary modules to derive the quantum algorithm, INV, 
and the corresponding circuit computing approximately the reciprocal $1/w$ of a number $w>1$. 
The algorithm INV is based on Newton iteration, where each of its steps is implemented using the elementary module
of the form shown in Fig. \ref{fig:basicModule}. 

The algorithms of this paper compute approximations of functions which we list below and depend, to a certain 
extent, on the algorithm INV.
For this we will review INV in the next subsection. Moreover our algorithms depend upon each other.
The first algorithm we derive is SQRT computing the square root $\sqrt{w}$. This forms the basis of the algorithm 
Powerof2Roots that computes $w^{1/2^i}$, $i=1,\dots,k$. 
Powerof2Roots is used in LN, the algorithm computing $\ln(w)$. 
Without loss of generality and for brevity we only deal with the case $w>1$ in the functions computing the roots 
and the logarithm. Indeed, if $0<w<1$ one can suitably shift it to the left $\nu$ 
times to become 
$2^\nu w>1$. After obtaining the roots or the logarithm of the shifted number $2^\nu w$  
the final result for $w$ can be obtained in a straightforward way, either by shifting it to
the right in the case of the roots, or by subtracting $\nu \ln(2)$ in the case of the logarithm.

Finally we derive an algorithm computing $w^f$, $f\in [0,1)$. 
For this we distinguish two cases $w\ge 1$ and $0<w< 1$. We deal with each case separately and derive 
two algorithms FractionalPower and FractionalPower2, respectively. We remark that
when $0<w< 1$ we also  shift to the left to obtain a number greater than one, just like before. However, 
{\it undoing} the shift to get the final result is a bit complicated. For this reason we
provide all the details in FractionalPower2.
(We also note that computing the roots and the powers of $w\in\{ 0,1\}$ is
trivial and do not spend time on it since it can be accomplished with an elementary quantum circuit 
that determines if the input is zero or one. 
It is equally straightforward to compute the logarithm when $w=1$.)

In the following subsections we discuss each of our algorithms providing the main idea leading to it, its 
details along with pseudocode and quantum circuits. We have obtained theorems establishing the performance 
of our algorithms in terms of their accuracy in relation to the number of required qubits. 
We have placed these theorems in the Appendix so the reader can focus on the algorithms without having to 
consider more technical issues at the same time. At the end we provide a number of simulation test results 
illustrating the excellent performance of our algorithms and comparing them to the corresponding algorithms 
using floating point arithmetic in Matlab. Table \ref{tab:SummaryOfResults1} summarizes the algorithms in this paper,
their parameters and their error bounds.


\begin{sidewaystable}[h]
\hskip-0.5cm 
\scalebox{1}{
\footnotesize
\begin{tabular}{ | l || l | l | l | l |}
\hline
Function             	    & Requirements                                  &Algorithm and Parameters                                  	 			& Idea 									& Error \\
\hline \hline
  $1/w$               	    & $w\geq 1 $		 	   	& ${\rm INV}(w,n,m,b)$ 			 	 			& 1. Newton iteration. Calculate 						& $\leq (2 + \log_2 b)/2^b$ \\
		      	    &						& $b \ge m$			 	  	 			& $x_i=- w\hat{x}^2_{i-1} +2\hat{x}_{i-1}$, 				&  \\
		    	    &						& $s = \lceil \log_2 b \rceil	$ 	 	 			& for $i =1,2, \ldots, s$  							& \\  
			    &						&									& 2. Return $\hat x_s$							& \\
\hline
  $\sqrt w$          	    & $w\geq 1 $		 	   	&${\rm SQRT}(w,n,m,b)$		      	 			&  1. Call INV$(w,n,m,b)$	 						& $\leq \left(\frac{3}{4}\right)^{b-2m} (2+b+\log_2b)$ \\
		      	    &						&$b \geq \max\{2m,4\}$ 	 					&  2. Newton iteration. Calculate &  \\
		      	    &						&$s = \lceil \log_2 b \rceil	$	 	 			&  $y_j = \frac{1}{2} (3\hat y_{j-1} - \hat x_s \hat y^3_{j-1})$   & \\
		        	    &						&					 	 			&  for $j = 1,2, \ldots, s$							& \\
			    &						&									&  3. Return $\hat y_s$							& \\
\hline
  $w^{1/2^i}$    	    & $w\geq 1 $		 	   	& ${\rm Powerof2Roots}(  w,k,  n,m,b)$    			&  1. $z_1 = \textrm{SQRT}(w,n,m,b)$ 				& $\leq 2\left(\frac{3}{4}\right)^{b-2m} (2+b+\log_2b)$ \\
$i = 1,2, \ldots k$ 	    &						&  $b \geq \max\{2m,4\}$	 			&  2. Call SQRT() repeatedly, i.e., 						&  \\
		      	    &						&$s = \lceil \log_2 b \rceil	$ for	 	 			&   $z_i = \textrm{SQRT}(z_{i-1},m+b,m,b)$,				&\\
		      	    &						& each call of SQRT()		 	 			&  for $i = 1,2, \ldots k$							& \\
			    &						&									& 3. Return $\{z_i\}$							& \\
\hline
  $\ln w$  	   	    & $w\geq 1 $		 	   	&${\rm LN}(	w,n,m,\ell)$	         		 			&  1. $w_p = w\cdot 2^{1-p}$ 						& $\leq \left( \frac{3}{4}\right)^{5\ell/2} \left(m + \frac{32}{9} +2 \left(\frac{32}{9} + \frac{n}{\ln 2}\right)^3 \right)$ \\
		 	    &						& $b = \max\{5\ell,25\}$			 			&  2. Call 								 	&  \\
		      	    &						&  $\ell \geq \lceil \log_2 8n \rceil$ 		  		&  $\textrm{PowerOf2Roots}(w_p,\ell,n,1,b)$ 	 		& \\
		      	    &						&$r \approx \ln 2$, with $b$ bits accuracy 			&   and let $\hat t_p$ be the $\frac{1}{2^\ell}$th root of $w_p$		&\\
		      	    &						& $p = \lceil\log_2 w\rceil$			 			&  3. Approx. $\ln \hat t_p$ with $\hat y_p$, i.e., 			& \\
		      	    &						&				 					&  the first two terms of 					  		& \\
		      	    &						&				 	 				&  its power series expansion. 				 		& \\
		      	    &						&					 			    	&  4. Return $z_p = 2^\ell \hat y_p + (p-1) r$ 	 		& \\
\hline
  $w^f$     		    & $w\geq 1 $		 	   	& ${\rm FractionalPower}(w,f,n,m,n_f,\ell)$		 	&  1. Calculate 								& $\leq \left( \frac{1}{2}\right)^{\ell -1}$ \\
			    & $f \in [0,1]$				& $b = \max\{n,n_f, \lceil 5(\ell, 2m,\ln n_f) \rceil\}$   	& $\hat w_i = \textrm{PowerOf2Roots}(w,n_f,n,m,b)$		& \\
   			    &$f$ is $n_f$ bits long			& $\ell \in \nat$ determines the error						&  for $i = 1,2, \ldots, n_f $.					 	&  \\
		      	    &						& 		&  2. Return $\Pi_{i\in \mathcal P} \hat w_i$ 				& \\
		      	    &						& 					 	 			&   where $\mathcal P = \{1\leq i \leq n_f: f_i = 1\}$			&\\
\hline
$w^f$		    & $0 \leq w < 1$				&${\rm FractionalPower2}(w,f,n,m,n_f,\ell)$		 	& 1. Compute $w^\prime \geq 1$ 				& $\leq \frac{1}{2^{\ell -3}}$ \\
			    & $f \in [0,1]$				&$b = \max\{n,n_f, \lceil 2\ell + 6m + 2\ln n_f \rceil, 40\}$	& by left shifting $w$								& \\
			    & $f$ is $n_f$ bits			& $\ell \in \nat$ determines the error			& 2. Call 									& \\
			    &						&					& $\textrm{FractionalPower}(w^\prime,f,n,m,n_f,\ell)$		& \\
			    &						&									& 3. \textit{Undo} the initial shift	 of $w$				& \\
                &						&									& using right shifts, FractionalPower,  								& \\
                &                   &                                   & and INV, and return                                                   &\\
\hline
\end{tabular}
}
\caption{Summary of Algorithms. All parameters are polynomial in $n$ and $b$ and so is the cost of all algorithms.}
\label{tab:SummaryOfResults1}
\end{sidewaystable}

\clearpage

\subsection{Reciprocal}

An algorithm computing the reciprocal of a number is shown in Section 4.2 and Theorem B.1 of \cite{Poisson}. The algorithm is based on Newton iteration. Below we provide a slight modification of that algorithm. 

Recall that $w$ is represented with $n$ bits of which the first $m$ correspond to its
integer part.
Algorithm \ref{alg:inv} INV below approximates the reciprocal of a number $w\geq 1$, applying Newton iteration to the function $f(x) = \frac{1}{w} -x$.
This yields a sequence of numbers $x_i$  according to the iteration
\begin{equation}
\label{eq:NI.INV}
x_{i} =g_1(x_{i-1}) := -w\hat x_{i-1}^2 +2\hat x_{i-1},
\end{equation}
$i = 1,2,\ldots ,s$. 
Observe that the expression above can be computed using two applications of a quantum circuit of the type shown in Fig. \ref{fig:basicModule}. 
The initial approximation $\hat x_0 =2^{-p}$, with 
$2^p > w \geq 2^{p-1}$. The number of iterations $s$ is specified in Algorithm \ref{alg:inv} INV. 
Note that $x_0<  1/w$ and the iteration converges to $1/w$ from below, i.e., $\hat x_i \le 1/w$.
Within each iterative step the arithmetic operations are performed in fixed precision and $x_i$ is computed exactly. We truncate $x_i$ to $b\ge n$ bits after the decimal point to obtain $\hat x_i$ and pass it on as input to the next iterative step. Each iterative step is implemented using an elementary module of the form given in Fig. \ref{fig:basicModule} that requires only addition and multiplication. 
 The final approximation error is 
$$
| \hat{x}_s - \frac{1}{w}| \leq \frac{2+ \log_2 b}{2^b}.
$$
For the derivation of this error bound see Corollary \ref{cor0} in the Appendix. 
We remark that although the iteration function (\ref{eq:NI.INV}) is well known in the literature \cite[Ex. 5-1]{traub1982iterative}, 
an important property of Algorithm \ref{alg:inv} INV is the fixed-precision implementation of Newton iteration for a specific initial approximation and a prescribed number of steps, so that the error bound of Corollary \ref{cor0} is satisfied. 

Turning to the cost, we iterate $O(\log_2 b)$ times  and as we mentioned each $x_i$ is computed exactly. Therefore, each iterative step requires $O(n+b)$ qubits and a  number of
quantum operations for implementing addition and  multiplication that is a low degree 
polynomial in $n+b$. The cost to obtain the initial approximation is 
relatively minor 
when compared to the overall cost of the multiplications and additions used in the algorithm. 

\begin{algorithm}
\caption{INV($w$, $n$, $m$, $b$)}  
\label{alg:inv}
\begin{algorithmic}[1]
\REQUIRE $w\geq 1$, held in an $n$ qubit register, of which the first $m$ qubits are reserved for its integer part.
\REQUIRE $b \in \nat$, $b \geq m$. We perform fixed precision arithmetic and results are truncated to $b$ bits of accuracy after the decimal point. 
\IF{$w=1$} 
\RETURN 1
\ENDIF
\STATE $\hat{x}_0 \leftarrow 2^{-p}$, where $p\in\nat$ such that $2^p > w \geq 2^{p-1}$
\STATE $s \leftarrow \lceil \log_2 b \rceil$
\FOR{$i=1$ to $s$}
\STATE $x_i \leftarrow -w\hat{x}_{i-1}^2 + 2\hat{x}_{i-1} $
\STATE $\hat{x}_i \leftarrow x_i$ truncated to $b$ bits after the decimal point  
\ENDFOR
\RETURN $\hat{x}_{s}$
\end{algorithmic}
\end{algorithm}


\subsection{Square Root}

Computing approximately the square root $\sqrt{w}$, $w\ge 1$, can also be approached as a zero finding problem and one can 
apply to it Newton iteration.
However, the selection of the function whose zero is $\sqrt{w}$ has to be done carefully so that the resulting iterative steps
are easy to implement and analyze in terms of error and cost. Not all choices are equally good. 
For example, $f(x)= x^2-w$,
although well known in the literature \cite[Ex. 5-1]{traub1982iterative},
is not a particularly good choice. The resulting iteration is $x_{i+1} = x_i - (x_i^2 - w)/ (2x_i)$, $i=0,1,\dots$, which requires a division using an algorithm
such as  Algorithm~\ref{alg:inv}~INV at each iterative step. The division also requires circuits keeping 
track of the position of the decimal point in its result, because its location is not fixed but depends on 
the values $w$ and $x_i$. Since the result of a division may not be represented exactly using an a priori chosen 
fixed number of bits,
approximations are needed within each iterative step. This introduces error and overly complicates the analysis of the overall algorithm 
approximating $\sqrt{w}$ compared to an algorithm requiring only 
multiplication and addition in each iterative step. All these complications are avoided in our algorithm.

\begin{figure}[H]

$$\qquad\qquad \begin{array}{c}
\Qcircuit @C = 1 em @R = 2 em{
\lstick{\ket{\hat y_0}}  & \qw & \qw & \qw & \qw & \qw & \qw & \qw & \qw &\qw & \qw & \qw & \qw & \qw & \multigate{1}{g_2} & \qw & & \ket{\hat y_1} & & \dots & & \multigate{1}{g_2} & \qw & & \ket{\hat y_{s_2}}  & & & & 
\\
\lstick{\ket{\hat x_0}} & \multigate{1}{g_1} & \qw & & \ket{\hat x_1} & & \dots & & \multigate{1}{g_1} & \qw & & \ket{\hat x_{s_1}} & & & \ghost{g_2} & \qw & & \ket{\hat x_{s_1}} & & \dots & & \ghost{g_2} & \qw & & \ket{\hat x_{s_1}} & & & & 
\\
\lstick{\ket{w}} & \ghost{g_1} & \qw & & \ket{w} & & \dots & & \ghost{g_1} & \qw & & \ket{w}  & & & \qw & \qw & \qw & \qw & \qw& \qw &\qw & \qw & \qw & & \ket{w} \\
& & & & & \raisebox{-.5 em}{$s_1\ \text{iterations}$} & & & & & & & & & & & & \raisebox{-.5 em}{$s_2\ \text{iterations}$}
\gategroup{2}{2}{3}{9}{2.1 em}{_\}} \gategroup{1}{15}{3}{22}{2.1 em}{_\}}
}
\end{array}$$

\caption{Block diagram of the overall circuit computing $\sqrt{w}$. Two stages of Newton's iteration using the functions $g_1$ and $g_2$ are applied $s_1$ and $s_2$ times respectively. The first stage outputs $\hat x_{s_1} \approx \frac{1}{w}$, which is then used by the second stage to compute $\hat y_{s_2}\approx \tfrac 1{\sqrt{\hat x_{s_1}}} \approx \sqrt{w}$. } 

\label{fig-SQRToverall}
\end{figure}
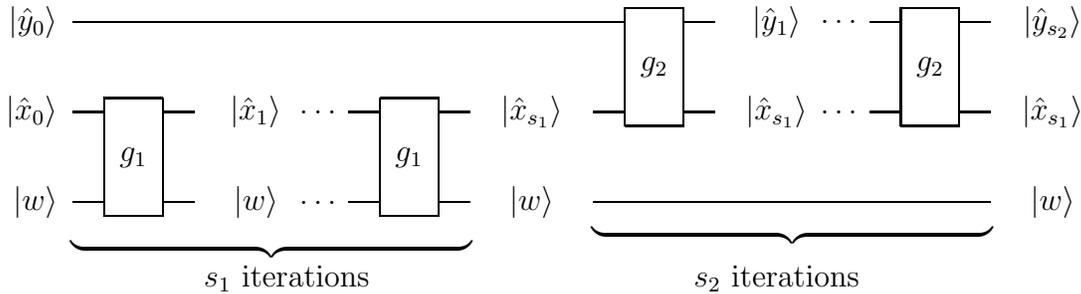

Each of the steps of the algorithm we present below can be implemented using only multiplication and addition in 
fixed precision arithmetic as in Fig. \ref{fig:basicModule}.
This is accomplished by first approximating $1/w$ (applying iteration $g_1$ of equation (\ref{eq:NI.INV})) using steps essentially identical to those in Algorithm \ref{alg:inv} INV. 
Then we apply Newton iteration again to a suitably chosen function to approximate its zero and this yields the approximation of $\sqrt{w}$. 
In particular, in Algorithm \ref{alg:inv} INV we set $s=s_1=s_2$ and  first we approximate $1/w$ by $\hat x_s$ which has a fixed precision representation with $b$ bits after the decimal point
(steps 4--10 of Algorithm \ref{alg:inv} INV). 
Then applying the Newton method to $f(y) = \tfrac 1{y^2} - \tfrac 1{w}$ we obtain the
iteration 
\begin{equation}
\label{eq:NI.SQRT}
y_j = g_2(y_{j-1}) := \frac{1}{2} (3\hat y_{j-1} - \hat x_s \hat y^3_{j-1}),
\end{equation}
where $j = 1,2,\dots,s$. Each $y_i$ is computed exactly and then truncated  to $b$ bits after the decimal point to obtain
$\hat y_i$, which is passed on as input to the next iterative step. See Algorithm \ref{alg:sqrt} SQRT for the values of the parameters and
other details. Steps 4 and 10 of the algorithm compute initial approximations for the two Newton iterations, the first computing 
the reciprocal and the second shown in (\ref{eq:NI.SQRT}).
They are implemented 
using the quantum circuits of  Fig. \ref{fig-InitState1} and Fig. \ref{fig-InitState2}, respectively.
A block diagram of the overall circuit of the algorithm computing $\sqrt{w}$ is shown in Fig. \ref{fig-SQRToverall}.

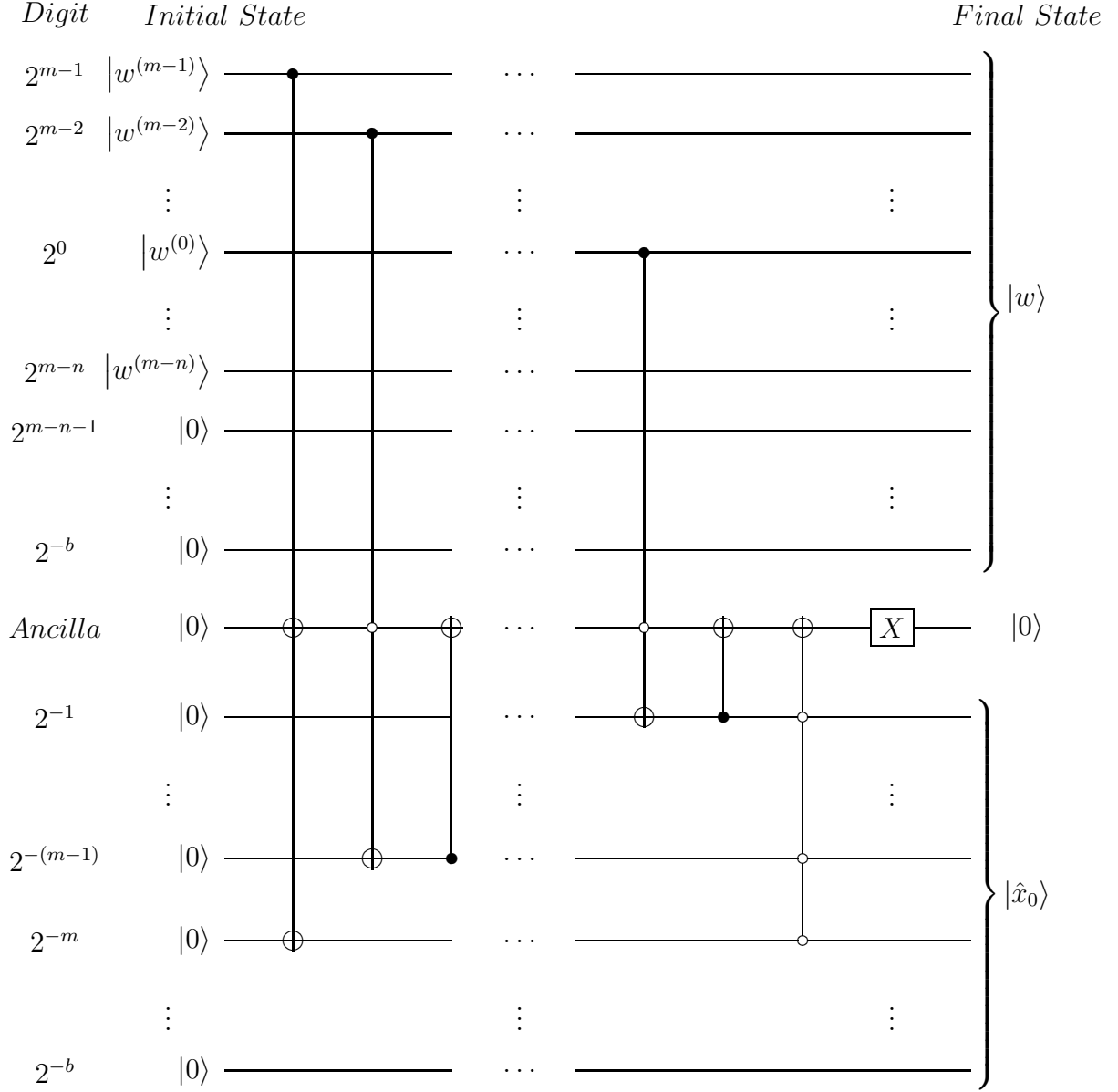
\begin{figure}[H]
\centerline{
\Qcircuit @C = 2.0 em @R = 2.1 em{
Digit & & & Initial\ State & & & & & & & & & & & Final\ State \\ 
2^{m-1} & & & \lstick{\ket{w^{(m-1)}}} & \ctrl{13} & \qw & \qw & \dots & & \qw & \qw & \qw & \qw & \qw \\
2^{m-2} & & & \lstick{\ket{w^{(m-2)}}} & \qw & \ctrl{8} & \qw & \dots & & \qw & \qw & \qw & \qw & \qw \\
& & \vdots & & & & & \vdots & & & & & \vdots  \\
2^0 & & & \lstick{\ket{w^{(0)}}} & \qw & \qw & \qw & \dots & & \ctrl{6} & \qw & \qw & \qw & \qw & \raisebox{-4 em}{$\ket{w}$} \\
& & \vdots & & & & & \vdots & & & & & \vdots \\
2^{m-n} & & & \lstick{\ket{w^{(m-n)}}} & \qw & \qw & \qw & \dots & & \qw & \qw & \qw & \qw & \qw \\
2^{m-n-1} & & & \lstick{\ket{0}} & \qw & \qw & \qw & \dots & & \qw & \qw & \qw & \qw & \qw \\
& & \vdots & & & & & \vdots & & & & & \vdots \\
2^{-b} & & & \lstick{\ket{0}} & \qw & \qw & \qw & \dots & & \qw & \qw & \qw & \qw & \qw \\
Ancilla & & & \lstick{\ket{0}} & \targ & \ctrlo{3} & \targ & \dots & & \ctrlo{1} & \targ & \targ & \gate{X} & \qw & \ket{0} \\
2^{-1} & & & \lstick{\ket{0}} & \qw  & \qw & \qw & \dots & & \targ & \ctrl{-1} & \ctrlo{-1} & \qw & \qw \\
& & \vdots & & & & & \vdots & & & & & \vdots \\
2^{-(m-1)} & & & \lstick{\ket{0}} & \qw & \targ & \ctrl{-3} & \dots & & \qw & \qw & \ctrlo{-2} & \qw & \qw \\
2^{-m} & & & \lstick{\ket{0}} & \targ & \qw & \qw & \dots & & \qw & \qw & \ctrlo{-1} & \qw & \qw & \raisebox{3.5 em}{$\ket{\hat x_0}$} \\
& & \vdots & & & & & \vdots & & & & & \vdots \\
2^{-b} & & & \lstick{\ket{0}} & \qw & \qw & \qw & \dots & & \qw & \qw  & \qw & \qw & \qw
\gategroup{2}{14}{10}{14}{1.5 em}{\}} \gategroup{12}{14}{17}{14}{1.2 em}{\}}
}
}
\caption{A quantum circuit computing the initial state $\ket{\hat x_0} = \ket{2^{-p}}$, for $\ket{w}$ given by $n$ bits of which $m$ are for its integer part, where $p\in \nat$ and $2^p > w \geq 2^{p-1}$. Here $w^{(m-1)},\dots, w^{(0)}$ label the $m$ integral bits of $w$ and similarly $w^{(-1)},\dots, w^{(m-n)}$ label the fractional bits. We have taken $b\ge n-m$.
This circuit is used in step 4 of Algorithm 1 SQRT.} 

\label{fig-InitState1}
\end{figure}

For $w\ge 1$ represented with $n$ bits of which the first $m$ correspond its integer part, Algorithm  \ref{alg:sqrt} SQRT 
computes $\sqrt{w}$ by $\hat y_s$ 
in fixed precision with $b\ge \max\{ 2m, 4\}$ bits after its decimal point and we have
$$
|\hat{y}_{s}-\sqrt{w}|  
\leq \left( \frac{3}{4} \right)^{b-2m}  \left( 2+ b + \log_2 b  \right).
$$
The proof can be found in Theorem \ref{thm1} in the Appendix.

\begin{figure}[H]
\centerline{
\Qcircuit @C = 1.2 em @R = 2 em{
Digit & & & & Initial\ State & & & & & & & & & & & & & & & & & & Final\ State \\
2^{-1} & & & & & & \lstick{\ket{\hat{x}^{(-1)}_{s}}} & \ctrl{13} & \ctrlo{1} & \qw & \qw & \qw & \qw & \qw & \dots & & \qw & \qw & \qw & \qw & \qw & \qw \\
2^{-2} & & & & & &  \lstick{\ket{\hat{x}^{(-2)}_{s}}} & \qw & \ctrl{12} & \qw & \qw & \qw & \qw & \qw & \dots & & \qw &  \qw & \qw & \qw & \qw & \qw \\
2^{-3} & & & & & &  \lstick{\ket{\hat{x}^{(-3)}_{s}}} & \qw & \qw & \qw & \ctrl{6} & \ctrlo{1} & \qw & \qw & \dots & & \qw & \qw & \qw & \qw & \qw & \qw \\
2^{-4} & & & & & & \lstick{\ket{\hat{x}^{(-4)}_{s}}} & \qw & \qw & \qw & \qw & \ctrl{5} & \qw &\qw & \dots & & \qw & \qw & \qw & \qw & \qw & \qw \\
2^{-5} & & & & & & \lstick{\ket{\hat{x}^{(-5)}_{s}}} & \qw & \qw & \qw & \qw & \qw & \qw & \ctrl{4} & \dots & & \qw & \qw & \qw & \qw & \qw & \qw & & \raisebox{3 em}{$\ket{\hat{x}_{s}}$} \\
& & & & & \vdots & & & & & & & & & \vdots & & & & & & \vdots \\
2^{-b+1} & & & & & & \lstick{\ket{\hat{x}^{(-b+1)}_{s}}} & \qw & \qw & \qw & \qw & \qw & \qw & \qw & \dots & & \ctrl{2} & \ctrlo{1} & \qw & \qw & \qw & \qw \\
2^{-b} & & & & & & \lstick{\ket{\hat{x}^{(-b)}_{s}}} & \qw & \qw & \qw & \qw & \qw & \qw & \qw & \dots & & \qw & \ctrl{1} & \qw & \qw & \qw & \qw \\
Ancilla & & & & & & \lstick{\ket{0}} &  \qw & \qw & \targ & \ctrlo{4} & \ctrlo{4} & \targ & \ctrlo{3} & \dots & & \ctrlo{1} & \ctrlo{1}  & \targ & \targ & \gate{X} & \qw & & \ket{0} \\
2^{\frac{b}{2} - 1} & & & & & & \lstick{\ket{0}} & \qw & \qw & \qw & \qw & \qw & \qw & \qw & \dots & & \targ & \targ & \ctrl{-1} & \ctrlo{-1} & \qw & \qw \\
& & & & & \vdots & & & & & & & & & \vdots & & & & & & \vdots \\
2^{2} & & & & & & \lstick{\ket{0}} & \qw & \qw & \qw &\qw & \qw & \qw & \targ & \dots & & \qw & \qw & \qw & \ctrlo{-2} & \qw & \qw \\
2^{1} & & & & & & \lstick{\ket{0}} & \qw & \qw & \qw & \targ & \targ & \ctrl{-4} & \qw & \dots & & \qw & \qw & \qw & \ctrlo{-1} & \qw & \qw \\
2^{0} & & & & & & \lstick{\ket{0}} & \targ & \targ & \ctrl{-5} & \qw & \qw & \qw & \qw & \dots & & \qw & \qw & \qw & \ctrlo{-1} & \qw & \qw & & \raisebox{4 em}{$\ket{\hat y_0}$} \\
2^{-1} & & & & & & \lstick{\ket{0}} & \qw & \qw & \qw & \qw & \qw & \qw & \qw & \dots && \qw & \qw & \qw  & \qw & \qw & \qw \\
& & & & & \vdots & & & & & & & & &  \vdots & & & & & & \vdots \\
2^{-b} & & & & & & \lstick{\ket{0}} & \qw & \qw & \qw & \qw & \qw & \qw & \qw & \dots & & \qw & \qw & \qw & \qw & \qw & \qw
\gategroup{2}{22}{9}{22}{1.5 em}{\}} \gategroup{11}{22}{18}{22}{1.5 em}{\}}
}
}
\caption{A quantum circuit computing the state $\ket{\hat y_0} = \ket{2^{\lfloor \frac{q-1}{2} \rfloor}}$, for $0 < \hat{x}_s < 1 $ given by $b$ bits, where $q\in \nat$ and $2^{1-q} > \hat{x}_s \geq 2^{-q}$. Here again $\hat{x}_s^{(-1)},\dots, \hat{x}_s^{(-b)}$  are the $b$ fractional bits. This circuit is used in step 10 of Algorithm 1 SQRT. 
The case for even $b$ is shown here and for odd $b$ a similar circuit follows.}
\label{fig-InitState2}
\end{figure}
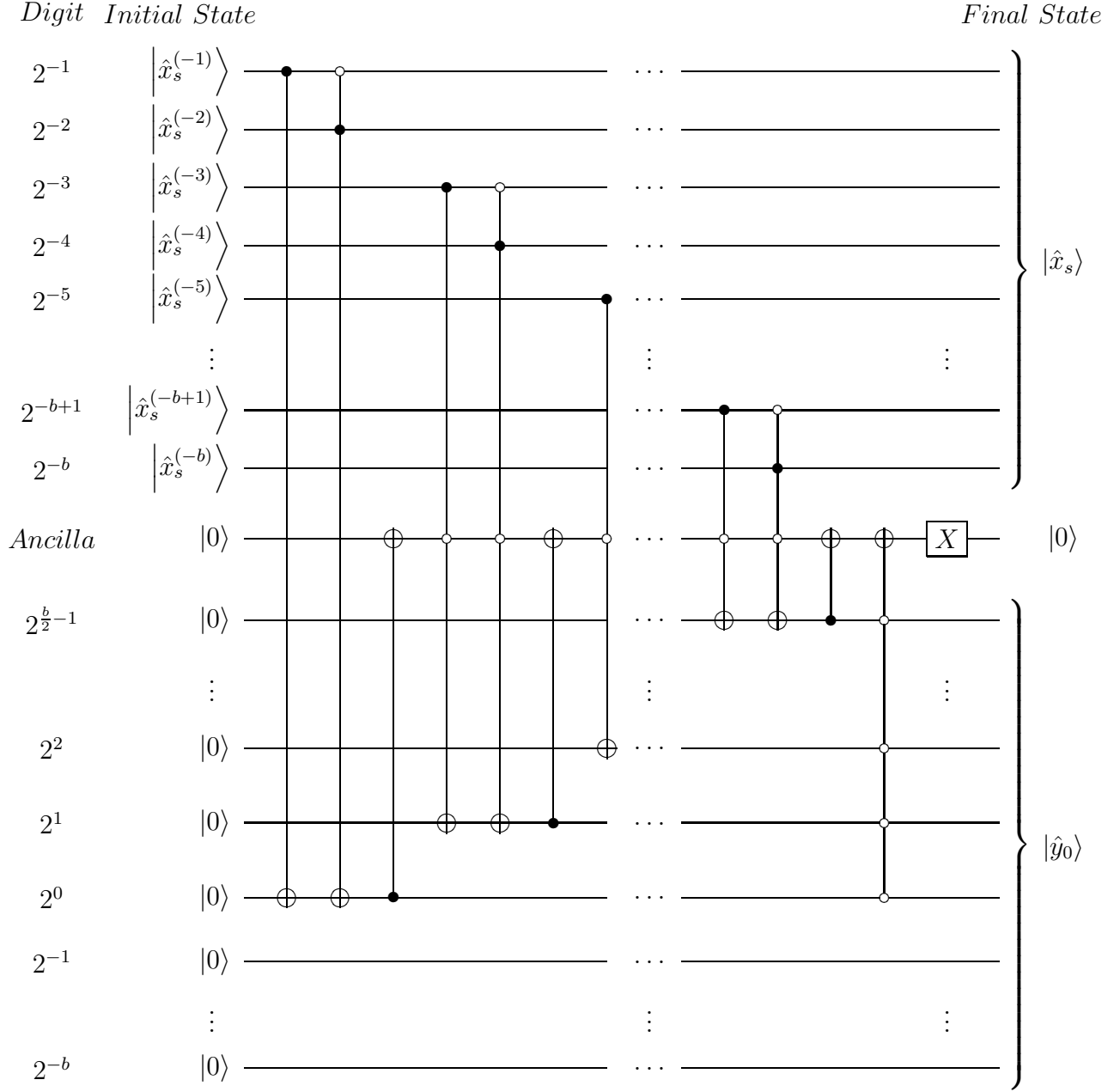 

Turning to the cost of the quantum circuit in Fig. \ref{fig-SQRToverall} implementing Algorithm \ref{alg:sqrt} SQRT, we set the number of iterative steps of each of the two iterations to be
$s=s_1=s_2=O(\log_2 b)$. 
Observe that each of the iterations $g_1$  and $g_2$ can be computed using at most three applications of a quantum circuit of the type shown in Fig. \ref{fig:basicModule}. 
Therefore, each iterative step requires $O(n+b)$ qubits and a  number of
quantum operations for implementing addition and  multiplication that is a low degree 
polynomial in $n+b$. 
Observe that the cost to obtain the initial approximations 
is again relatively minor compared to the overall cost of the additions and multiplications in the algorithm.

\begin{algorithm}
\caption{SQRT($w$, $n$, $m$, $b$)}
\label{alg:sqrt}
\begin{algorithmic}[1]
\REQUIRE $w\geq 1$, held in an $n$ qubit register, of which the first $m$ qubits are reserved for its integer part.
\REQUIRE $b\geq \max\{2m,4\}$. Results are truncated to $b$ bits of accuracy after the decimal point.
\IF{$w=1$} 
\RETURN 1
\ENDIF
\STATE $\hat{x}_0 \leftarrow 2^{-p}$, where $p\in\nat$ such that $2^p > w \geq 2^{p-1}$
\STATE $s \leftarrow \lceil \log_2 b \rceil$
\FOR{$i=1$ to $s$}
\STATE $x_i \leftarrow -w\hat{x}_{i-1}^2 + 2\hat{x}_{i-1} $
\STATE $\hat{x}_i \leftarrow x_i$ truncated to $b$ bits after the decimal point  
\ENDFOR
\STATE $\hat{y}_0 \leftarrow 2^{\lfloor (q-1)/2 \rfloor}$, where $q\in \nat$ such that $2^{1-q} > \hat{x}_{s}  \geq 2^{-q}$
\FOR{$j=1$ to $s$}
\STATE $y_j \leftarrow \frac12(3\hat{y}_{j-1} - \hat{x}_{s} \hat{y}^3_{j-1} )$
\STATE $\hat{y}_j \leftarrow y_j$ truncated to $b$ bits after the decimal point  
\ENDFOR
\RETURN $\hat{y}_{s}$
\end{algorithmic}
\end{algorithm}

\subsection{$2^k$-Root}

Obtaining the roots, $w^{1/2^i}$, $i=1,\dots,k$, $k\in\naturals$, is straightforward. It is accomplished by calling Algorithm \ref{alg:sqrt} SQRT iteratively $k$ times, since
$w^{1/2^i}= \sqrt{ w^{1/2^{i-1}}}$, $i=1,\dots,k$. In particular, 
Algorithm \ref{alg:2krt} PowerOf2Roots calculates approximations of $w^{1/2^i}$, for $i=1,2, \ldots, k$. 
The circuit implementing this algorithm consists of $k$ repetitions of the circuit in Fig. \ref{fig-SQRToverall} approximating the square root.
The results are truncated to $b\ge \max\{ 2m, 4\}$ bits after the decimal point before passed on to the next stage.
The algorithm produces $k$ numbers $\hat z_i$, $i=1,\dots,k$. We have
$$|\hat{z}_{i}-w^{1/{2^i}}| \leq 2  \left( \frac{3}{4} \right)^{b-2m}  \left( 2+ b +  \log_2 b  \right),$$ 
$i=1,\dots,k$. 
The proof can be found in Theorem \ref{thm3} in the Appendix.

\begin{algorithm}
\caption{PowerOf2Roots($w$, $k$, $n$, $m$, $b$)}
\label{alg:2krt}
\begin{algorithmic}[1]
\REQUIRE $w\geq 1$, held in an $n$ qubit register, of which the first $m$ qubits are reserved for its integer part.
\REQUIRE $k\geq 1$ an integer. The algorithm returns approximations of $w^\frac{1}{2^i}$, $i=1,\dots,k$.
\REQUIRE $b\geq \max\{2m,4\}$. Results are truncated to $b$ bits of accuracy after the decimal point.
\STATE $\hat z_1 \leftarrow$ SQRT($w$, $n$, $m$, $b$). Recall that SQRT returns a number with a fractional part $b$ bits long. The integer part of of $\hat z_1$ is represented by $m$ bits.
\FOR{$i=2$ to $k$}
\STATE $\hat z_i \leftarrow$ SQRT($\hat{z}_{i-1}$, $m+b$, $m$, $b$). Note that $\hat z_1$ and the $\hat z_i$ are held in registers of size $m+b$ bits of which the $b$ bits are for the fractional part.
\ENDFOR
\RETURN $\hat z_1$,$\hat z_2$,\dots,$\hat z_k$
\end{algorithmic}
\end{algorithm}

Algorithm \ref{alg:2krt} PowerOf2Roots uses $k$ calls to Algorithm \ref{alg:sqrt} SQRT. Hence it requires $k\log b\cdot O(n+b)$ qubits and $k\log b\cdot p(n+b)$ quantum operations, where $p$ is a low-degree polynomial depending on the specific implementation of the circuit of Fig. \ref{fig:basicModule}.

\subsection{Logarithm}
To the best of our knowledge, the method presented in this section is entirely new. Let us first introduce the idea leading to the algorithm approximating $\ln(w)$, $w> 1$; the case $w=1$ is trivial. First we shift $w$ to the left, if necessary, to obtain the number $2^{-\nu} w \in [1,2)$.
It suffices to approximate $\ln (2^{-\nu} w)$ since 
$\ln( w) = \ln (2^{-\nu}  w) +\nu \ln 2$ and we can precompute $\ln 2$ up to any desirable number of bits. 

We use the following observation. When one takes the $2^\ell$-root of  a number that belongs to the interval  $(1,2)$  the fractional part $\delta$ of the result is 
roughly speaking proportional to $2^{-\ell}$, i.e, its is quite small for relatively  large $\ell$. 
Therefore, for $1+\dd = [2^{-\nu}w]^{1/2^\ell}$ we use the power series expansion for the logarithm to approximate $\ln(1+\dd)$ by $\dd- \dd^2/2$, with any desired accuracy 
since $\delta$ can be
made arbitrarily small by appropriately selecting $\ell$. 
Then the approximation of the logarithm follows from 
\begin{equation}
\label{eq:undolog}
\ln(w)  \approx \nu\ln 2 + 2^\ell(\dd- \frac {\dd^2}2).
\end{equation}

In particular, Algorithm \ref{alg:ln} LN approximates $\ln(w)$ for $w\ge 1$ represented by $n$ bits of which the first $m$ are used for its integer part. 
(The trivial case $w=1$ is dealt with first.)
In step 7, $p-1$ is the value of $\nu$, i.e., $\nu=  p-1$, where  $2^p> w\ge 2^{p-1}$, $p\in\naturals$. 
We compute 
$w_p=2^{1-p}w \in [1,2)$ using a right shift of $w$. We explain how to implement the shift operation below.
Then, $\hat t_p$, an approximation of $w_p^{1/(2\ell)}$, is calculated using Algorithm \ref{alg:2krt} PowerOf2Roots, 
where $\ell$ is a parameter that determines the error. 
Note that $\hat t_p$ can be written as $1+\dd$, for $0<\dd< 1$. We provide precise bounds for $\dd$ in Theorem \ref{thm3} in the Appendix. 
Next, an approximation $\hat y_p$ of $\ln \hat t_p=\ln(1+\dd)$ is calculated, using the first two terms of the power series for the logarithm as we explained above. 
The whole procedure yields an estimate of $(\ln w_p)/2^\ell $. 
Finally the algorithm in steps 16 -- 20  uses equation (\ref{eq:undolog}), with $\nu=p-1$, to
derive $z_p+(p-1) r$ as an approximation to $\ln w$. 
All intermediate calculation results are truncated to $b$ bits after the decimal point. The value of $b$ is determined by the value of $\ell$ in step 1 of
the algorithm. Note that the precision of the algorithm grows with $\ell$, which is a user selected parameter that must satisfy $\ell\ge\lceil \log_2 8n\rceil$.

The block diagram of the overall circuit implementing Algorithm \ref{alg:ln} LN can be found in Fig. \ref{fig-LogOverall}. The first module is a right shift 
operation that calculates $w_p$. 
During a presentation of these results, we were asked about the implementation of the shift operation. There are a number of ways to implement it. 
One is to implement the shift by a multiplication of $w$ by a suitable negative power of two. 
This is convenient since we 
have elementary quantum circuits for multiplication.  In Fig.~\ref{fig-RShift} we show a circuit computing  the necessary power of two, which we denote by $x$ in that figure. In particular, 
for $w\ge 2$ we set $x=2^{1-p}$, where $p-1=\lfloor \log_2 w\rfloor \ge 1$.
For $1\le w <2$ we set $x=1$. 
Thus $m$ bits are needed for the representation of $x$,
with the first bit  $x^{(0)}$ denoting its integer part and all the remaining bits $x^{(-1)}, \dots,
x^{-(m-1)}$  denoting its fractional part.
We implement the shift of $w$ in terms of multiplication between $w$ and $x$,
i.e., $w_p=w\, x$. Since $w$ and $x$ are held in registers of known size and we use fixed precision representation we know a priori the position of the decimal point in their product $w_p$. Moreover, since $w_p\in [1,2-2^{1-n}]$ ($w$ is an $n$ bit number) we have that $n$ bits (qubits), of which $1$ is used for its integer part, suffice to hold $w_p$ exactly.

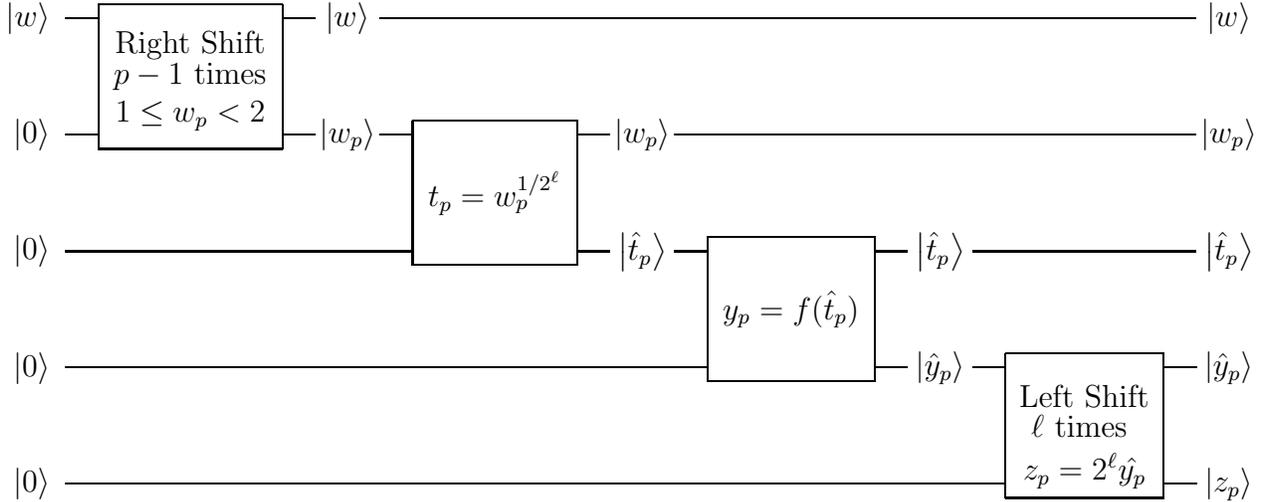
\begin{figure}[H]
\centerline{
\Qcircuit @C = 1.05 em @R = 2.85 em{
\lstick{\ket{w}} & \multigate{1}{p-1\ \text{times}} & \qw & \ket{w} & & \qw & \qw & \qw & \qw & \qw & \qw & \qw & \qw & \qw & \qw & \ket{w} \\
\lstick{\ket{0}} & \ghost{p-1\ \text{times}} & \qw & \ket{w_p} & & \multigate{1}{t_p=w_p^{1/2^\ell}} & \qw & \ket{w_p} & & \qw & \qw & \qw & \qw & \qw \raisebox{-17.5em}{\text{Left Shift}} & \qw & \ket{w_p} \\
\lstick{\ket{0}} & \qw \raisebox{9 em}{$1 \leq w_p<2$} & \qw & \qw & \qw & \ghost{t_p=w_p^{1/2^\ell}} & \qw & \ket{\hat{t}_p}  & & \multigate{1}{y_p=f(\hat{t}_p)} & \qw & \ket{\hat{t}_p} & & \qw \raisebox{-15em}{$z_p=2^\ell\hat{y_p}$} & \qw & \ket{\hat{t}_p} \\
\lstick{\ket{0}} & \qw \raisebox{21em}{\text{Right Shift}}& \qw & \qw & \qw & \qw & \qw & \qw & \qw & \ghost{y_p=f(\hat{t}_p)} & \qw & \ket{\hat{y}_p} & & \multigate{1}{\ \ell\ \text{times}\ \ } & \qw & \ket{\hat{y}_p} \\
\lstick{\ket{0}} & \qw & \qw & \qw & \qw & \qw & \qw & \qw & \qw & \qw & \qw & \qw & \qw & \ghost{\ \ell\ \text{times}\ \ } & \qw & \ket{z_p} 
}
}
\caption{Overall circuit schematic for approximating $\ln w$.  
The state $\ket{\ell}$ holds the required input parameter $\ell$ which determines the accuracy of Algorithm 3 LN. 
The gate $f(\hat{t}_p)$ outputs $\hat{y}_p =  (\hat{t}_p - 1) - \frac{1}{2} (\hat{t}_p - 1)^2$. 
Once $z_p$ is obtained the approximation of $\ln w$ is computed by the expression
 $z_p + (p-1)r$,  where $r$ approximates $\ln 2$ with high accuracy and $p-1$ is obtained from a quantum circuit as the one shown in Fig.~\ref{fig-compute_p}.}

\label{fig-LogOverall}
\end{figure}

The next module is the circuit for the PowerOf2Roots algorithm and calculates $\hat t_p$.
The third module calculates $\hat y_p$ and is comprised of modules that perform subtraction and multiplication in fixed precision. 
The fourth module of the circuit performs a series of left shift operations. 
Observe that $\ell$ is an accuracy parameter whose value is set once at the very beginning and does not change during the execution of the algorithm.
Thus the left shift $\ell$ times is, in general, much easier to implement than the right shift of $w$  at the beginning of the algorithm and we omit the details.

In its last step Algorithm \ref{alg:ln} LN computes the expression $\hat z:=z_p+(p-1)r$ which is the estimate of $\ln w$. The value of $p-1$ in this expression is obtained using
the quantum circuit of Fig. \ref{fig-compute_p}. The error of the algorithm satisfies
$$ |  \hat{z}  - \ln w | \leq   \left(\frac{3}{4}\right)^{5\ell/2} \left( m+ \frac{32}{9} + 2\left(\frac{32}{9} + \frac{n}{\ln 2} \right)^3 \right).$$
For the proof details see Theorem \ref{thm3} in the Appendix.

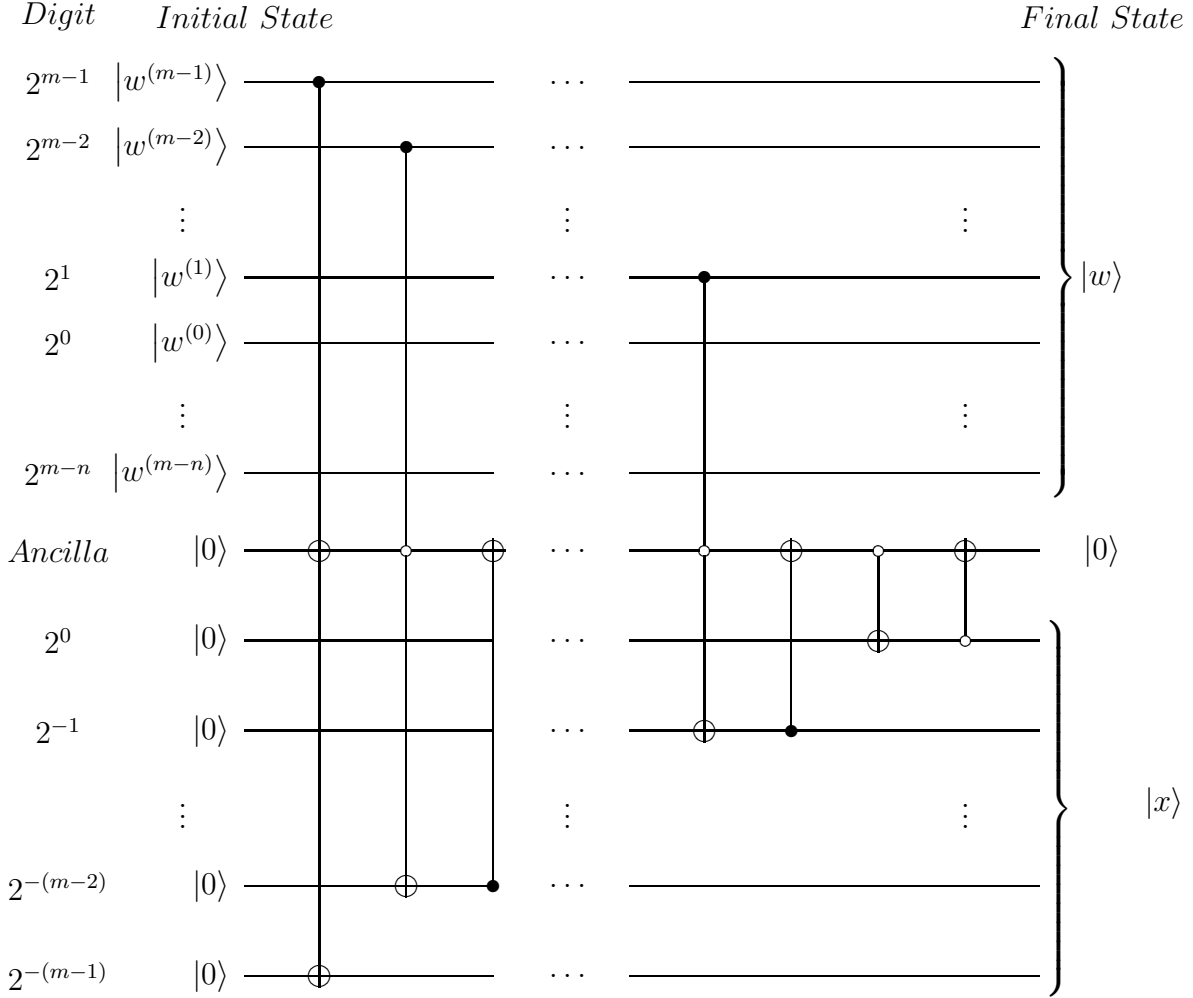
\begin{figure}[H]
\centerline{
\Qcircuit @C = 2.0 em @R = 2.1 em{
Digit & & & Initial\ State & & & & & & & & & & & Final\ State \\ 
2^{m-1} & & & \lstick{\ket{w^{(m-1)}}} & \ctrl{12} & \qw & \qw & \dots & & \qw & \qw & \qw & \qw &  \qw \\
2^{m-2} & & & \lstick{\ket{w^{(m-2)}}} & \qw & \ctrl{6} & \qw & \dots & & \qw & \qw & \qw & \qw & \qw\\
& & \vdots & & & & & \vdots & & & & & \vdots   \\
2^1 & & & \lstick{ \ket{w^{(1)}} } & \qw & \qw & \qw & \dots & & \ctrl{4} & \qw & \qw & \qw & \qw &  \raisebox{0 em}{$\ket{w}$} \\
2^{0} & & & \lstick{  \ket{w^{(0)}}    } & \qw & \qw & \qw & \dots & & \qw & \qw  & \qw & \qw & \qw \\
& & \vdots & & & & & \vdots & & & & & \vdots \\
2^{m-n} & & & \lstick{\ket{w^{(m-n)}}} & \qw & \qw & \qw & \dots & & \qw & \qw & \qw & \qw & \qw  \\
Ancilla & & & \lstick{\ket{0}} & \targ & \ctrlo{4} & \targ & \dots & & \ctrlo{2} & \targ  & \ctrlo{1} & \targ & \qw & \ket{0} \\
2^{0} & & & \lstick{\ket{0}} & \qw & \qw & \qw & \dots & & \qw & \qw  &  \targ & \ctrlo{-1} & \qw \\
2^{-1} & & & \lstick{\ket{0}} & \qw  & \qw & \qw & \dots & & \targ & \ctrl{-2} &  \qw & \qw & \qw \\
& & \vdots & & & & & \vdots & & & & & \vdots & &  &  \raisebox{0.6 em}{$\ket{ x}$}  \\
2^{-(m-2)} & & & \lstick{\ket{0}} & \qw & \targ & \ctrl{-4} & \dots & & \qw & \qw &  \qw & \qw & \qw \\
2^{-(m-1)} & & & \lstick{\ket{0}} & \targ & \qw & \qw & \dots & & \qw & \qw &  \qw & \qw & \qw 
\gategroup{2}{14}{8}{14}{1.5 em}{\}} \gategroup{10}{14}{14}{14}{1.2 em}{\}}
}
}
\caption{For $w\ge 1 $ this quantum circuit computes $\ket{x}$ where $x$ is an $m$ bit number $x\in [2^{1-m},1]$. 
Here $w^{(m-1)},\dots, w^{(0)}$ label the $m$ integral bits of $w$ and similarly $w^{(-1)},\dots, w^{(m-n)}$ label the fractional bits.
 For $w\ge 2$ we set $x=2^{1-p}$, where $p-1=\lfloor \log_2 w\rfloor \ge 1$.
For $1\le w <2$ we set $x=1$. 
Thus $m$ bits are needed for the representation of $x$,
with the first bit  $x^{(0)}$ denoting its integer part and all the remaining bits $x^{(-1)}, \dots,
x^{-(m-1)}$  denoting its fractional part.
This circuit is used in steps 6  -- 10 of Algorithm~3~LN to  derive $x=2^{1-p}$  so one can implement the shift of $w$ in terms of multiplication between $w$ and $x$,
i.e., $w_p=w\, x$. Since $w_p\in [1,2-2^{1-n}]$ we have that $n$ qubits of which $1$ is used for the integer part suffice to hold $w_p$ exactly.}
\label{fig-RShift}
\end{figure}

Considering the cost of the quantum circuit in Fig. \ref{fig-LogOverall} implementing Algorithm \ref{alg:ln} LN, we have that the cost of computing the initial and the last shifts 
(first and fourth modules in Fig.~\ref{fig-LogOverall}),
as well as the cost of the arithmetic expression in the third module in Fig.~\ref{fig-LogOverall}, the cost of computing $p-1$ (see the circuits in Fig.~\ref{fig-RShift} and  Fig.~\ref{fig-compute_p} ) 
and the cost of the expression $z_p+(p-1)r$,
are each relatively minor compared to the other costs of the algorithm. 
The algorithm calls Algorithm~\ref{alg:2krt} PowerOf2Roots with input parameter $k:= \ell$, which requires 
 $O( \ell (n+b)\log b)$ qubits and $\ell \log b \cdot p(n+b)$ quantum operations, as explained in the previous subsection. Observe that the expression in Step 17 of Algorithm \ref{alg:ln} LN can be computed using a constant number of applications of a quantum circuit of the type shown in Fig. \ref{fig:basicModule}. Hence, the overall cost of Algorithm \ref{alg:ln} LN is $O( \ell (n+b)\log b)$ qubits and requires a number of quantum operations proportional to $\ell \log b \cdot p(n+b)$.

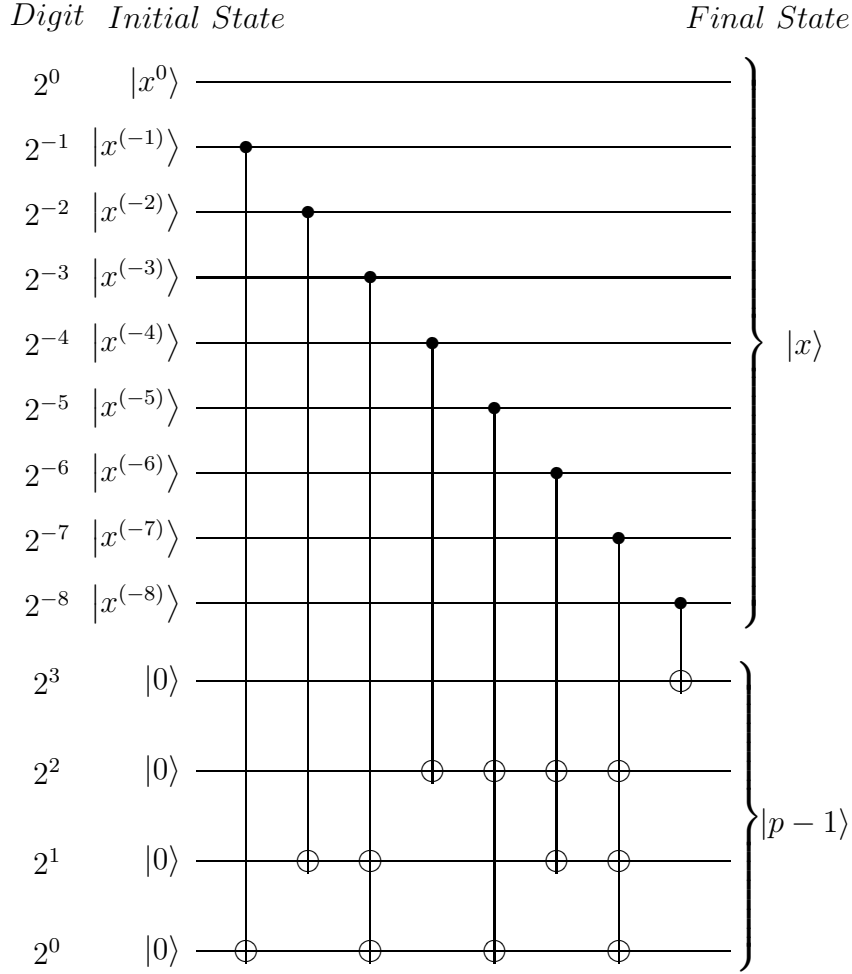
\begin{figure}[H]

\centerline{
\Qcircuit @C = 1.2 em @R = 2.1 em{
Digit & & & & Initial\ State & & & & & & & & & & Final\ State \\
2^{0} & & & & \lstick{\ket{x^0}}  & \qw & \qw & \qw & \qw & \qw & \qw & \qw & \qw & \qw \\
2^{-1} & & & & \lstick{\ket{x^{(-1)}}}  & \ctrl{11} & \qw  & \qw & \qw & \qw & \qw & \qw & \qw & \qw \\
2^{-2} & & & & \lstick{\ket{x^{(-2)}}} & \qw & \ctrl{9}  & \qw & \qw & \qw & \qw & \qw & \qw & \qw \\
2^{-3} & & & & \lstick{\ket{x^{(-3)}}} & \qw & \qw & \ctrl{9}  & \qw & \qw & \qw & \qw & \qw & \qw \\
2^{-4} & & & & \lstick{\ket{x^{(-4)}}} & \qw & \qw & \qw & \ctrl{6}  & \qw & \qw & \qw & \qw & \qw & & \raisebox{-1 em}{$\ket{x}$} \\
2^{-5} & & & & \lstick{\ket{x^{(-5)}}} & \qw & \qw & \qw & \qw & \ctrl{7} & \qw & \qw & \qw & \qw  \\
2^{-6} & & & & \lstick{\ket{x^{(-6)}}} & \qw & \qw & \qw & \qw & \qw & \ctrl{5} & \qw & \qw & \qw \\
2^{-7} & & & & \lstick{\ket{x^{(-7)}}} & \qw & \qw & \qw & \qw & \qw & \qw & \ctrl{5} & \qw & \qw \\
2^{-8} & & & & \lstick{\ket{x^{(-8)}}} & \qw & \qw & \qw & \qw & \qw & \qw & \qw & \ctrl{1} & \qw \\
2^3 & & & & \lstick{\ket{0}} & \qw & \qw & \qw & \qw & \qw & \qw & \qw & \targ & \qw & & \raisebox{-10 em}{$\ket{p-1}$}\\
2^2 & & & & \lstick{\ket{0}} & \qw & \qw & \qw  & \targ & \targ & \targ  & \targ & \qw & \qw \\
2^1 & & & & \lstick{\ket{0}}  & \qw & \targ  & \targ & \qw & \qw & \targ & \targ & \qw & \qw \\
2^0 & & & & \lstick{\ket{0}} & \targ  & \qw  & \targ  & \qw  & \targ & \qw  & \targ & \qw & \qw \gategroup{2}{14}{10}{14}{1.5 em}{\}} \gategroup{11}{14}{14}{14}{1.2 em}{\}}
}
}
\caption{Example of a quantum circuit computing $p-1\ge 0$ required in the last step of Algorithm 3 {LN}. The input to this circuit is
the state $\ket{x}$ computed in Fig.~\ref{fig-RShift} where $x=2^{-(p-1)}$. Recall that $m$ bits are used to store $x$, and clearly $\lceil \log_2 m \rceil$ bits suffice to store $p-1$ exactly.
In this example, $m = 9$. It is straightforward to generalize this circuit to an arbitrary number $m$. }

\label{fig-compute_p}
\end{figure}

\begin{algorithm}
\caption{LN($w$, $n$, $m$, $\ell$)}
\label{alg:ln}
\begin{algorithmic}[1]
\REQUIRE $w\geq 1$, held in an $n$ qubit register, of which the first $m$ qubits are reserved for its integer part.
\REQUIRE 
$\ell \geq \lceil \log_2 8n\rceil$ is a parameter upon which the error will depend and we will use to determine the number of bits $b$ after the decimal points in which arithmetic will be performed.
\STATE $b\leftarrow \max\{5\ell,25\}$. Results are truncated to $b$ bits of accuracy after the decimal point in the intermediate calculations.
\STATE $r$ $\leftarrow$ $\ln 2$ with $b$ bits of accuracy, i.e. $|r-\ln 2|\le 2^{-b}$. An approximation of $\ln 2$  can be precomputed with sufficiently many bits of accuracy and stored in a register, from which we take the first $b$ bits.
\IF{$w=1$} 
\RETURN 0
\ENDIF
\STATE Let $p\in \nat$ is such that $2^p > w \geq 2^{p-1}$
\IF {$p-1 =0$}
\STATE $w_p \leftarrow w$. In this case $w=w_p\in [1,2-2^{1-n}]$.
\ELSE 
\STATE $w_p \leftarrow w2^{1-p}$. Note that $w_p\in [1,2-2^{1-n}]$ for $w\ge 2$. The number of bits (qubits) used for $w_p$ is $n$ of which $1$ is for its integer part as explained in the text and the caption of  Fig.~\ref{fig-RShift}.
\STATE $x_p \leftarrow w_p - 1$. This is the fractional part of $w_p$. 
\ENDIF
\IF{$x_p=0$} 
\STATE $z_p \leftarrow 0$
\ELSE 
\STATE $\hat{t}_p \leftarrow $PowerOf2Roots($w_p$, $\ell$, $n$, $1$, $b$)$[\ell]$. The function $PowerOf2Roots$ returns a list of numbers and we take the last element, the $1/2^\ell$th root. Note that in this particular case  $1\leq \hat{t}_p < 2$. 
\STATE $\hat{y}_p \leftarrow (\hat{t}_p-1) - \frac12 (\hat{t}_p -1)^2$, computed to $b$ bits of accuracy after the decimal point. Note that $\hat{t}_p = 1 + \delta$ with $\delta \in (0,1)$, and we approximate $\ln(1+\delta)$ by $\delta-\frac12 \delta^2$, the first two terms of its power series expansion.
\STATE $z_p \leftarrow 2^{\ell}\hat{y}_p$. This corresponds to a logical right shift of the decimal point. 
\ENDIF
\RETURN $z_p+(p-1)r$ 
\end{algorithmic}
\end{algorithm}

\subsection{Fractional Power}

Another application of interest is the approximation of fractional powers of the form $w^f$, where $w\geq 1$ and
$f\in [0,1]$ a number whose binary form is $n_f$ bits long. The main idea is to calculate appropriate powers of 
the form $w^{1/2^i}$ according to the value of the bits of $f$ and multiply the results.
Hence initially, the algorithm PowersOf2Roots is used to derive a list of approximations $\hat w_i$ to the powers 
$w_i = w^{1/2^i}$, for $i = 1,2, \ldots ,n_f$.
The final result that approximates $w^{f}$
is $\Pi_{i\in \mathcal P} \hat w_i$, where $\mathcal P = \{1\leq i \leq n_f: f_i = 1\}$ and $f_i$ denotes the $i$th bit of the number $f$.
The process is described in detail in Algorithm \ref{alg:fracPower} Fractional Power.

For $w>1$ the value $\hat z$ returned by the algorithm satisfies
$$
|\hat{z} - w^{f}|  \leq \left( \frac{1}{2} \right)^{\ell  - 1  } , 
$$
where $w$ is represented by $n$ bits of which $m$ are used for its integer part, $n_f$ is number of bits in the representation of the exponent $f$, $\ell\in\naturals$ is a user selected parameter determining the accuracy of the result. The results of all intermediate steps are truncated to 
$b\ge\max\{ n,n_f,[5(\ell+2m+\ln n_f)],40\}$ bits before passing them  on to the next step. The proof can be found in Theorem \ref{thm4} in the Appendix.

The algorithm can be extended to calculate $w^{p/q}$, $w>1$, where the exponent is a rational number $p/q \in [0,1]$. 
First $f$, an $n_f$ bit number, is calculated such that it approximates $p/q$ within $n_f$ bits of accuracy, 
namely $|f-\frac{p}{q}| \leq 2^{-n_f}$.
Then $f$ is used as the exponent in the parameters of Algorithm \ref{alg:fracPower} FractionalPower to 
get an approximation of $w^f$, which in turn is an approximation of $w^{p/q}$.
The value $\hat z$ returned by the algorithm satisfies 
$$
|\hat{z} - w^{p/q}|  \leq     \left( \frac{1}{2} \right)^{\ell  - 1  }  + \frac{w\ln w}{2^{n_f}}. 
$$
The proof can be found in Corollary \ref{cor1} in the Appendix.

\begin{rem}
For example, for $p/q=1/3$, one can use 
Algorithm \ref{alg:inv} INV to produce an approximation $f$ of $1/3$ and pass that to the algorithm.
In such a case, the approximation error $|w^{p/q} - w^f| \leq 2^{-n_f} w \ln w$, obtained using the mean value theorem for the function $w^x$,  appears as the last term of the equation above. For the details see corollaries \ref{cor1} and  \ref{cor2} in the Appendix. 
\end{rem}

When $w\in(0,1)$ we can shift it appropriately to obtain a number greater than one to which we can apply   Algorithm \ref{alg:fracPower} FractionalPower.
However when approximating the fractional power {\it undoing} the  initial shift to obtain an estimate of $w^f$ is a bit more involved than it is for the previously considered
functions. For this reason we provide the details in Algorithm~\ref{alg:fracPower2}FractionalPower2.
The algorithm first computes $k$ such that $2^k w \geq 1 > 2^{k-1}w$; see Fig. \ref{fig-ShiftInteger}. Using $k$ the algorithm shifts $w$ to the left to obtain $x:=2^k w$. The next step is
to approximate $x^f$. Observe that $x^f=2^{kf} w^f$. Therefore to undo the initial shift we have to divide by $2^{kf}= 2^c  2^{\{ c\} }$, where $c$ denotes the integer part of $kf$ and
$\{c\}$ denotes its fractional part. Dividing by $2^c$ is straightforward and is accomplished using shifts. 
Dividing by $2^{ \{ c\} }$ is accomplished by first approximating  $2^{ \{ c\} }$  and then
multiplying by its approximate reciprocal. See Algorithm \ref{alg:fracPower2} FractionalPower2 for all the details. 

The value $\hat t$ the algorithm returns as an approximation of $w^f$, $w\in(0,1)$, satisfies
$$
|\hat{t} - w^{f}| \leq   \frac{1}{2^{\ell - 3 }},
$$
where $\ell$ is a user-defined parameter just like before.
The proof can be found in Theorem \ref{thm5} in the Appendix.

\begin{algorithm}
\caption{FractionalPower($w$, $f$, $n$, $m$, $n_f$, $\ell$)}
\label{alg:fracPower}
\begin{algorithmic}[1]
\REQUIRE $w\geq 1$, held in an $n$ qubit register, of which the first $m$ qubits are reserved for its integer part.
\REQUIRE $\ell \in \nat$ is a parameter upon which the error will depend. We use it to determine the number of bits $b$ after the decimal points in which arithmetic will be performed.
\REQUIRE $1\geq f \geq 0$ is a binary string corresponding to a fractional number given with $n_f$
bits of accuracy after the decimal point. 
The algorithm returns an approximation of $w^f$.
\STATE $b \leftarrow \max \{ n,n_f, \lceil 5(\ell + 2m +\ln n_f) \rceil,40\}$. Results are truncated to $b$ bits of accuracy after the decimal point.
\IF{$f=1$}
   \RETURN w
\ENDIF
\IF{$f=0$}
   \RETURN 1
\ENDIF
\STATE $\{ \hat{w}_i  \} \leftarrow $PowerOf2Roots($w$, $n_f$, $n$, $m$, $b$). The function returns a list of numbers  $ \hat{w}_i$  approximating $w_i =  w^{\frac{1}{2^i}}$, $i=1,\dots,n_f$.
\STATE $\hat{z} \leftarrow 1$
\FOR{$i=1$ to $n_f$}
   \IF{the $i$th bit of $f$ is $1$}
       \STATE $z \leftarrow \hat{z} \hat{w}_i$
       \STATE $\hat{z} \leftarrow z$ truncated to $b$ bits after the decimal point
   \ENDIF
\ENDFOR
\RETURN $\hat{z}$
\end{algorithmic}
\end{algorithm}

Just like before we can approximate $w^p/q$, $w\in(0,1)$, and a rational exponent $p/q\in[0,1]$, by first approximating $p/q$ and then calling 
Algorithm \ref{alg:fracPower2}  FractionalPower2. The value $\hat t$ the algorithm returns satisfies
$$
|\hat{t} - w^{p/q}|  \leq     \left( \frac{1}{2} \right)^{\ell  - 2  }  + \frac{w\ln w}{2^{n_f}}.
$$
See Corollary \ref{cor2} in the Appendix.

\begin{algorithm}
\caption{FractionalPower2($w$, $f$, $n$, $m$, $n_f$, $\ell$)}
\label{alg:fracPower2}
\begin{algorithmic}[1]
\REQUIRE $0 \leq w< 1$, held in an $n$ qubit register, of which the first $m$ qubits are reserved for its integer part.  (For $w\geq 1$, use Algorithm \ref{alg:fracPower}.)
\REQUIRE $\ell \in \nat$ is a parameter upon which the error will depend. We use it to determine the number of bits $b$ after the decimal points in which arithmetic will be performed.
\REQUIRE $1\geq f \geq 0$ specified to $n_f$ 
bits of accuracy. 
The algorithm returns an approximation of $w^f$.
\STATE $b \leftarrow \max \{ n,n_f, \lceil 2\ell + 6m +2\ln n_f \rceil,40\}$. Results are truncated to $b$ bits of accuracy after the decimal point.
\IF{$w=0$}
   \RETURN 0
\ENDIF
\IF{$f=1$}
   \RETURN w
\ENDIF
\IF{$f=0$}
   \RETURN 1
\ENDIF
 \STATE $x \leftarrow 2^k w$, where $k$ is a positive integer such that  $2^k w \geq 1 > 2^{k-1} w$. This corresponds to a logical right shift of the decimal point. An example of the quantum circuit computing $k$ is given in Fig. \ref{fig-ShiftInteger}.
 \STATE $c \leftarrow kf$  
 \STATE $\hat{z} \leftarrow  $FractionalPower($x$, $f$, $n$, $m$, $n_f$, $\ell$). This statement computes an approximation of $z=x^f$.
 \STATE $\hat{y} \leftarrow  $FractionalPower($2$, ${\{c\}}$, $n$, $m$, $n_f$, $\ell$). This statement computes an approximation of $y=2^{\{c\}}$, where $\{c\} = c -  \lfloor c \rfloor$ (for $c\geq 0$) denotes the fractional part of $c$. Since we use fixed precision arithmetic, the integer and fractional parts of numbers are readily available.
  \STATE $\hat{s} \leftarrow$ INV$(\hat{y}, n, 1,2\ell)$. This statement computes  an approximation of $s=\hat{y}^{-1}$. 
\STATE $v \leftarrow 2^{- \lfloor c \rfloor} \hat{z}$. This corresponds to a logical left shift of the decimal point.
\STATE $t \leftarrow v \hat{s}$
\STATE $\hat{t} \leftarrow t$, truncated to $b$ bits after the decimal point.
\RETURN $\hat{t}$
\end{algorithmic}
\end{algorithm}

We now address the cost of our Algorithms computing $w^f$. First consider Algorithm \ref{alg:fracPower} FractionalPower, which calls Algorithm \ref{alg:2krt} PowerOf2Roots with input parameters $k:= n_f$ and $b:=\max \{ n,n_f, \lceil 5(\ell + 2m +\ln n_f) \rceil,40\}$, which requires $O( n_f  (n+b)\log b)$ qubits and of order $n_f \log b \cdot p(n+b)$ quantum operations, as explained in the previous subsections. At most $n_f$ multiplications are then required, using a quantum circuit of the type shown in Fig. \ref{fig:basicModule}. Hence,  Algorithm \ref{alg:fracPower} FractionalPower  requires a number of qubits and a number of quantum operations that is 
a low-degree polynomial in $n,n_f$, and $\ell$, respectively.

\begin{figure}[H]

\centerline{
\Qcircuit @C = 1.2 em @R = 2.1 em{
Digit & & & & Initial\ State & & & & & & & & & & & & &  &  & & Final\ State \\
2^{m-1} & & & & \lstick{\ket{0}} & \qw & \qw & \qw & \qw & \qw & \qw & \qw & \qw & \qw & \qw & \qw & \qw & \qw & \qw & \qw & \qw \\
& & & \vdots \\
2^{0} & & & & \lstick{\ket{0}} & \qw & \qw & \qw & \qw & \qw & \qw & \qw & \qw & \qw &\qw & \qw & \qw & \qw & \qw & \qw & \qw \\
2^{-1} & & & & \lstick{\ket{w^{(-1)}}} & \ctrl{12} & \qw & \qw & \qw & \qw & \qw & \qw & \qw & \qw & \qw & \qw & \qw & \qw & \qw & \qw & \qw \\
2^{-2} & & & & \lstick{\ket{w^{(-2)}}} & \qw & \ctrl{7} & \qw & \qw & \qw & \qw  & \qw & \qw & \qw & \qw & \qw & \qw & \qw & \qw & \qw & \qw \\
2^{-3} & & & & \lstick{\ket{w^{(-3)}}} & \qw & \qw  & \qw & \ctrl{6} & \qw & \qw & \qw & \qw & \qw & \qw & \qw & \qw & \qw & \qw & \qw & \qw & \raisebox{-1 em}{$\ket{w}$} \\
2^{-4} & & & & \lstick{\ket{w^{(-4)}}} & \qw & \qw & \qw & \qw & \qw & \ctrl{5} & \qw & \qw & \qw & \qw & \qw & \qw & \qw & \qw & \qw & \qw \\
2^{-5} & & & & \lstick{\ket{w^{(-5)}}} & \qw & \qw & \qw & \qw  & \qw & \qw &\qw & \ctrl{4} & \qw  & \qw & \qw & \qw & \qw & \qw & \qw & \qw \\
2^{-6} & & & & \lstick{\ket{w^{(-6)}}} & \qw & \qw & \qw & \qw & \qw & \qw & \qw & \qw & \qw & \ctrl{3} & \qw & \qw & \qw & \qw & \qw & \qw \\
2^{-7} & & & & \lstick{\ket{w^{(-7)}}} & \qw & \qw & \qw & \qw & \qw & \qw & \qw & \qw & \qw & \qw & \qw & \ctrl{2} & \qw & \qw & \qw & \qw \\
2^{-8} & & & & \lstick{\ket{w^{(-8)}}} & \qw & \qw & \qw & \qw & \qw & \qw & \qw & \qw & \qw & \qw & \qw & \qw & \qw & \ctrl{1} & \qw & \qw \\
Ancilla & & & & \lstick{\ket{0}} & \targ & \ctrlo{3} & \targ & \ctrlo{4} & \targ & \ctrlo{2} & \targ & \ctrlo{4} & \targ &\ctrlo{3} & \targ & \ctrlo{4} & \targ & \ctrlo{1} & \targ & \gate{X} & \ket{0} \\
2^3 & & & & \lstick{\ket{0}} & \qw & \qw & \qw & \qw & \qw & \qw & \qw & \qw & \qw & \qw & \qw & \qw & \qw & \targ & \ctrl{-1} & \qw & \raisebox{-10 em}{$\ket{k}$}\\
2^2 & & & & \lstick{\ket{0}} & \qw & \qw & \qw & \qw & \qw & \targ & \ctrl{-2} & \targ & \ctrl{-2} & \targ & \ctrl{-2} & \targ & \ctrl{-2} & \qw & \qw & \qw \\
2^1 & & & & \lstick{\ket{0}} & \qw & \targ & \ctrl{-3} & \targ & \ctrl{-3} & \qw & \qw & \qw & \qw & \targ & \ctrl{-1} & \targ & \ctrl{-1} & \qw & \qw & \qw  \\
2^0 & & & & \lstick{\ket{0}} & \targ & \qw & \qw & \targ & \ctrl{-1} & \qw & \qw & \targ & \ctrl{-2} & \qw & \qw  & \targ & \ctrl{-1} & \qw & \qw & \qw \gategroup{2}{21}{12}{21}{1.5 em}{\}} \gategroup{14}{21}{17}{21}{1.2 em}{\}}
}
}
\caption{Example of quantum circuit computing the positive integer $k$ such that $2^k w \geq 1 > 2^{k-1} w$,
for $0 < w < 1$ having $n-m$ significant bits after the decimal point. 
In this example, $n-m = 8$. 
It is straightforward to generalize this circuit for arbitrary number of bits. 
This circuit is used in Algorithm 5 FractionalPower2. 
}

\label{fig-ShiftInteger}
\end{figure}
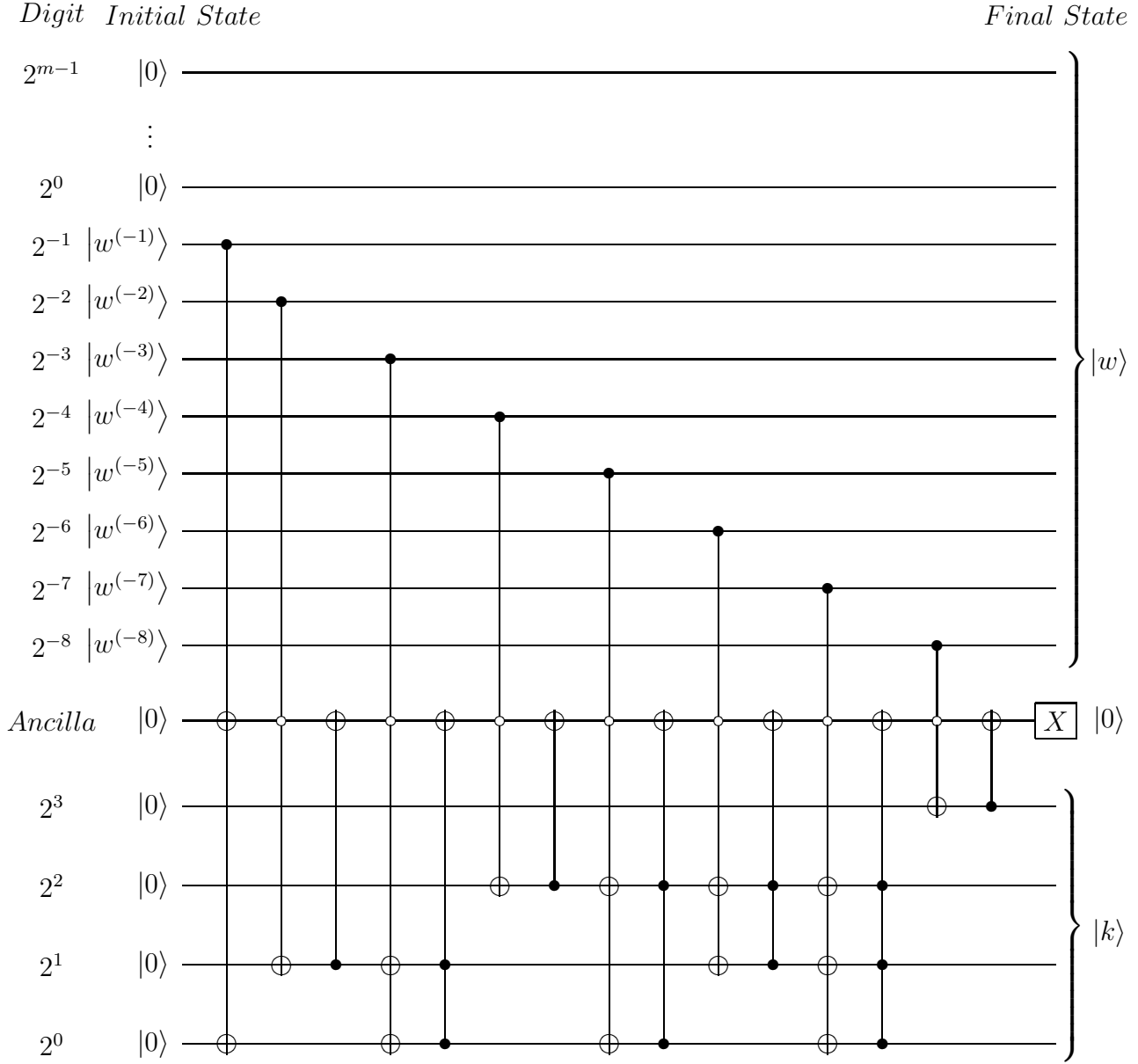

Now consider Algorithm \ref{alg:fracPower2} FractionalPower2, which requires two calls to Algorithm \ref{alg:fracPower} FractionalPower, one call to Algorithm \ref{alg:inv} INV, and 
a constant number of
calls to  a quantum circuit of the type shown in Fig. \ref{fig:basicModule}. 
Hence, using the previously derived bounds for the costs of each of these modules, the cost of Algorithm \ref{alg:fracPower2} FractionalPower2 in terms of both the number of quantum operations and the number of qubits is  a low-degree polynomial in $n,n_f$, and $\ell$, respectively.

\section{Numerical Results}  \label{sec:NumericalResults}

We present a number of tests comparing our algorithms to the respective ones implemented using floating point arithmetic in Matlab. To ensure that we compare against highly accurate values in Matlab we have used variable precision arithmetic (vpa) with 100 significant digits. In particular, we simulate the execution
of each of our algorithms and obtain the result. We obtain the corresponding result using a built-in function in Matlab. We compare the number of equal significant digits in the two results.
Our tests show  that using a moderate amount of resources, our algorithms compute values 
matching the corresponding ones obtained by commercial, widely used scientific computing software 
with 12 to 16 decimal digits. 
In our tests, the input $w$ was chosen randomly.

\subsection{Algorithm: SQRT}
In Table \ref{tab:numSQRT}, 
all calculations for SQRT are performed with $b = 64$ bits (satisfying $b > 2m$ for all inputs). 
The first column in the table gives the different values of $w$ in our tests, the second and third columns give the computed value using Matlab and our algorithm respectively, and the last column gives the number of identical significant digits between the output of our algorithm and Matlab. \\

\begin{table}[h]
\begin{center}
\begin{tabular}{| c | c | c | c |}
	\hline
	$w$ & Matlab: $w^{1/2}$ & Our Algorithm: $w^{1/2}$ & \# of Identical Digits \\
	\hline
	$0.0198$ & $0.140712472794703$ & $0.140712472794703$ & $16$ \\
	\hline
	$48$ & $6.928203230275509$ &  $6.928203230275507$ & $15$  \\
	\hline
	$91338$ & $302.2217728754829$ & $302.2217728754835$ & $14$  \\
	\hline
	$171234050$ & $13085.64289593752$ &  $13085.64289596872$ & $12$ \\
	\hline
\end{tabular}
\end{center}
\caption{Numerical results for SQRT.}
\label{tab:numSQRT}
\end{table}

In Fig. \ref{fig-sqrtplot} we give simulation results showing the error of our algorithm for different values of the parameter $b$. In particular, we show how the worst-case error of the algorithm depends on $b$, and then, using the results of Matlab as the correct values of $\sqrt w$, we show the actual error of our algorithm in relation to $b$ in a number of test cases.

\begin{figure}[H]
\centering
\includegraphics[scale=0.8]{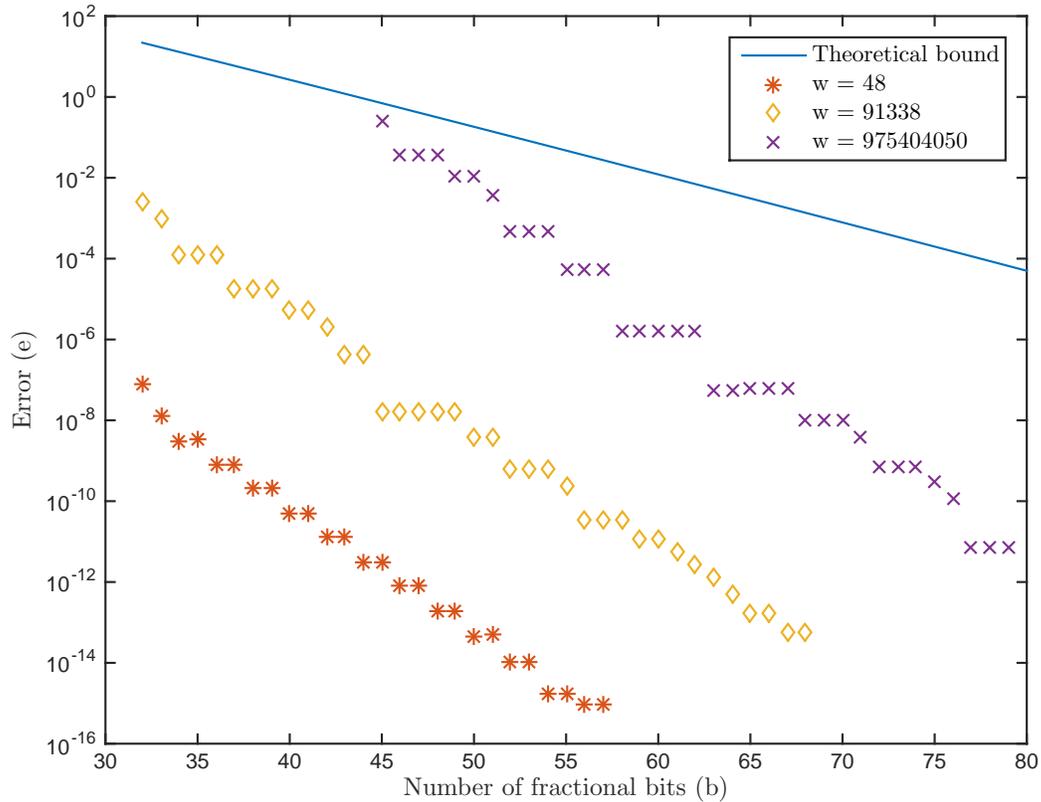}
\caption{Semilog plot showing the error ($e$) of Algorithm 1 SQRT versus the number of precision bits $b$. The solid blue line 
is a plot of the worst-case error of Theorem 1, for $n=2m=64$. 
The three data sets represent the absolute value of the difference between Matlab's floating point calculation of $\sqrt{w}$ 
and our algorithm's result for three different values of $w$.}
\label{fig-sqrtplot}
\end{figure}

\subsection{Algorithm: PowerOf2Roots}
In Table \ref{tab:numPow2Roots},
we show only the result for the $2^{k}$th root (the highest root), where $k$ was chosen randomly such that $5\le k \le 10$.  Again, all our calculations are performed with $b = 64$ bits. \\

\begin{table}[h]
\begin{center}
\begin{tabular}{| c | c | c | c | c |}
	\hline
	$w$ & $k$ & Matlab: $w^{1/2^{k}}$ & Our Algorithm: $w^{1/2^{k}}$ & \# of Identical Digits \\
	\hline
	$0.3175$ & $6$ & $0.982233508377946$ & $0.982233508377946$ & $16$ \\
	\hline
	$28$ & $10$ & $1.003259406317532$ & $1.003259406317532$ & $16$ \\
	\hline
	$15762$ & $5$ & $1.352618595919273$ & $1.352618595919196$ & $13$ \\
	\hline
	$800280469$ & $8$ & $1.083373703681284$ & $1.083373703681403$ & $13$ \\
	\hline
\end{tabular}
\end{center}
\caption{Numerical results for PowerOf2Roots.}
\label{tab:numPow2Roots}
\end{table}

\subsection{Algorithm: LN}
In Fig. \ref{fig-lnplot}  below we give simulation results showing the error of our algorithm for different values of the parameter $\ell$. In particular, we show how the worst-case error of the algorithm depends on $\ell$, and then, using the results of Matlab as the correct values of $\ln(w)$, we show the actual error of our algorithm in relation to $\ell$ in a number of test cases.  

\begin{figure}[H]
\centering
\includegraphics[scale=0.8]{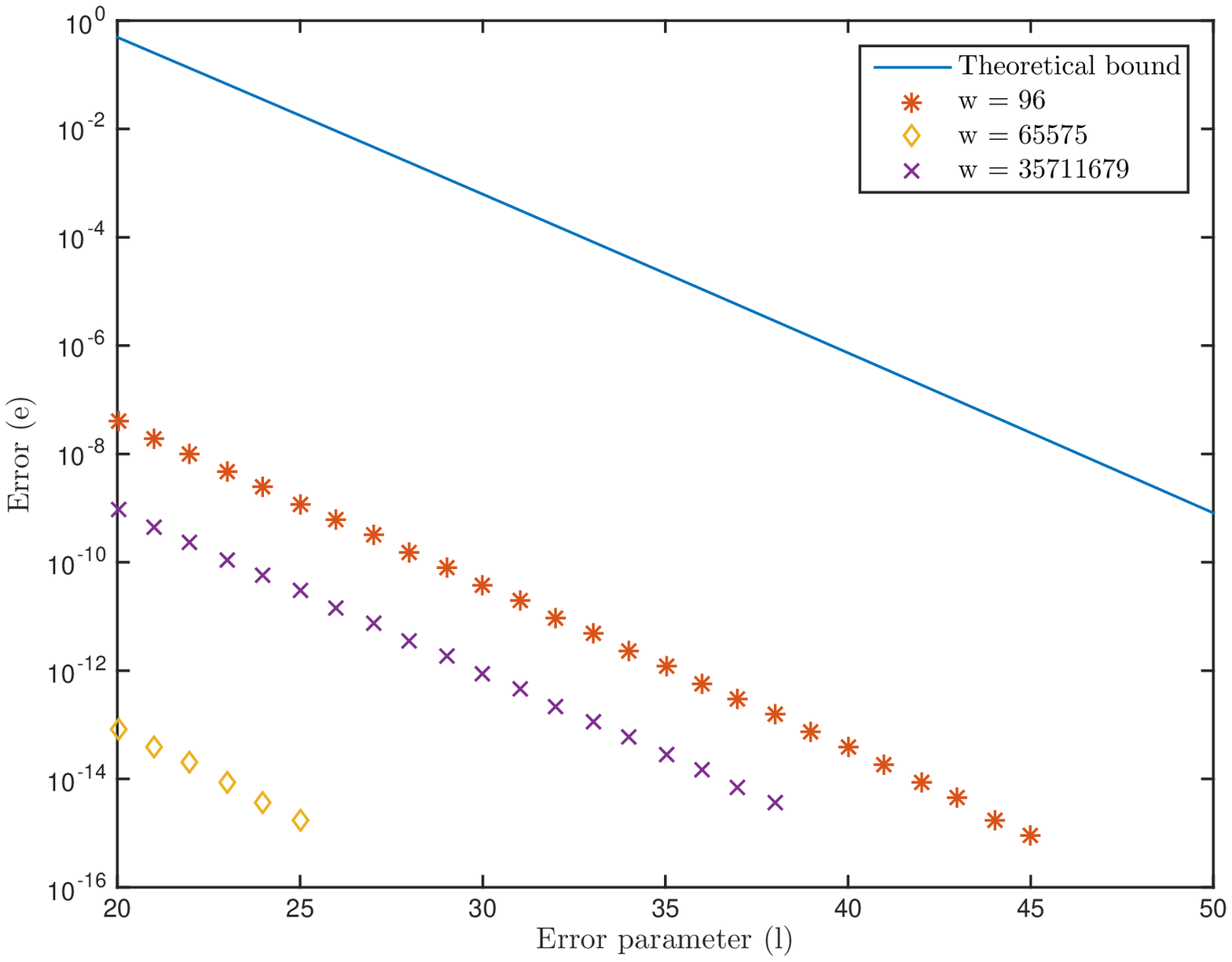}
\caption{Semilog plot showing the error of Algorithm 3 LN versus the user-defined parameter $\ell$ which controls the accuracy of the result. 
The solid blue line is a plot of the worst-case error of Theorem 3, for $n=2m=64$, $b=\max\{ 5\ell,25\}$. 
The three data sets represent the absolute value of the difference between Matlab's floating point calculation of $\ln(w)$ and our algorithm's result for three different values of $w$.}
\label{fig-lnplot}
\end{figure}

\begin{table}[h]
\begin{center}
\begin{tabular}{| c | r | r | c |}
	\hline
	$w$ & Matlab: $\ln(w) \quad$& Our Algorithm: $\ln(w)$ & \# of Identical Digits \\
	\hline
	$96$ & $4.564348191467836$ & $4.564348191467836$ & $16$ \\
	\hline
	$65575$ & $11.090949804735075$ & $11.090949804735075$ & $17$ \\
	\hline
	$35711679$ & $17.390988336107455$ & $17.390988336107455$ & $17$ \\
	\hline
\end{tabular}
\end{center}
\caption{Numerical results for LN.}
\label{tab:numLN}
\end{table}

In the algorithm that computes the logarithm, there is a user-determined parameter $\ell$ that controls the desired accuracy of the result. We have used $\ell=50$ in the tests shown in Table~\ref{tab:numLN}.

\subsection{Algorithm: FractionalPower}
The table below shows tests for the approximation of $w^f$ for randomly generated $f\in (0,1)$. This result of this algorithm also depends on  a user-determined parameter $\ell$ controlling the accuracy.
We have used $\ell=50$ in the tests shown in Table~\ref{tab:numFracPow}.

\begin{table}[h]
\begin{center}
\begin{tabular}{| c | c | r | r | c |}
	\hline
	$w$ & $f$ & Matlab: $w^f \qquad$ & Our Algorithm: $w^f$ & \# of Identical Digits \\
	\hline
	$0.7706$ & $0.1839$ & $0.953208384891996$ & $0.953208384891998$ & $15$ \\
	\hline
	$76$ & $0.7431$ & $24.982309269657478$ & $24.982309269657364$ & $14$ \\
	\hline
	$1826$ & $0.1091$ & $2.268975123215851$ & $2.268975123215838$ & $14$ \\
	\hline
	$631182688$ & $0.5136$ & $33094.79142555447$ & $33094.79142555433$ & $14$ \\
	\hline
\end{tabular}
\end{center}
\caption{Numerical results for FractionalPower.}
\label{tab:numFracPow}
\end{table}

\section{Implementation Remarks}   \label{sec:implementation}

In designing algorithms for scientific computing the most challenging task is to control the error 
propagation.  In our case, we derive the algorithms by combining elementary modules and provide worst-case error bounds.  
The resulting algorithms have cost that is polynomial in the problem parameters.  
In this section, we wish to discuss the implementation of the algorithms, and to give some estimates of their cost for certain choices of their parameters. 

Since we are interested in quantum algorithms, the algorithms must be reversible. 
To the extent that the modules are based on classical computations, their reversibility is not a major issue since \cite{bennett1989time, levine1990note, portugal2006reversible} show how to simulate them reversibly. There are time/space trade-offs in 
an implementation that are of theoretical and practical importance. 
These considerations, as well as other constraints such as (fault-tolerant) gate sets or locality restrictions \cite{Takahashi}, should be optimized by the compiler, but we are not concerned with them here. 
For a discussion of the trade-offs, see, for example, \cite{parent2015reversible} and the references therein. 
Nevertheless, for the algorithms in this paper we indicate 
a possible 
way to implement them, realizing that there are alternatives that one may opt for. 

The design of a library of algorithms for scientific computing must be modular, using components with a certain functionality and worst-case accuracy 
while hiding the implementation details.  The low-level implementation of the modules, i.e., actual 
derivation of the quantum circuits realizing them, should not be a concern for the scientist or the 
application programmer deriving the algorithms. In fact it should be possible to change the low-level 
implementation of the modules in a transparent way.  
The goal is to have a library of quantum circuit templates which can be instantiated 
as needed. 

We are at an early stage in the development of quantum computation, quantum programming languages and compilers. It is anticipated that implementation decisions will be made taking into account technological limitations concerning the different quantum computer architectures, but also it is expected that  things will change with time and some of the present constraints may no longer be as important.  

Although optimization of the individual quantum circuits realizing the algorithms for scientific 
computing described in this paper is desirable, to a certain degree, it is not a panacea.  Quantum 
algorithms may use one or more of our modules in different ways while performing their tasks. 
Therefore, global optimization of the resulting quantum circuit and resource management is a much 
more important and will have to be performed by the compiler and optimizer. One additional 
possibility would be to have templates for multiple versions of quantum circuit implementations of each of our quantum algorithms which can be selected according to optimization goals set during 
compilation. 
Quantum programming languages and compilers is an active area of research with many open questions. 
We do not pursue this direction since it is outside the scope of this paper.

We now present some implementation ideas for our algorithms. We start our discussion with the elementary module of Fig. \ref{fig:basicModule} because all algorithms use it. 
%
The quantum circuit realizing this module needs to perform multiplication, and addition.  
Let $a,b$ be fixed precision numbers held in $n$ qubit registers. One can perform an addition \lq\lq in-place\rq\rq, i.e., $\ket{a}\ket{b} \to \ket{a}\ket{a+b}$,  with the understanding that the second register size has to be $n+1$ qubits long to hold the result exactly. A multiplication can be computed \lq\lq out-of-place\rq\rq, i.e., $\ket{a}\ket{b}\ket{0} \to \ket{a}\ket{b}\ket{a+b}$. This requires an additional $2n$ qubits to hold the result. The same could have been done for the addition but performing it in-place saves some space.   Similar considerations apply in a straightforward way when the inputs are held in registers of different sizes. 

In the literature there are numerous ways that 
basic arithmetic operations can be performed reversibly, and there are trade-offs between the resulting quantum circuits in terms of the number of ancilla bits/qubits used and the resulting circuit size and depth; for example, see 
\cite{cuccaro2004new, draper2000addition, DKRS06, kepleyquantum, portugal2006reversible, rieffel2011quantum, saeedi2013synthesis, TK05, Takahashi, van2005fast, takahashi2008fast}.
 Table 1 in \cite{Takahashi} summarizes some of the trade-offs.  
 There we see that the circuit in \cite{draper2000addition} does not use Toffoli gates or ancilla qubits for addition, but has size $O(n^2)$, while  the circuit in \cite{cuccaro2004new} uses one ancilla qubit, $O(n)$ Toffoli gates, and has size $O(n)$. We believe that low-level implementation details have to be decided taking into account the target architecture because the unit cost of the different resources is not equal.  As we mentioned, one may wish to have different circuits for the same arithmetic operation, and at compile time to let the optimizer decide which one is preferable.

In summary, the circuit implementing the elementary module of Fig. \ref{fig:basicModule} contains sub-circuits for addition and multiplication, plus ancilla registers to facilitate these operations, some of which are cleared at the end of the module application and can be reused, while others save intermediate results and may be cleared at a later time. 

Our algorithms, in addition to the applications of the elementary module of Fig. \ref{fig:basicModule}, use a constant number of extra arithmetic operations. 
Thus the cost of the algorithms can be expressed in a succinct and fairly accurate way by counting the number of arithmetic operations required, which is the approach taken in classical algorithms for scientific computing.
The number of arithmetic operations can provide precise resource estimates once particular choices for the quantum circuits implementing these operations have been made.
We remark that the costs of the quantum circuits of figures 
\ref{fig-InitState1}, \ref{fig-InitState2}, \ref{fig-RShift}, \ref{fig-compute_p}, and \ref{fig-ShiftInteger}
are subsumed by the cost of the arithmetic operations. 


Our algorithms use Newton Iteration. 
Each iterative step is realized by cascading a small number of modules like the one of Fig. \ref{fig:basicModule}, whose implementation we discussed above. 
In addition, at each iterative step, we keep the inputs and the truncated output which is the next approximation. 
This information alone makes the step reversible. 
Note that keeping the intermediate results introduces a relatively insignificant overhead since Newton iteration converges quadratically, and thus the number of iterations is proportional to the logarithm of the bits of accuracy, i.e $\log_2 b$. For example, for $b=64$, keeping the intermediate results of Newton iteration requires storing $6$ fixed-precision numbers. The
number of intermediate results kept by algorithms that apply Newton iteration multiple times, such as Algorithm \ref{alg:sqrt} SQRT that applies Newton iteration twice, 
and the algorithms that require the computation of $2^k$-roots, 
is derived accordingly. 

For the convenience of the reader, we provide some illustrative cost and qubit estimates. 
However, these numbers must be interpreted keeping in mind the larger picture. Our functions are expected to be used as subroutines. 
Note, the cost of a quantum algorithm that computes one or more of our functions a number of times in sequence is quite different to that of an algorithm that performs the same computations concurrently. 

The tables below give estimates of the cost of our algorithms 
for specific choices of the problem parameters. 
As explained, we assume the intermediate approximations of each step of Newton iteration are kept in memory until the end of the computation. It is thus possible to uncompute all intermediate approximations if necessary with the same number of qubits and at most a doubling the number of arithmetic operations. 

\begin{table}[h]   
\begin{minipage}{\textwidth}
\begin{center}
\begin{tabular}{| c | c | c | c |}
	\hline
	Algorithm & \# Multiplications & \# Additions & \# Qubits \\
	\hline
	INV                                  & \ \ $20$ & \ $10$ &  \ $370$ \\  
	\hline
	SQRT                               & \ \ $60$ & \ $20$ &  \ $790$ \\  
	\hline
	PowerOf2Roots($k=5$)   & \ $600$ & $200$  & \ $910$ \\  
	\hline
	PowerOf2Roots($k=10$) & $1200$ & $400$  &  $1110$ \\  
	\hline
\end{tabular}
\end{center}
\caption{Number of arithmetic operations and qubit estimates rounded to the nearest $10$ for $n=2m=b=32$.}
  \label{table:costs}
\end{minipage}
\end{table}

Let us discuss the entries of Table \ref{table:costs}, where we have selected $n=2m=b=32$.  
Consider Algorithm \ref{alg:inv} INV. There are $ \log b =5$ iterative steps. Each step has two multiplications and one addition (ignoring the multiplication by $2$) as seen in (\ref{eq:NI.INV}). At each iterative step, clearing the ancilla registers, with the exception of the register holding the truncated next approximation, doubles the number of arithmetic operations. The number of qubits follows from the fact that we have an initial approximation and $5$ successive approximations, each held in a $b=32$ qubit register, plus we have $n=32$ qubits holding the input $w$, plus we use about $150$ ancilla qubits, which are sufficient to store the intermediate results, and which are reused at each iterative step. 

Algorithm \ref{alg:sqrt} SQRT follows similarly by observing that it applies five steps of Newton iteration twice. The second iteration function (\ref{eq:NI.SQRT}) requires two more multiplications than iteration function (\ref{eq:NI.INV}) and the same number of additions. Just like before, at each iterative step we keep the truncated next approximation and uncompute the other ancilla qubits used within the step. To estimate the number of qubits, observe that we keep twice as many intermediate truncated approximations as before, 
and just over $300$ qubits are sufficient to hold the other intermediate results exactly in each iterative step. 

Algorithm \ref{alg:2krt} PowerOf2Roots$(k)$ computes $k$ square roots in turn. 
After each square root call we uncompute the ancilla registers holding intermediate approximations to economize on qubits. 
Thus, the number of arithmetic operations is $2k$ times that of Algorithm~\ref{alg:sqrt}~SQRT, and the number of qubits 
is given by the number of qubits needed for SQRT, plus sufficiently many qubits to hold the $k-1$ additional outputs (recall that this algorithm outputs approximations for for all $w^{1/2^{k}}$, $=1,\dots, k$).


For Algorithm~\ref{alg:ln}~LN, we compute with high accuracy 
the $1/2^\ell$-power of the 
shifted input in the range $[1,2)$. Then we use the first two terms of the Taylor series to save on computational cost (i.e. no divisions or powers beyond squaring are required). The dominant cost factor is the PowerOf2Roots subroutine. 
We give cost estimates for some values of $\ell$. Note that $\ell$ is an input that determines the desired accuracy of the result and number of significant bits $b$ used in the calculation. 

\begin{table}[h]  
\begin{minipage}{\textwidth}
\begin{center}
\begin{tabular}{| c | c | c | c | c |}
	\hline
	$\ell$ & $b$ &  \# Multiplications & \# Additions & \# Qubits \\
	\hline
	10 & \ 50 & $1440$ & \ $480$ & $1780$ \\
	\hline
	15 & \ 75 & $2520$ & \ $840$ & $3160$\\
	\hline
	20 & 100 & $3360$ & $1120$ & $4690$\\
	\hline
\end{tabular}
\end{center}
\caption{Number of arithmetic operations and qubit estimates rounded to the nearest $10$ for Algorithm \ref{alg:ln} LN for $n=2m=32$ and various values of the error parameter $\ell$. 
Note that $b=5\ell$ as in the algorithm.}
 \label{table:costLN}
\end{minipage}
\end{table}

Similarly, the cost and qubits of Algorithm \ref{alg:fracPower} FractionalPower follows from 
that of Algorithm \ref{alg:2krt} PowerOf2Roots. The multiplication steps at the end of the algorithm require extra qubits for the intermediate products. Algorithm \ref{alg:fracPower2} FractionalPower2 calls Algorithm \ref{alg:fracPower} FractionalPower twice. To save qubits, one can assume the intermediate results of Algorithm \ref{alg:fracPower} FractionalPower are uncomputed after each application.

\begin{table}[h]   
\begin{minipage}{\textwidth}
\begin{center}
\begin{tabular}{| c | c | c | c |}
	\hline
	Algorithm & \# Multiplications & \# Additions & \# Qubits \\
	\hline
	FractionalPower & \ $580$ & $190$ & $4020$ \\
	\hline
	FractionalPower2 & $2330$ & $780$ &  $4330$ \\
	\hline
\end{tabular}
\end{center}
\caption{Number of arithmetic operations and qubit estimates rounded to the nearest $10$ for $n=2m=16$, $n_f = 3$, and $\ell = 10$.}
  \label{table:costs2}
\end{minipage}
\end{table}

The respective numbers of arithmetic operations and qubits estimated for Algorithm \ref{alg:fracPower} FractionalPower and Algorithm \ref{alg:fracPower2} FractionalPower2 are larger 
than those of the previous algorithms.
There are two reasons for this.  
The first reason is that in these algorithms the error accumulates in two ways. 
One is the error in computing the $2^{i}$-roots, $i=1,\dots,n_f$, and the other is the error in multiplying these results. Thus we must use a large number of accuracy bits $b$ to control the overall error. 
The second reason is that the worst-case error estimates of Theorems \ref{thm4} and \ref{thm5}, from which the costs follow,  are conservative, 
since we have opted for simplicity in the error bounds presented.
 Cost improvements can be obtained by tightening the error bounds at the expense of more complicated error bound formulas. 
This can be done using the details already found in the theorem proofs. 


\section{Summary}  \label{sec:Discussion}

Recent results \cite{Poisson,Harrow,TaSma} suggest that quantum computers may have a substantial advantage over classical computers for solving numerical  linear algebra problems and, more  generally, problems of computational science and engineering. Scientific computing applications require the evaluation of elementary functions like those found in mathematics libraries of programming languages, where the calculations are performed using floating point arithmetic.
Although in principle quantum computers can always directly simulate any classical algorithm, generally there is no guarantee that such simulations remain practical. 
Hence, we need to develop efficient quantum algorithms and circuits implementing such functions and to provide performance and cost guarantees. 
It is also important to eventually develop a standard for the implementation of quantum algorithms for scientific computing similar to that of IEEE 754-2008 for floating point arithmetic.

In the design of algorithms for scientific computing,  
the challenging task is 
to control the error propagation and to do so efficiently. 
 In this sense, it is important to derive reusable modules with well-controlled error bounds. 
In this paper, we have given efficient, reversible, and scalable quantum algorithms for computing power of two roots, the logarithm, and the fractional power functions common to scientific computing. In particular, we have shown our algorithms to have both controllable error and reasonable (i.e. polynomially-bounded) cost in terms of the number of qubits and quantum operations required. The design is modular, combining a number of elementary quantum circuits to implement the functions. Each new algorithm builds on the previously developed ones as much as possible. 

This paper extends the results of \cite{Poisson}, where we exhibit quantum circuits computing the reciprocal, the sine, the cosine and their inverses. As explained, these circuits give fundamental building blocks critical to actual implementations of quantum algorithms involving numerical computations, such as those of \cite{Poisson,Harrow}. Our goal is to extend our work and ultimately develop a comprehensive library of quantum circuits for scientific computing.

\section*{Acknowledgements}

This work was supported in part by NSF/DMS.


\section*{Appendix}  \label{sec:Appendix}
In this section, we derive theorems showing the worst-case error of the algorithms in Section 2. \\

\noindent The following Corollary follows from Theorem B.1 of \cite{Poisson}.

\begin{cor} 
\label{cor0}
For $w>1$, represented by $n$ bits of which the first $m$ correspond to its integer part, and for $b\geq n$, 
 Algorithm \ref{alg:inv} returns a value $\hat{x}_{s}$ approximating $\frac{1}{w}$ with error
\begin{equation}
|\hat{x} - \frac{1}{w}| \leq \frac{2+ \log_2 b}{2^b}
\end{equation}
This is accomplished by performing Newton's iteration and truncating the results of each iterative step to $b$ bits after the decimal point.
\end{cor}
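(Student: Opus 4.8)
The plan is to deduce the statement directly from Theorem B.1 of \cite{Poisson}, which analyses precisely this fixed-precision Newton iteration, and to verify that the two concrete choices made in Algorithm~\ref{alg:inv} --- the initial approximation $\hat x_0 = 2^{-p}$ with $2^p > w \ge 2^{p-1}$, and the step count $s = \lceil \log_2 b\rceil$ --- are admissible inputs to that theorem. First I would record the exact-arithmetic identity behind $g_1$: since $x_i = -w\hat x_{i-1}^2 + 2\hat x_{i-1}$ one has $\tfrac1w - x_i = w\big(\tfrac1w - \hat x_{i-1}\big)^2$, so in exact arithmetic $w(\tfrac1w - x_i)$ is the square of $w(\tfrac1w - \hat x_{i-1})$ and the iteration converges quadratically from below. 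Then I would check the initial error: $w < 2^p$ gives $\hat x_0 = 2^{-p} \le \tfrac1w$, while $w \ge 2^{p-1}$ gives $w\big(\tfrac1w - \hat x_0\big) = 1 - w2^{-p} \le \tfrac12$; also $p \le m \le n \le b$, so $\hat x_0$ is exactly representable with $b$ fractional bits.

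Next I would track the effect of truncation. Writing $\hat e_i := \tfrac1w - \hat x_i$ and $\hat x_i = x_i - \tau_i$ with $0 \le \tau_i < 2^{-b}$ (truncation is one-sided), the identity above yields $\hat e_i = w\hat e_{i-1}^2 + \tau_i$, whence $\hat e_i \ge 0$ and $\hat x_i \le \tfrac1w$ for all $i$ by induction; in particular all quantities below are nonnegative and no absolute values are needed. Setting $v_i := w\hat e_i$ turns this into $v_i = v_{i-1}^2 + w\tau_i$ with $v_0 \le \tfrac12$. Iterating, the ``pure Newton'' contribution to $v_s$ is at most $v_0^{2^s} \le 2^{-2^s} \le 2^{-b}$ because $s = \lceil\log_2 b\rceil$ forces $2^s \ge b$, and --- provided the iterates stay in the region $v_i \le \tfrac12$ where $t \mapsto t^2$ is a contraction --- the accumulated truncation contributes at most $\sum_{i=1}^s w\tau_i \le s\,w\,2^{-b}$. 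Dividing by $w$ and using $s \le 1 + \log_2 b$ together with $w \ge 1$ gives $\hat e_s = v_s/w \le 2^{-b}(1 + s) \le (2 + \log_2 b)2^{-b}$, the claimed bound.

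The main obstacle is the parenthetical step: showing the iterates never leave the contraction region, i.e. that $v_i \le \tfrac12$ throughout, and more generally pinning the accumulated truncation error down to the clean constant $2 + \log_2 b$ rather than a looser bound. This requires using $w < 2^p \le 2^m$ together with $b \ge n$ to keep the perturbations $w\tau_i$ small relative to $1$, handling the first one or two iterates (where $v_i$ is largest) by hand, and then exploiting quadratic contraction thereafter --- exactly the bookkeeping carried out in the proof of Theorem B.1 of \cite{Poisson}. The only genuinely new point is that our specific $\hat x_0$ and $s$ satisfy its hypotheses, so the corollary follows.
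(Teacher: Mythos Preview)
Your proposal is correct and follows essentially the same approach as the paper: both invoke Theorem~B.1 of \cite{Poisson} to obtain the bound $(\tfrac12)^{2^s} + s\,2^{-b}$ and then specialize to $s=\lceil\log_2 b\rceil$ to get $(2+\log_2 b)2^{-b}$. The paper's own proof is just the two-line plug-in, whereas you additionally verify the hypotheses on $\hat x_0$ and sketch the mechanics behind Theorem~B.1, but the underlying route is the same.
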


\begin{proof}
From \cite{Poisson}, 
the Newton iteration of Algorithm \ref{alg:inv} performed with $b$ bits of accuracy and $s$ iterations, produces an approximation $\hat{x}$ of $\frac{1}{w}$ such that
$$ |\hat{x} - \frac{1}{w}| \leq \left( \frac{1}{2} \right)^{2^s} + s2^{-b}.$$
For $s=\lceil \log_2 b \rceil$, we have  
$$ |\hat{x} - \frac{1}{w}| \leq \frac{1}{2^b} + \lceil \log_2 b \rceil \frac{1}{2^b} \leq  \frac{1}{2^b}  (2  +\log_2 b) .$$
\end{proof}

\begin{theorem} 
\label{thm1}
For $w>1$, represented by $n$ bits of which the first $m$ correspond to its integer part, and for $b\geq \max\{2m,4\}$, Algorithm \ref{alg:sqrt} returns a value $\hat{y}_{s}$ approximating $\sqrt{w}$ with error
\begin{equation}
|\hat{y}_{s}-\sqrt{w}|  
\leq \left( \frac{3}{4} \right)^{b-2m}  \left( 2+ b + \log_2 b  \right).
\end{equation}
This is accomplished by performing Newton's iteration and truncating the results of each iterative step to $b$ bits after the decimal point before passing it on the next iterative step.
\end{theorem}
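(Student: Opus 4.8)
The plan is to decompose the error by inserting $y^\ast:=1/\sqrt{\hat x_s}$, where $\hat x_s$ is the reciprocal approximation produced in steps~4--10 of Algorithm~\ref{alg:sqrt}:
$$|\hat y_s-\sqrt w|\ \le\ |\hat y_s-y^\ast|\ +\ |y^\ast-\sqrt w|.$$
The second summand is the error inherited from the first Newton stage, and I would bound it with the mean value theorem applied to $t\mapsto t^{-1/2}$ on the interval between $\hat x_s$ and $1/w$. Two facts make this work. First, the inner iterates satisfy $2^{-m}\le\hat x_s\le 1/w$: the upper bound because Newton's iteration for the reciprocal converges monotonically from below and truncation only decreases values, and the lower bound because $\hat x_0=2^{-p}\ge 2^{-m}$ (from $w<2^m$ one gets $p\le m$) while the truncated iterates are non-decreasing and $2^{-p}$ is exactly representable with $b$ bits. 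Second, $|\hat x_s-1/w|\le(2+\log_2 b)2^{-b}$ by Corollary~\ref{cor0}. Since the derivative of $t^{-1/2}$ on $[2^{-m},1)$ is at most $\tfrac12 2^{3m/2}$, this yields $|y^\ast-\sqrt w|\le\tfrac12 2^{3m/2}(2+\log_2 b)2^{-b}$, and also $\sqrt w\le y^\ast\le 2^{m/2}$ (the last bound from $\hat x_s\ge 2^{-q}$ with $q\le m$).

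For the first summand I would analyze the truncated second Newton iteration, which converges to $y^\ast$. The natural quantity is the relative error $\hat e_j:=1-\sqrt{\hat x_s}\,\hat y_j$. Writing $u=\sqrt{\hat x_s}\,\hat y_{j-1}$ and using the identity $1-\tfrac12(3u-u^3)=\tfrac12(u-1)^2(u+2)$, one obtains the exact recursion $\hat e_j=\tfrac12\hat e_{j-1}^2(3-\hat e_{j-1})+\sqrt{\hat x_s}\,|\tau_j|$, where $0\le|\tau_j|\le 2^{-b}$ is the truncation committed at step~$j$. I would first record the monotonicity invariant $0\le\hat y_j\le y^\ast$ (so $\hat e_j\ge0$), then verify $\hat e_0\le\tfrac12$ by a short parity analysis of the initial choice $\hat y_0=2^{\lfloor(q-1)/2\rfloor}$, which in fact always gives $\sqrt{\hat x_s}\,\hat y_0\in[\tfrac12,1)$. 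Since $\sqrt{\hat x_s}\le1$, this reduces everything to the scalar recursion $\hat e_j\le\tfrac32\hat e_{j-1}^2+2^{-b}$ with $\hat e_0\le\tfrac12$. Comparing against the truncation-free sequence $p_j=\tfrac23(3/4)^{2^j}$ (which satisfies $\tfrac32 p_j=(\tfrac32 p_{j-1})^2$), I would show by induction that the deviation $\hat e_j-p_j$ stays bounded by a small absolute multiple of $2^{-b}$; the key observation is that the amplification factor $3p_{j-1}$ exceeds $1$ only at the very first step, so truncation errors do not accumulate geometrically. Hence $\hat e_s\le\tfrac23(3/4)^{2^s}+O(2^{-b})$, and since $s=\lceil\log_2 b\rceil$ forces $2^s\ge b$, together with $y^\ast\le 2^{m/2}$ we get $|\hat y_s-y^\ast|=y^\ast\hat e_s\le 2^{m/2}\big(\tfrac23(3/4)^b+O(2^{-b})\big)$.

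It then remains to combine the two estimates and absorb the constants using the hypotheses $b\ge2m$ and $b\ge4$. Writing $(3/4)^{b-2m}=(3/4)^b(16/9)^m$, one checks that $|\hat y_s-y^\ast|\le(3/4)^{b-2m}\,b$ because $2^{m/2}\le(16/9)^m$ and $b\ge2$, and that $|y^\ast-\sqrt w|\le(3/4)^{b-2m}(2+\log_2 b)$ because $\tfrac12 2^{3m/2}2^{-b}\le(3/4)^b(16/9)^m$ follows from $b\ge2m$; adding the two gives exactly $(3/4)^{b-2m}(2+b+\log_2 b)$. I expect the real work to be in the second paragraph: pinning down the invariant $\hat e_0\le\tfrac12$ cleanly through the floor in the initial approximation, and—more importantly—proving rigorously that the accumulated truncation error remains $O(2^{-b})$ rather than growing like $c^{\,s}2^{-b}$, since it is precisely this that preserves the doubly-exponential convergence and produces the $(3/4)^{b-2m}$ factor instead of a mere polynomial decay in $b$.
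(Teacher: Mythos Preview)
Your decomposition via $y^\ast=1/\sqrt{\hat x_s}$ and the subsequent analysis are essentially the paper's proof. The paper also splits as $|\hat y_s-\sqrt w|\le|\hat y_s-y^\ast|+|y^\ast-\sqrt w|$, handles the second summand with a mean-value estimate (it uses the MVT on $\sqrt{\cdot}$ and then bounds $|1/\hat x_s-w|\le 2w\,|\hat x_s-1/w|$, arriving at the factor $w^2\le 2^{2m}$ rather than your $\tfrac12 2^{3m/2}$; both routes close), and treats the first summand by the same quadratic-convergence identity you wrote down, obtaining $e_i\le\tfrac23\sqrt{2w}\,(3/4)^{2^i}$.

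The one substantive difference is the truncation accounting for the second stage, and here you are making your life harder than necessary. The paper does \emph{not} try to show the accumulated truncation stays $O(2^{-b})$. Instead it separates the exact Newton sequence $y_j$ from the truncated sequence $\hat y_j$ and uses the crude bound $|g_2'(y)|\le\tfrac32$ on $[0,y^\ast]$ to get
\[
|\hat y_s-y_s|\ \le\ \sum_{j=1}^{s}\Bigl(\tfrac32\Bigr)^{s-j}|\xi_j|\ \le\ 2^{1-b}\Bigl(\tfrac32\Bigr)^{s}\ \le\ 2^{2-b}\,b,
\]
since $s=\lceil\log_2 b\rceil$. This $O(b\,2^{-b})$ term already fits inside $(3/4)^{b-2m}\cdot b$ (because $2^{2-b}\le(3/4)^{b-2m}$ follows from $b\ge 2m\ge2$), and the factor $b$ in the target bound $(3/4)^{b-2m}(2+b+\log_2 b)$ is there precisely to absorb it. So your worry that one must keep truncation at $O(2^{-b})$ to ``preserve the $(3/4)^{b-2m}$ factor'' is unfounded: the doubly-exponential Newton convergence supplies $(3/4)^{b}$, and a merely $O(b\,2^{-b})$ truncation term is harmless. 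You can drop the delicate comparison with $p_j$ entirely and use the one-line Lipschitz estimate instead. (Incidentally, in your sketch $3p_1=9/8>1$ as well, so the amplification factor exceeds $1$ for the first \emph{two} steps, not just the first; this does not affect the $O(2^{-b})$ conclusion but is worth noting if you pursue that route.)
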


\begin{proof}
The overall procedure consists of two stages of Newton's iteration, as illustrated in Fig. \ref{fig-SQRToverall} above. We analyze each stage in turn.

Observe that the iteration $x_i = g_1(x_{i-1}):=-w\hat{x}_{i-1}^2 + 2\hat{x}_{i-1}$, $i=1,2,\dots,s_1$, $s_1=\lceil \log_2 b \rceil$, corresponds to Newton's iteration applied to the function $f_1\left(x\right) := \frac{1}{x}-w$  for approximating $\frac{1}{w}$, with initial guess $\hat{x}_0 = 2^{-p}$ where $p\in \nat$ and $2^p > w \geq 2^{p-1}$. 
It has been analyzed in detail in Theorem B.1 of \cite{Poisson}. Here, we briefly review some of the results. 
An efficient circuit for generating the initial state $2^{-p}$ is shown in Fig. \ref{fig-InitState1} above, similar to that in \cite{Poisson}.

Note that the approximations $x_i  < \frac{1}{w}$, i.e. the iteration underestimates $\frac{1}{w}$. Indeed, $g_1\left(x\right) - \frac{1}{w} = \frac{1}{w}\left(2xw - w^2x^2-1\right) = -\frac{1}{w}\left(wx-1\right)^2 < 0$.  Taking into account the truncation error, we have 
\begin{equation}
\label{eq:InverseError}
|\xh - \frac{1}{w} | \leq \left(we_0\right)^{2^{s_1}}\frac{1}{w} + 2^{-b}s_1  \leq  \left(\frac{1}{2}\right)^{2^{s_1}} + 2^{-b}s_1 =:E .
\end{equation}
This follows from the facts $w>1$ and $we_0 \leq \frac12$, where $e_0= |2^{-p} - \frac{1}{w}|$, and is shown in \cite{Poisson}.
 The first term in the upper bound corresponds to the error of Newton's iteration, 
while the second term is the truncation error.  
Now we turn to the second stage. 
Iteration $y_j = g_2(y_{j-1}):= \frac12(3y_{j-1} - \hat{x}_{s_1} y^3_{j-1} )$, $j=1,2,\dots,s_2$, $s_2= \lceil \log_2 b \rceil$, is obtained by using Newton's method to approximate the zero of the function $f_2\left(y\right)= \frac{1}{y^2}-\x$, with initial guess $\hat{y}_0 = 2^{\lfloor \frac{q-1}{2} \rfloor}$ where $q\in \nat$ and $2^{1-q} > \hat{x}_{s_1} \geq 2^{-q}$. An efficient circuit for generating the initial state $2^{\lfloor \frac{q-1}{2} \rfloor}$ is shown in Fig. \ref{fig-InitState2}. We have
\begin{eqnarray*}
g_2\left(y\right) - \frac{1}{\sqrt{\x}} &=& y - \frac{1}{\sqrt{\x}} + \frac{1}{2}\left(y-y^3\x\right)\\
&=&  y - \frac{1}{\sqrt{\x}} + \frac{\x y}{2}\left(\frac{1}{\x}-y^2\right)\\
&=&  \left(y- \frac{1}{\sqrt{\x}}\right)   \left(  1 - \frac12 \x y\left(y+ \frac{1}{\sqrt{\x}}\right)   \right)      \\     
&=& - \frac12   \left(y- \frac{1}{\sqrt{\x}}\right)  \left( y^2\x +y\sqrt{\x} -2\right)\\
&=&  - \frac12   \left(y- \frac{1}{\sqrt{\x}}\right)  \left(  \left(y\sqrt{\x} -1 \right)^2  + 3y\sqrt{\x} -3  \right)\\
&=&  - \frac12   \left(y- \frac{1}{\sqrt{\x}}\right)   \left(y- \frac{1}{\sqrt{\x}}\right) \sqrt{\x}   \left(y\sqrt{\x}  -1 + \frac{3y\sqrt{\x} -3}{y\sqrt{\x}  -1}\right) \\
&=&    - \frac12   \left(y- \frac{1}{\sqrt{\x}}\right)^2  \sqrt{\x} \left(y\sqrt{\x} +2\right) \\
\end{eqnarray*}
The last quantity is non-positive assuming $y\geq 0$.
The iteration function $g_2$ is non-decreasing in the interval $[0,\frac{1}{\sqrt{\x}}]$ and $g_2\left(0\right)=0$. 
Since $y_0 = 2^{-\lfloor \frac{q-1}{2} \rfloor} $, we get $y_1 \geq 0$, and inductively we see that all iterations produce positive numbers that are approximations underestimating $\frac{1}{\sqrt{\x}}$, i.e. $y_i\leq \frac{1}{\sqrt{\x}}$ for $i=0,1,2,\dots$.
Then
\begin{equation}
e_{i+1} := |y_{i+1} - \frac{1}{\sqrt{\x}} | = e_i^2 \frac12 \sqrt{\x} | y_i \sqrt{\x}  +2  | \leq \frac32 \sqrt{\x} e_i^2,
\end{equation}
since $| y_i \sqrt{\x}  +2  | \leq 3$ ($y_i$ underestimates $\xh^{-\frac12})$,
where $e_0 = \frac{1}{\sqrt{\x} } - 2^{-\lfloor \frac{q-1}{2} \rfloor}  \leq 2^{-\lfloor \frac{q-1}{2} \rfloor}$.

Let $A=\frac32 \sqrt{\x}$. We unfold the recurrence to obtain
$e_i \leq \frac{1}{A} {\left(Ae_0\right)}^{2^i} $, $i=1,2,\dots$. We have $\sqrt{\x} 2^{\lfloor \frac{q-1}{2} \rfloor} \geq 2^{-q/2} 2^{\lfloor \frac{q-1}{2} \rfloor} $. For $q$ odd, this quantity is bounded from below by $\frac{1}{\sqrt{2}}$, and for $q$ even this is bounded by $\frac12$. 
Thus $\sqrt{\x}e_0 = |\sqrt{\x} 2^{\lfloor \frac{q-1}{2} \rfloor}  - 1 | = 1 - \sqrt{\x} 2^{\lfloor \frac{q-1}{2} \rfloor}   \leq \frac12$ because we have selected the initial approximation $y_0$ to underestimate $\frac{1}{\sqrt{\x}}$. From this, we obtain $e_i \leq \frac{1}{A}\left(\frac{3}{4}\right)^{2^i}$.
Using equation (\ref{eq:InverseError}), we have that $\xh \geq \frac{1}{w}-E$. 
Without loss of generality, $wE \leq \frac12$. This will become apparent in a moment once we select the error parameters. Thus, $\frac1A \leq \frac23 \sqrt{2w}$. Therefore,
\begin{equation}
e_i \leq \frac{2}{3} \sqrt{2w} \left( \frac{3}{4} \right)^{2^i}, \;\; i=1,\dots,s_2.
\end{equation}

We now turn to the roundoff error analysis of the second stage of the algorithm. The iterations of the second stage of the algorithm would produce a sequence of approximations $y_1, \dots , y_{s_2}$ if we did not have truncation error. Since we truncate the result of each iteration to $b$ bits of accuracy before performing the next iteration, we have a sequence of approximations $\hat{y}_1, \dots , \hat{y}_{s_2}$, with
\begin{eqnarray*}
\hat{y}_{0} &=& {y}_{0}\\
\hat{y}_{1} &=& g\left(\hat{y}_{1}\right) + \xi_1\\
\vdots \\
\hat{y}_{i} &=& g\left(\hat{y}_{i}\right) + \xi_i\\
\vdots
\end{eqnarray*}
where $|\xi_i | \leq 2^{-b}$, $i\geq 1$.
Using the fact that $\sup_{x\geq 0} |g_2'\left(y\right)| \leq \frac32$, we obtain
\begin{eqnarray*}
|\hat{y}_{s_2}-y_{s_2}| &\leq& |g_2\left(\hat{y}_{s_2 -1}\right) - g_2\left(y_{s_2 -1}\right)| + |\xi_{s_2}|\\
 &\leq& \frac32| \hat{y}_{s_2 -1} - y_{s_2 -1}| + |\xi_{s_2}|\\
 &\vdots& \\
 &\leq& \sum_{j=1}^{s_2} \left(\frac32\right)^{s_2 -j} | \xi_j |
  \leq 2^{-b} \frac{ \left(\frac32\right)^{s_2} -1 }{\frac32 -1}\\
    &\leq& 2^{1-b} \left(\frac32\right)^{s_2}. 
\end{eqnarray*}
Therefore, the total error of the second stage of the algorithm is 
\begin{equation}
|\hat{y}_{s_2} - \frac{1}{\sqrt{\x}}| \leq     \frac{2}{3} \sqrt{2w} \left( \frac{3}{4} \right)^{2^{s_2}}  +  2^{1-b}\left(\frac32\right)^{s_2}.
\end{equation}
For $b \geq \max\{2m,4 \}$ and  $s_2 = \lceil \log_2 b \rceil$, and recall that $w\leq 2^m$. Then we have
\begin{eqnarray}
\label{eq:Stage2Error}
|\hat{y}_{s_2} - \frac{1}{\sqrt{\x}}| &\leq&   \frac{2}{3} \sqrt{2} \; 2^{\frac{m}2} \left( \frac{3}{4} \right)^{b}  +  2^{1-b}\left(\frac32\right)^{\log_2 b +1}   \nonumber \\
&\leq&  \frac{2}{3} \sqrt{2} \; \left(\sqrt{2}\right)^{m} \left( \left( \frac{3}{4} \right)^2\right)^{b/2}  +  2^{1-b}\; 2^{\log_2 b +1} \nonumber \\
&\leq&  \frac{2}{3} \sqrt{2} \; \left(   \frac{\sqrt{2}\;9}{16} \right)^m \left( \frac{9}{16} \right)^{b/2-m}  +  2^{2-b} \; b \nonumber \\
&\leq&  \left( \frac{3}{4} \right)^{b-2m}  +   2^{2-b} \; b,
\end{eqnarray}
Let us now consider the total error of our algorithm,
\begin{equation} \label{eq:TotalError}
|\hat{y}_{s_2}-\sqrt{w}| \leq | \hat{y}_{s_2}- \frac{1}{\sqrt{\x}}| + |  \frac{1}{\sqrt{\x}}-\sqrt{w}|
\end{equation}
We use equation (\ref{eq:Stage2Error}) above to bound the first term. For the second term we have
\begin{equation}  \label{eq:secondTerm}
 |  \frac{1}{\sqrt{\x}}-\sqrt{w}| \leq \frac12|\frac{1}{\x}-w| 
=   \frac12 \frac{w}{\x} |\x - \frac{1}{w}|
 \leq \frac12 \frac{w}{\x} E,
\end{equation}
where the first inequality follows from the mean value theorem ($|\sqrt{a} - \sqrt{b}| \leq \frac12 |a-b|$ for $a,b \geq 1$).
Since $\x \geq \frac{1}{w}-E$,
\begin{equation}
 \x^{-1} \leq \left(1 - \frac{1}{w} - E\right)^{-1} \leq 2w,
\end{equation}
for $wE \leq \frac12$. Then 
(\ref{eq:secondTerm}) becomes 
\begin{eqnarray}
 |  \frac{1}{\sqrt{\x}}-\sqrt{w}| &\leq& w^2 E \leq  w^2 \left( \left(\frac{1}{2}\right)^{2^{s_1}} + 2^{-b}s_1 \right) \nonumber\\
  &\leq&  2^{2m} \left( \left(\frac{1}{2}\right)^{2^{s_1}} + 2^{-b}s_1 \right) \nonumber\\
    &\leq&    2^{2m} \left(  2^{-b} + 2^{-b}  \lceil \log_2 b \rceil \right) \nonumber\\
        &\leq&    2^{2m-b}\left( 1+  \lceil \log_2 b \rceil \right),
\end{eqnarray}
where we set $s_1 = \lceil \log_2 b \rceil$. 
Combining this with equation (\ref{eq:TotalError}), 
\begin{eqnarray}
|\hat{y}_{s_2}-\sqrt{w}|     &\leq &  \left( \frac{3}{4} \right)^{b-2m}  +   2^{2-b} \; b+    2^{2m-b}  \left( 1+  \lceil \log_2 b \rceil \right) \nonumber\\
&\leq&   \left( \frac{3}{4} \right)^{b-2m}  \left( 2+ b + \log_2 b  \right),
\end{eqnarray}
and the error bound in the statement of the theorem follows for $s=s_1=s_2$.

Finally, for completeness we show that for $b\geq \max\{2m,4\}$, $wE\leq \frac12$. Indeed, 
\begin{eqnarray*}
wE &\leq& 2^m \left(  2^{-b} + 2^{-b}  \lceil \log_2 b \rceil \right)\\
 &\leq& 2^{-b+m}   \left( 2+  \log_2 b \right)\\
 &\leq& \left( \frac12\right)^{b/2-m}   \frac{2+ \log_2 b}{2^{b/2}}.
 \end{eqnarray*}
The first factor above is at most $\frac12$ since $b\geq \max\{2m,4\}$, while the second is at most $1$ and this completes the proof.
\end{proof}

\begin{theorem} 
\label{thm2}
For $w>1$, represented by $n$ bits of which the first $m$ correspond to its integer part, Algorithm \ref{alg:2krt} computes approximations $\hat z_1$,$\hat z_2$,\dots,$\hat z_k$ such that
\begin{equation} \label{eq:thm2}
|\hat{z}_{i}-w^{\frac{1}{2^i}}| \leq 2  \left( \frac{3}{4} \right)^{b-2m}  \left( 2+ b +  \log_2 b  \right), \; i=1,2,\dots, k, \rm{\ for\ any\ } k\in \nat.
\end{equation}
This is accomplished by repetitively calling Algorithm \ref{alg:sqrt}.  Each $z_i$ has $b\ge \max\{ 2m, 4\}$ bits after its decimal point, and by convention its integer part is $m$ bits long.
\end{theorem}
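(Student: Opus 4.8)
The plan is to prove the bound \eqref{eq:thm2} by induction on $i$, tracking how the worst-case error of Theorem~\ref{thm1} propagates through the $k$ successive calls to Algorithm~\ref{alg:sqrt}~SQRT. The base case $i=1$ is immediate: $\hat z_1$ is the output of SQRT$(w,n,m,b)$ applied to $w>1$, so Theorem~\ref{thm1} gives $|\hat z_1-\sqrt w|\le (3/4)^{b-2m}(2+b+\log_2 b)$, which is at most the right-hand side of \eqref{eq:thm2} with room to spare (the factor $2$ is not even needed yet). For the inductive step, I would write the error at level $i$ as the sum of a \emph{propagated} error and a \emph{fresh} error via the triangle inequality,
\begin{equation}
|\hat z_i - w^{1/2^i}| \le \big|\hat z_i - \sqrt{\hat z_{i-1}}\big| + \big|\sqrt{\hat z_{i-1}} - \sqrt{w^{1/2^{i-1}}}\big|,
\end{equation}
where the first term is the error of the call SQRT$(\hat z_{i-1}, m+b, m, b)$ on the input $\hat z_{i-1}$, and the second term measures how the error already present in $\hat z_{i-1}$ is damped by taking a square root.

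For the first term I would invoke Theorem~\ref{thm1} again, applied to the input $\hat z_{i-1}$ (which is $>1$ and is represented with $m$ integer bits and $b$ fractional bits), yielding the same bound $(3/4)^{b-2m}(2+b+\log_2 b)$. For the second term I would use that $x\mapsto\sqrt x$ is a contraction on $[1,\infty)$ — specifically $|\sqrt a-\sqrt b|\le \tfrac12|a-b|$ for $a,b\ge1$, exactly the mean value theorem estimate already used in the proof of Theorem~\ref{thm1} — so that
\begin{equation}
\big|\sqrt{\hat z_{i-1}} - \sqrt{w^{1/2^{i-1}}}\big| \le \tfrac12\,\big|\hat z_{i-1} - w^{1/2^{i-1}}\big| \le \tfrac12\cdot 2\left(\tfrac34\right)^{b-2m}(2+b+\log_2 b)
\end{equation}
by the induction hypothesis. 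Adding the two contributions gives
\begin{equation}
|\hat z_i - w^{1/2^i}| \le \left(\tfrac34\right)^{b-2m}(2+b+\log_2 b) + \left(\tfrac34\right)^{b-2m}(2+b+\log_2 b) = 2\left(\tfrac34\right)^{b-2m}(2+b+\log_2 b),
\end{equation}
which closes the induction. This is the telescoping-with-damping phenomenon: each new square root halves the inherited error but adds a fresh error term of the same size, so the total is bounded by a geometric series $\sum_{j\ge 0} 2^{-j}$ times the single-step error, hence the clean factor of $2$.

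The one point that needs care — and is the main (minor) obstacle — is verifying that the hypotheses of Theorem~\ref{thm1} genuinely hold at each level, in particular that $\hat z_{i-1} > 1$ (so $w^{1/2^{i-1}}>1$ and its computed approximation is $\ge 1$ and can legitimately be fed to SQRT as an input with a nonempty integer part), and that the register sizes used in Algorithm~\ref{alg:2krt} ($m$ integer bits, $b$ fractional bits, with $b\ge\max\{2m,4\}$) are exactly those required by Theorem~\ref{thm1}. Since $w>1$ implies every root $w^{1/2^{i}}>1$, and since SQRT is monotone enough that its output on an input $>1$ underestimates but stays $\ge 1$ (indeed SQRT returns $1$ on input $1$ and its Newton iteration for the square root is increasing), the condition $\hat z_{i-1}>1$ is maintained; one should also note the edge case where $\hat z_{i-1}$ could in principle be truncated down to exactly $1$, in which case the trivial branch of SQRT returns $1$ and the bound still holds trivially. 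With these observations the induction goes through and the theorem follows. $\qquad\blacksquare$
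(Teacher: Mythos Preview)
Your proposal is correct and is essentially the paper's own proof: the paper splits $|\hat z_k - w^{1/2^k}|$ via the same triangle inequality, applies the same contraction estimate $|\sqrt a-\sqrt b|\le\tfrac12|a-b|$ for $a,b\ge1$, and unrolls the resulting recursion into the geometric series $\sum_{j\ge0}2^{-j}$ times the single-step SQRT error from Theorem~\ref{thm1}, which is exactly your induction with the invariant $2E$. Your added care in checking that $\hat z_{i-1}\ge 1$ (so that both Theorem~\ref{thm1} and the contraction estimate legitimately apply at each stage) is a detail the paper leaves implicit.
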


\begin{proof}
We have 
\begin{eqnarray*}
\hat z_1 &=& \sqrt{w} + \xi_1  \\
\hat z_2&=& \sqrt{z_1} + \xi_2  \\
&\vdots& \\
\hat z_k &=& \sqrt{z_k} + \xi_k. 
\end{eqnarray*}
Since $\hat z_i$ is obtained using Algorithm \ref{alg:sqrt} with input $\hat z_{i-1}$, we use Theorem \ref{thm1} to obtain $|\xi_i| \leq \left( \frac{3}{4} \right)^{b-2m}  \left( 2+ b + \log_2 b  \right)$, $i=1,2,\dots,k$. 
We have
\begin{eqnarray*}
|\hat z_k - w^{\frac{1}{2^k}}| &\leq& |\sqrt{\hat z_{k-1}} - w^{\frac{1}{2^k}}| + |\xi_k|\\
&\leq& \frac12 |\hat z_{k-1} - w^{\frac{1}{2^{k-1}}}| + |\xi_k|\\
&\vdots& \\
&\leq& \sum_{j=0}^{k-1}  \frac{1}{2^j} |\xi_{k-j}| \\
&\leq& 2  \left( \frac{3}{4} \right)^{b-2m}  \left( 2+ b + \log_2 b  \right), 
\end{eqnarray*}
where the second inequality above follows again from $|\sqrt{a} - \sqrt{b}| \leq \frac12 |a-b|$ for $a,b \geq 1$.
\end{proof}

\begin{theorem} 
\label{thm3}
For $w>1$, represented by $n$ bits of which the first $m$ bits correspond to its integer part, Algorithm \ref{alg:ln} computes an approximation ${\hat{z}:= \hat{z}_p + (p - 1) r}$ of $\ln w$, where $2^p > w \geq 2^{p-1}$, $p\in \nat$, and $|r-\ln 2|\le 2^{-b}$, such that 
$$ |  \hat{z}  - \ln w | \leq   \left(\frac{3}{4}\right)^{5\ell/2} \left( m+ \frac{32}{9} + 2\left(\frac{32}{9} + \frac{n}{\ln 2} \right)^3 \right),$$
where $\ell\ge \lceil \log_2 8 n\rceil$ is a parameter specified in the algorithm that is used to determine the number $b=\max\{5\ell, 25\}$ of bits after the decimal point in which arithmetic is be performed, and from that the error. 
\end{theorem}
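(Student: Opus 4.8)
The plan is to reduce everything to the exact identity $\ln w = (p-1)\ln 2 + \ln w_p$, where $w_p = 2^{1-p}w \in [1,2)$ is the shifted input, together with $\ln w_p = 2^\ell\ln\!\big(w_p^{1/2^\ell}\big)$. Since the algorithm returns $\hat z = z_p + (p-1)r$ with $z_p = 2^\ell\hat y_p$ obtained by an exact shift, this gives the clean decomposition
\begin{equation*}
\hat z - \ln w = \big(2^\ell\hat y_p - \ln w_p\big) + (p-1)(r - \ln 2).
\end{equation*}
Because $2^{p-1}\le w < 2^m$ forces $p\le m$, and $|r-\ln 2|\le 2^{-b}$, the second summand is at most $(m-1)2^{-b}$; as $2^{-b}\le 2^{-5\ell}\le (3/4)^{5\ell/2}$, this already supplies the $m$ term of the claimed bound (and the degenerate case $w_p=1$, where the algorithm returns $(p-1)r$ directly, is covered identically). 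Everything then reduces to bounding $\big|2^\ell\hat y_p - \ln w_p\big| = 2^\ell\big|\hat y_p - \ln(1+\delta^\ast)\big|$, where $1+\delta^\ast := w_p^{1/2^\ell}$ is the \emph{exact} root and $\hat t_p = 1+\delta$ is the value returned by PowerOf2Roots in Step~15.

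I would split $\hat y_p - \ln(1+\delta^\ast)$ into three pieces: (i) the fixed-precision truncation incurred in forming $\hat y_p = (\hat t_p-1) - \tfrac12(\hat t_p-1)^2$ from $\hat t_p$ — a bounded number of $2^{-b}$-sized errors; (ii) the Taylor truncation $\ln(1+\delta) - \big(\delta - \tfrac12\delta^2\big)$, which for $\delta\in[0,1)$ is an alternating series bounded by $\delta^3/\big(3(1-\delta)\big)$; and (iii) the PowerOf2Roots error, $\ln(1+\delta) - \ln(1+\delta^\ast) = \ln\frac{1+\delta}{1+\delta^\ast}$, which since $1+\delta,1+\delta^\ast\ge 1$ is at most $|\delta-\delta^\ast| = \big|\hat t_p - w_p^{1/2^\ell}\big|$, controlled by Theorem~\ref{thm2} applied with $m\leftarrow 1$ and $b=\max\{5\ell,25\}$. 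Each piece is multiplied by $2^\ell$ in passing to $z_p$; the point of the generous choice $b=\max\{5\ell,25\}$ is exactly that in (iii) the geometric factor $(3/4)^{b}$ beats both the $2^\ell$ amplification and the polynomial $2+b+\log_2 b$ uniformly over all admissible $\ell\ge\lceil\log_2 8n\rceil$, leaving a residual of order $(3/4)^{5\ell/2}$, and that $2^{\ell-b}\le 2^{-4\ell}$ from (i) is of even lower order.

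The crux is piece (ii). Here I would first record the a priori bound $2^\ell\delta \le \tfrac{32}{9} + \tfrac{n}{\ln 2}$: the exact part obeys $2^\ell\delta^\ast = 2^\ell\big(w_p^{1/2^\ell}-1\big)\le \ln w_p\cdot w_p^{1/2^\ell}$ (from $e^t-1\le te^t$ with $t=\ln(w_p)/2^\ell$), which is an absolute constant below $n/\ln 2$, while $2^\ell|\delta-\delta^\ast|$ is exactly the already-controlled quantity from (iii), below $\tfrac{32}{9}$. Since $w_p<2$ and $\ell\ge 3$ already give $\delta^\ast<2^{1/2^\ell}-1<\tfrac12$ and the PowerOf2Roots error is tiny, $\delta<\tfrac12$, so $1/(1-\delta)\le 2$; hence the $2^\ell$-scaled version of (ii) is at most $\tfrac{2}{3}\big(\tfrac{32}{9}+\tfrac{n}{\ln 2}\big)^3 2^{-2\ell}\le 2\big(\tfrac{32}{9}+\tfrac{n}{\ln 2}\big)^3(3/4)^{5\ell/2}$, using $2^{-2\ell}\le (3/4)^{5\ell/2}$. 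Summing the three scaled pieces with the $(p-1)(r-\ln 2)$ term and collecting everything under the common factor $(3/4)^{5\ell/2}$ produces the stated estimate.

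I expect the main obstacle to be precisely the constant bookkeeping in piece (iii): showing that $2^\ell$ times the PowerOf2Roots bound of Theorem~\ref{thm2} collapses to a term of the form $\tfrac{32}{9}(3/4)^{5\ell/2}$ requires handling the floor $b=25$ (small $\ell$) and the regime $b=5\ell$ (large $\ell$) separately and checking that the polynomial-in-$\ell$ factor is absorbed by the gap between $(3/4)^{b}2^\ell$ and $(3/4)^{5\ell/2}$ across the whole range $\ell\ge\lceil\log_2 8n\rceil$; the same care is needed to discharge the auxiliary $e^t-1\le te^t$-type estimates and the side conditions (such as the $wE\le\tfrac12$-style inequalities inherited from Theorems~\ref{thm1} and~\ref{thm2}) uniformly over the admissible parameters.
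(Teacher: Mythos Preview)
Your proposal is correct and takes essentially the same approach as the paper: the same splitting into $(p-1)(r-\ln 2)$ plus $2^\ell\big|\hat y_p-\ln w_p^{1/2^\ell}\big|$, followed by the same three-piece decomposition (truncation of $y_p$, Taylor remainder at $\hat t_p$, PowerOf2Roots error via Theorem~\ref{thm2} with $m\leftarrow 1$), and the same reduction to the two inequalities $(\tfrac34)^{b/2}2^\ell\le 1$ and $(\tfrac34)^{b/2}(2+b+\log_2 b)\le 1$ that the choice $b=\max\{5\ell,25\}$ guarantees. One genuine simplification in your version is the bound on $\delta^\ast=w_p^{1/2^\ell}-1$: you use $e^t-1\le te^t$ with $t=(\ln w_p)/2^\ell$ to get $2^\ell\delta^\ast\le(\ln w_p)\,w_p^{1/2^\ell}<2\ln 2$ in one line, whereas the paper Taylor-expands $(1+x_p)^{1/2^\ell}$ term by term and then invokes the bit-length of $w$ to bound $\ln\tfrac{1}{1-x_p}$; your shortcut actually yields an $n$-independent constant at this step (you only claim the weaker $n/\ln 2$ to match the stated bound), and it avoids the somewhat delicate series manipulation in the paper.
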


\begin{proof}
An overall illustration of the algorithm is given in Fig. \ref{fig-LogOverall}.
 
Our algorithm utilizes the identity $\ln w = \ln2 \log_2 w$, as well as other common properties of logarithms. For completeness, an example circuit for computing $p$ is given in Fig. \ref{fig-ShiftInteger} above. We proceed in detail. 

If the clause of the second \textit{if} statement evaluates to true, 
in the case $w=2^{p-1}$, 
then the algorithm sets $z_p = 0$ and 
returns $(p-1)r$. This quantity approximates $\ln 2^{p-1}$ with error bounded from above by $(m-1)2^{-b}$, since $p\le m$ in the algorithm.  
We deal with the case 
$w$ not a power of $2$, for which  $z_p\ne 0$.  
Using the same notation as the algorithm, we have
\begin{eqnarray}
\label{eq:lnErrTot}
 | z_p + (p -1)r - \ln(w)| &\le& |z_p + \ln 2^{p-1} - \ln(w)| + |(p-1)r - \ln 2^{p-1}| \nonumber\\
 &=& |z_p - \ln 2^{-(p-1)}w| +(m-1) 2^{-b}   \nonumber \\ 
&\leq&  |z_p - 2^\ell \ln w_p^{\frac{1}{2^\ell}}| + (m-1) 2^{-b}\nonumber \\
 &\leq&  |z_p -   2^{\ell} \ln \hat{t}_p | + | 2^\ell \ln \hat{t}_p- 2^\ell \ln w_p \frac{1}{2^\ell}| + (m-1)2^{-b}\nonumber\\
 &\leq&  | 2^{\ell} \hat{y}_p  - 2^{\ell} y_p | +  | 2^{\ell} y_p - 2^\ell \ln \hat{t}_p| \\
&\qquad& \qquad+   | 2^\ell \ln \hat{t}_p- 2^\ell \ln w_p^{\frac{1}{2^\ell}}| +(m-1)2^{-b}\nonumber,
\end{eqnarray}
where $w_p = 2^{1-p}w$,   
$z_p = 2^\ell y_p$, 
$\hat{z}_p = 2^\ell \hat{y}_p$ and 
by $y_p$ we denote the value $(\hat{t}_p-1) - \frac12 (\hat{t}_p -1)^2$ before it is truncated to obtain $\hat{y}_p$. The first term is due to truncation error and is bounded from above by $2^{\ell-b}$. The last term is bounded from above by $2^\ell |\hat{t}_p - w_p^{\frac{1}{2^\ell}}|;$ this is obtained using the mean value theorem for $\ln$ with argument greater than or equal to one. The second term is bounded from above by $2^{\ell}|2(\hat{t}_p -1)^3|$. Indeed, recall that $\hat{t}_p - 1 = \delta \in (0,1)$ according to line 12 of the algorithm, and we have 
\begin{eqnarray*}
| \ln (1+ \delta)| - (\delta - \frac{1}{2} \delta^2) | 
&=& \left |\sum_{i=3}^\infty (-1)^{i+1}\frac{\delta^{i}}{i} \right| 
= \left| \delta^2 \sum_{i=0}^{\infty} \frac{i+1}{i+3} \int_0^\delta (-x)^i dx \right|\\
&=& \left| \delta^2 \int_0^\delta \sum_{i=0}^{\infty} \frac{i+1}{i+3} (-x)^i dx \right|
\leq  \delta^2 \int_0^\delta \sum_{i=0}^{\infty} |x|^i dx \\
&\leq&  \delta^2 \int_0^\delta \frac{1}{1- |x|} dx
\leq  \delta^2 \int_0^\delta 2 dx = 2\delta^3,
\end{eqnarray*}
assuming $\delta \leq \frac12$.
We now show that in general $\delta$ is much smaller than $\frac12$.
Since we have used Algorithm \ref{alg:2krt} to compute $\hat{t}_p$ which is an approximation of $w_p^{\frac{1}{2^\ell}}$. Algorithm \ref{alg:2krt} uses  Algorithm \ref{alg:sqrt}.
Since the error bounds of Theorem \ref{thm1} and \ref{thm2} depend on the magnitude of $w_p$, the estimates of these theorems hold with $m=1$ because $w_p \in (1,2)$. We have 
$$|\hat{t}_p - w_p^{\frac{1}{2^\ell}}| \leq 2  \left( \frac{3}{4} \right)^{b-2}  \left( 2+ b +  \log_2 b  \right) \leq \frac18,$$
where we have set $b = \max\{5\ell,25\}$. Thus 
\begin{equation}
\label{eq:tp}
\hat{t}_p \leq w_p^{\frac{1}{2^\ell}} + 2  \left( \frac{3}{4} \right)^{b-2}  \left( 2+ b +  \log_2 b  \right) \leq  w_p^{\frac{1}{2^\ell}} + \frac18.
\end{equation}
Now we turn to $w_p^{\frac{1}{2^\ell}}$. Let $w_p = 1+ x_p$, $x_p \in (0,1)$. Consider the function $g(x_p) := (1+x_p)^{\frac{1}{2^\ell}}$. We take its Taylor expansion about $0$, and observing that
 $$g^{(j)}(x_p) = \left(\prod_{i=0}^{j-1} \frac{1-i2^\ell }{2^{\ell}}  \right)\; (1+x_p)^{\frac{1-j2^\ell}{2^\ell}} \leq \frac{(j-1)!}{2^\ell}, \;\; j \geq 1,$$
we have 
\begin{eqnarray*}
w_p^{\frac{1}{2^\ell}} - 1 = (1+x_p)^{\frac{1}{2^\ell}} - 1 &\leq& 
\frac{1}{2^\ell}\sum_{j=1}^\infty  \frac{(j-1)!}{j!} x_p ^j = \frac{1}{2^\ell}\sum_{j=1}^\infty  \frac1j x_p ^j\\
&=& \frac{1}{2^\ell} \sum_{j=1}^\infty \int_0^{x_p} s^{j-1} ds = 
\frac{1}{2^\ell} \int_0^{x_p}  \sum_{j=1}^\infty s^{j-1} ds \\
&\leq&  \frac{1}{2^\ell} \int_0^{x_p} \frac{1}{1-s}ds = \frac{1}{2^\ell} \ln{\frac{1}{1-x_p}}\\
&\leq& \frac{1}{2^\ell}\frac{n-m+p-1}{\log_2 e} \leq \frac{n-m+p-1}{2^\ell} , 
\end{eqnarray*}
because $x_p \leq 1 - 2^{-(n-m+p-1)}$, since $w$ has a fractional part of length $n-m$.
Observing that $n-m+p-1 \leq n$, and by setting $\ell \geq \lceil \log_2 8n \rceil$, we get 
\begin{equation}
\label{eq:wp}
w_p^{\frac{1}{2^\ell}} - 1 \leq \frac{n}{2^\ell} \leq \frac{n}{8n}\leq \frac18.
\end{equation}
Using equations (\ref{eq:tp}) and (\ref{eq:wp}), we get
\begin{equation}
\label{eq:tp2}
\hat{t}_p -1 \leq  \frac{1}{2^\ell} \frac{n}{\ln 2} + 2  \left( \frac{3}{4} \right)^{b-2}  \left( 2+ b +  \log_2 b  \right) \leq \frac14.
\end{equation}
Now we turn to the error of the algorithm. We have
\begin{eqnarray*}
 | z_p + \ln 2^{p -1} - \ln(w)| 
 &\leq&  | 2^{\ell} \hat{y}_p  - 2^{\ell} y_p | +  | 2^{\ell} y_p - 2^\ell \ln \hat{t}_p| +   | 2^\ell \ln \hat{t}_p- 2^\ell \ln w_p^{\frac{1}{2^\ell}}|\\
 &\leq & 2^{\ell - b}  + 2^{\ell}|2(\hat{t}_p -1)^3|   +  2^\ell |\hat{t}_p - w_p^{\frac{1}{2^\ell}}|\\
  &\leq & 2^{\ell - b}  
  + 2^{\ell}2\left( \frac{1}{2^\ell} \frac{n}{\ln 2} + 2  \left( \frac{3}{4} \right)^{b-2}  \left( 2+ b +  \log_2 b  \right) \right)^3\\ 
  &+&  2^\ell  \left(2  \left( \frac{3}{4} \right)^{b-2}  \left( 2+ b +  \log_2 b  \right)\right) 
\end{eqnarray*}
Since $b\geq \max\{5\ell,25\}$, we have
$\left( \frac{3}{4}\right)^{b/2} 2^\ell \leq 1$ and
$ \frac{2+b+\log_2 b}{\left( \frac43 \right)^{b/2}} \leq 1$. This yields
\begin{eqnarray}
\label{eq:lnErr1}
 | z_p + \ln 2^{p -1} - \ln(w)| 
&\leq& 
2^\ell \left( 2^{-b} +  \frac{2}{2^{3\ell}} \left(\frac{32}{9} + \frac{n}{\ln 2}   \right)^3 + \frac{32}{9} \left(\frac{3}{4}\right)^{b/2}    \right)   \nonumber \\
&\leq& 2^{-4\ell} + \frac{2}{2^{2\ell}}\left(\frac{32}{9} + \frac{n}{\ln 2}   \right)^3 + \frac{32}{9} \left(\frac{3}{4}\right)^{5\ell/2  \nonumber }\\
&\leq&  \left(\frac{3}{4}\right)^{5\ell/2} \left( 1+ \frac{32}{9} + 2\left(\frac{32}{9} + \frac{n}{\ln 2} \right)^3 \right)
  \end{eqnarray} 
Finally, from $b\geq 5\ell$ we have 
\begin{equation}
\label{eq:lnErr2}
(m-1) 2^{-b} \leq (m-1) 2^{-5\ell} \leq (m-1)\left( \frac{1}{4}\right)^{5\ell/2} \leq (m-1)\left( \frac{3}{4}\right)^{5\ell/2}
\end{equation}
Combining equations (\ref{eq:lnErrTot}), (\ref{eq:lnErr1}), and (\ref{eq:lnErr2}) then gives
\begin{eqnarray*}
 | z_p + (p -1)r - \ln(w)| &\leq&  \left(\frac{3}{4}\right)^{5\ell/2} \left( 1+ \frac{32}{9} + 2\left(\frac{32}{9} + \frac{n}{\ln 2} \right)^3 \right)+ (m-1)\left( \frac{3}{4}\right)^{5\ell/2}\\
  &\leq& \left(\frac{3}{4}\right)^{5\ell/2} \left( m+ \frac{32}{9} + 2\left(\frac{32}{9} + \frac{n}{\ln 2} \right)^3 \right),
\end{eqnarray*}
which is our desired bound.
\end{proof}

\begin{theorem} 
\label{thm4}
For $w>1$, given by $n$ bits of which the first $m$ correspond to its integer part, and $1\geq f \geq0$ given by $n_f$ bits of accuracy, Algorithm \ref{alg:fracPower} computes an approximation $\hat{z}$ of $w^{f}$ such that
\begin{equation}
|\hat{z} - w^{f}|  \leq \left( \frac{1}{2} \right)^{\ell  - 1  } , 
\end{equation}
where $\ell \in \nat$ is a parameter specified in the algorithm that is used to determine the number 
 $b =\max \{ n,n_f, \lceil 5(\ell + 2m +\ln n_f)\rceil, 40 \}$
of bits after the decimal point in which arithmetic will be performed, and therefore it determines the error.
Algorithm \ref{alg:fracPower} uses Algorithm \ref{alg:2krt} which computes power of 2 roots of a given number. 
\end{theorem}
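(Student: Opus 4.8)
The plan is to track how the error in Algorithm~\ref{alg:fracPower} accumulates as it multiplies together the computed power‑of‑two roots. Write $f=\sum_{i=1}^{n_f}f_i 2^{-i}$ with $f_i\in\{0,1\}$, set $\mathcal P=\{i:f_i=1\}=\{i_1<\cdots<i_r\}$ with $r=|\mathcal P|\le n_f$, so that the exact value is $w^f=\prod_{j=1}^{r}w^{1/2^{i_j}}$. The algorithm calls Algorithm~\ref{alg:2krt} PowerOf2Roots with $b=\max\{n,n_f,\lceil 5(\ell+2m+\ln n_f)\rceil,40\}\ge\max\{2m,4\}$, so by Theorem~\ref{thm2} the returned approximations $\hat w_{i_j}$ of $w^{1/2^{i_j}}$ satisfy $|\hat w_{i_j}-w^{1/2^{i_j}}|\le\eta$, where $\eta:=2\left(\tfrac34\right)^{b-2m}\!\left(2+b+\log_2 b\right)$; since $w^{1/2^{i_j}}\ge 1$, this lets me write $\hat w_{i_j}=w^{1/2^{i_j}}(1+\epsilon_j)$ with $|\epsilon_j|\le\eta$. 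The algorithm then forms the running truncated product: $\hat z^{(0)}=1$, $\hat z^{(j)}=\hat z^{(j-1)}\hat w_{i_j}-\tau_j$ with $0\le\tau_j\le 2^{-b}$, and returns $\hat z=\hat z^{(r)}$.

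First I would unfold the recursion. A one‑line induction gives
\begin{equation*}
\hat z=\prod_{j=1}^{r}\hat w_{i_j}\;-\;\sum_{j=1}^{r}\tau_j\prod_{k=j+1}^{r}\hat w_{i_k}.
\end{equation*}
The point that makes the whole estimate work is that the exact partial products are controlled by $w^f$: $\prod_{k>j}w^{1/2^{i_k}}=w^{\sum_{k>j}2^{-i_k}}\le w^f\le w<2^m$. Using $\prod_{k>j}\hat w_{i_k}=\bigl(\prod_{k>j}w^{1/2^{i_k}}\bigr)\prod_{k>j}(1+\epsilon_k)\le 2^m(1+\eta)^{r}$, I obtain
\begin{equation*}
|\hat z-w^f|\;\le\;w^f\Bigl|\prod_{j=1}^r(1+\epsilon_j)-1\Bigr|+\sum_{j=1}^{r}\tau_j\prod_{k=j+1}^{r}\hat w_{i_k}\;\le\;2^m\bigl((1+\eta)^r-1\bigr)+n_f\,2^{m-b}(1+\eta)^r.
\end{equation*}
So the task reduces to showing each of these two terms is at most $2^{-\ell}$.

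Next I would estimate $\eta$ via the defining inequality $b\ge 5(\ell+2m+\ln n_f)$, which yields both $b-2m\ge 5\ell+8m+5\ln n_f$ and $b-2m\ge\tfrac45 b$ (the latter since $m\le b/10$); together with $b\ge 40$ this makes $\eta$, and hence $n_f\eta$, small enough that $(1+\eta)^r-1\le e^{n_f\eta}-1\le 2n_f\eta$. Substituting $\eta$ and distributing the exponent $b-2m$, one writes $2^m\cdot 2n_f\eta=4(2+b+\log_2 b)\,\bigl[2^m(\tfrac34)^{8m}\bigr]\bigl[n_f(\tfrac34)^{5\ln n_f}\bigr](\tfrac34)^{5\ell/2}\,(\tfrac34)^{\,b-5(\ell+2m+\ln n_f)+5\ell/2}$ (the four exponents sum to $b-2m$), and then uses $2(\tfrac34)^8<1$, $1+5\ln(\tfrac34)<0$ and $(\tfrac34)^{5/2}<\tfrac12$: the first bracket is $(0.2)^m\le1$, the second is $n_f^{-0.44}\le1$, the factor $(\tfrac34)^{5\ell/2}\le2^{-\ell}$ supplies exactly the target, and the remaining nonnegative exponent, together with the two brackets, absorbs the polynomial factor $4(2+b+\log_2 b)$. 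The truncation term $n_f 2^{m-b}(1+\eta)^r$ is at most $2^{-\ell}$ as well, since $b\ge n_f$ and $b\ge 10m$. Adding the two bounds gives $|\hat z-w^f|\le 2^{1-\ell}=(1/2)^{\ell-1}$.

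The main obstacle is the error amplification in the product of up to $n_f$ factors: a naive argument would replace each $\hat w_{i_j}$ by its own $2^{m}$‑type upper bound and blow up like $2^{m n_f}$, so one must instead observe that \emph{the product of the roots equals $w^f<2^m$}, which confines the amplification to $(1+\eta)^r-1\le 2n_f\eta$. The only other delicate point is purely arithmetic: because the specification of $b$ couples $\ell$, $m$, $n_f$ (and $n$), the final estimate must be carried out keeping the factors $2^m(\tfrac34)^{8m}$, $n_f(\tfrac34)^{5\ln n_f}$ and $(\tfrac34)^{b-5(\ell+2m+\ln n_f)}$ intact rather than bounding them crudely by $1$, and distinguishing which of $n$, $n_f$, $5(\ell+2m+\ln n_f)$ or $40$ attains the maximum defining $b$; these are routine case checks rather than a conceptual difficulty.
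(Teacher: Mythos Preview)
Your proposal is correct and follows essentially the same route as the paper's proof. Both arguments split the error as
\[
|\hat z - w^f|\;\le\;\Bigl|\hat z-\prod_{i\in\mathcal P}\hat w_i\Bigr|\;+\;\Bigl|\prod_{i\in\mathcal P}\hat w_i-\prod_{i\in\mathcal P}w_i\Bigr|,
\]
use Theorem~\ref{thm2} to control each factor $|\hat w_i-w_i|\le\eta$, exploit the key observation that the exact partial products are bounded by $w^f\le 2^m$ (which is exactly the ``main obstacle'' you identify), and then verify numerically that the choice of $b$ makes each term at most $2^{-\ell}$.

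The only cosmetic differences are: (i) you write $\hat w_{i_j}=w^{1/2^{i_j}}(1+\epsilon_j)$ and bound $|\prod(1+\epsilon_j)-1|$ directly, whereas the paper splits $\mathcal P=\mathcal P_1\cup\mathcal P_2$ according to the sign of $\hat w_i-w_i$ and arrives at the same bound $w^f(e^{p\eta}-1)$; (ii) for the truncation sum the paper uses the sharper estimate $\prod_{k>j}\hat w_{i_k}\le w^{1/4}$ (since $i_k\ge k$ forces the remaining exponent to be at most $1/4$), giving $2^{-b}(p-1)w^{1/4}$ rather than your $n_f 2^{m-b}(1+\eta)^r$; both bounds suffice. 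Your final numerical step---showing that the brackets $2^m(3/4)^{8m}$, $n_f(3/4)^{5\ln n_f}$ together with the leftover exponent absorb $4(2+b+\log_2 b)$---is stated somewhat tersely and would benefit from an explicit case check (when $b$ equals $\lceil 5(\ell+2m+\ln n_f)\rceil$ the leftover exponent alone is only about $5\ell/2$, so one really does need the bracket $0.2^m$ to kill the polynomial), but the claim is true and the paper's corresponding passage is comparably brief.
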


\begin{proof}
First observe that the algorithm is exact for the cases $f=1$ or $f=0$. Therefore, without loss of generality assume $0<f<1$.

Consider the $n_f$ bit number $f$ and write its binary digits $f_i \in \{0,1\}$ explicitly as $f=0.f_1f_2...f_{n_f} = \sum_{i=1}^{n_f} f_i /2^i$. Denote the set of non-zero digits $\p: =  \{ 1\leq i \leq n_f : f_i \neq 0 \} $, $p:= |\p|$ with $1\leq p \leq n_f$.  
 Observe 
$$w^{f} = w^{0.f_1f_2...f_{n_f}} = w^{ \sum_{i=1}^{n_f} f_i /2^i} = \prod_{i=1}^{n_f} w_i^{f_i}  = \prod_{i\in \p} w_i, $$  
where again $w_i := w^\frac{1}{2^i}$. Let $\hat{w}_i \simeq w^\frac{1}{2^i}$ denote the outputs of Algorithm \ref{alg:2krt}. 
We have 
\begin{equation}    \label{fracPowTotErr}
\left|\hat{z} - w^{f}\right|  \leq \left|\hat{z} - \prod_{i\in \p} \hat{w}_i \right|  + \left|   \prod_{i\in \p} \hat{w}_i -  \prod_{i\in \p} w_i \right| ,
\end{equation}
where these terms give bounds for the repeated multiplication error, and the error from the $2^i$th roots as computed by Algorithm \ref{alg:2krt}, respectively.

Consider the second term. Partition $\p$ disjointly into two sets as $\p_1 : =  \{ i\in \p: \hat{w}_i  \geq w_i \} $ and  $\p_2 : =  \{ i\in \p: \hat{w}_i<w_i \} $. Observe that, for any $i$, from equation (\ref{eq:thm2}) of Theorem \ref{thm2} we have $|\hat{w}_i - w_i| \leq 2  \left( \frac{3}{4} \right)^{b-2m}  \left( 2+ b +  \log_2 b  \right) =: \e$. Observe $w_i, \hat{w}_i \geq 1$. 
First assume $\prod_{i\in \p} \hat{w}_i \geq \prod_{i\in \p} w_i $. Then
\begin{eqnarray*}
\left|   \prod_{i\in \p} \hat{w}_i -  \prod_{i\in \p} w_i \right|  
&=&  \prod_{i\in \p_1} \hat{w}_i \prod_{j\in \p_2} \hat{w}_j  -  \prod_{i\in \p_1} w_i  \prod_{j\in \p_2} w_j \\
&\leq&  \prod_{i\in \p_1} (w_i + \e) \prod_{j\in \p_2} w_j  -  \prod_{i\in \p_1} w_i  \prod_{j\in \p_2} w_j\\
&\leq&  \prod_{i\in \p_1} w_i  \prod_{j\in \p_2} w_j \left(  \prod_{k\in \p_1} \left(    1 + \frac{\e}{w_k}  \right)  - 1    \right)\\
&\leq&  w^f  \left(  \left(1+\e)^p -1 \right)\right) \\
&\leq&  w^f  \left( e^{p\e} -1 \right) 
\end{eqnarray*}
Conversely, assume $\prod_{i\in \p} \hat{w}_i < \prod_{i\in \p} w_i $. Then similarly we have
\begin{eqnarray*}
\left|   \prod_{i\in \p} \hat{w}_i -  \prod_{i\in \p} w_i \right|  
&=&   \prod_{i\in \p_1} w_i  \prod_{j\in \p_2} w_j - \prod_{i\in \p_1} \hat{w}_i \prod_{j\in \p_2} \hat{w}_j  \\
&\leq&   \prod_{i\in \p_1} w_i  \prod_{j\in \p_2} w_j - \prod_{i\in \p_1} w_i \prod_{j\in \p_2} (w_j-\e) \\
&\leq&  \prod_{i\in \p_1} w_i  \prod_{j\in \p_2} w_j \left( 1 - \prod_{k\in \p_2} \left(    1 - \frac{\e}{w_k}  \right)     \right)\\
&\leq&  w^f  \left( 1- \left(1-\e)^p \right)\right) \\
&\leq&  w^f  \left( 1- \left(  2 - e^{p\e}   \right)\right) \\
&\leq&  w^f  \left( e^{p\e} -1 \right) ,
\end{eqnarray*}
where we have used the inequality $\left(1-\e\right)^p -1 \geq 1- e^{pe}$.\footnote{This inequality follows trivially from term by term comparison of the binomial expansion of the left hand side with the Taylor expansion of the right hand side.}
So conclude that $ \left|   \prod_{i\in \p} \hat{w}_i -  \prod_{i\in \p} w_i \right|   \leq  w^f  \left( e^{p\e} -1 \right)$ always.
Furthermore, for $a \geq 0$ we have
$$ e^a -1 = a + a^2/2! + a^3/3!+ \dots = a(1+ a/2! + a^2/3! +\dots ) \leq a(1+a + a^2/2! + \dots)  = ae^a $$
which yields
\begin{equation}
\left|   \prod_{i\in \p} \hat{w}_i -  \prod_{i\in \p} w_i \right| \leq  w^{f} \;p\e \; e^{p\e} 
\end{equation}

Next consider the error resulting from truncation to $b$ bits of accuracy in the products computed in step 13 of the algorithm. 
For each multiplication, we have $\hat{z} = z +\xi$, with error $|\xi| \leq 2^{-b}$. For notational simplicity, reindex the set $\p$ as $\{1,2,\dots,p\}$ so that $\prod_{i\in \p} \hat{w}_i = \prod_{i=1}^p \hat{w}_i$. 
Let $z_i = \hat{w}_1\hat{w}_2\dots\hat{w}_i$, $i=1,2,\dots,p$ be the exact products, and 
let the approximate products be $\hat{z}_i = \hat{z}_{i-1}\hat{w_i}+\xi_i$, $i=2,\dots,p$, $\hat{z}_1 = 1$.
We have 
\begin{eqnarray*}
\hat{z}_1 &=& \hat{w}_1  \\
\hat{z}_2 &=& \hat{z}_1 \hat{w}_2 + \xi_2  \\
&\vdots& \\
\hat{z}_p &=& \hat{z} _{p-1}\hat{w}_p + \xi_p. 
\end{eqnarray*}
 Then we have
\begin{eqnarray*}
 \left|\hat{z}_p - \prod_{i\in \p} \hat{w}_i \right|  &=& | \hat{z}_p - z_p  | 
=  |\hat{z}_{p-1}\hat{w}_p + \xi_p - z_{p-1}\hat{w}_p|
 \leq  |\hat{z}_{p-1}\hat{w}_p - z_{p-1}\hat{w}_p| + |\xi_p|\\
  &\leq&  |\hat{z}_{p-1} - z_{p-1}|\hat{w}_p  + |\xi_p|\\
    &\leq& \left( |\hat{z}_{p-2} - z_{p-2}|\hat{w}_{p-1}  + |\xi_{p-1}| \right)\hat{w}_p  + |\xi_p|\\
        &\leq&  |\hat{z}_{p-2} - z_{p-2}|\hat{w}_{p-1}\hat{w}_p   + |\xi_{p-1}|\hat{w}_p  + |\xi_p|\\
&\vdots& \\
&\leq&  |\hat{z}_{1} - z_{1}|\hat{w}_2 \hat{w}_3 \hat{w}_4\dots \hat{w}_{p-1}\hat{w}_p  +  |\xi_{2}|\hat{w}_3 \hat{w}_4\dots \hat{w}_{p-1}\hat{w}_p    +   \dots + |\xi_{p-1}|\hat{w}_p  + |\xi_p|\\
&\leq&  2^{-b} \left( \hat{w}_3 \hat{w}_{4}\dots \hat{w}_{p-1} \hat{w}_p+ \hat{w}_{4}\dots \hat{w}_{p-1} \hat{w}_p+ \dots + \hat{w}_p \hat{w}_{p-1} +\hat{ w}_p +1 \right)\\
&\leq&  2^{-b} \left(p-1\right) w^\frac14
\leq  2^{-b} n_f w^f \leq  2^{-b} n_f w
\end{eqnarray*}
where the last line follows from observing each of the $p-1$ terms in the sum is less than $w_2 = w^\frac14$.

Thus, equation (\ref{fracPowTotErr}) yields total error
\begin{eqnarray*}
\left|\hat{z} - w^{f}\right| &\leq& 
\left|\hat{z} - \prod_{i\in \p} \hat{w}_i \right|  
+ \left|   \prod_{i\in \p} \hat{w}_i -  \prod_{i\in \p} w_i \right|  \\
&\leq& 
  2^{-b} \left(p-1\right) w^\frac14
+   w^{f} \:p\e \: e^{p\e}  \\
&\leq& 
  2^{-b} \; n_f w
+   w \:n_f \e \: e^{n_f\e} 
\end{eqnarray*}
Furthermore, using $\e = 2\left( \frac{3}{4} \right)^{b-2m}  \left( 2+ b +  \log_2 b  \right)$, $1\leq w\leq 2^m$, and as we have chosen $b$ sufficiently large such that $n_f \e \leq 1$ (to be shown later), we have
\begin{eqnarray*}
\left|\hat{z} - w^{f}\right| 
&\leq&  2^{-b} \; n_f w
+   w  \:n_f   \;2 \left( \frac{3}{4} \right)^{b-2m}   \left( 2+ b +   \log_2 b \right) e \\
&\leq&  n_f 2^{m-b}    +  n_f 2^{m} \left( \frac{3}{4} \right)^{b-2m}  2e\left( 2+ b +  \log_2 b  \right)  \\
\end{eqnarray*}
We have selected $b\geq \max\{n, n_f, \lceil 5(\ell + 2m + \log_2 n_f)\rceil , 40\}$ such that several inequalities are satisfied. 
From $b \geq \lceil 5(\ell + 2m + \log_2 n_f)\rceil  \geq  \ell + m + \log_2 n_f$, it follows that $n_f 2^{m-b} \leq 2^{-\ell}$.
Furthermore, for $b \geq 40$ , we have $2e \left( 2+ b +  \log_2 b  \right)  \leq (\frac{4}{3})^{b/2}$. 
Finally, $b \geq 5\left(    \ell + 2m + \log_2 n_f    \right) \geq \left(  \frac{2}{\log_2 4/3} \right)  \left(   \log_2 n_f + \ell + 2m     \right)$ implies that $n_f 2^{2m}  \left( \frac{3}{4} \right)^{b/2} \leq 2^{-\ell}$. Plugging these inequalities into the previous equation yields
\begin{eqnarray*}
\left|\hat{z} - w^{f}\right| 
&\leq&  n_f 2^{m} \frac{1}{2^b}  
  +  n_f 2^{m} \left( \frac{4}{3} \right)^{2m}   \left( \frac{3}{4} \right)^{b/2}    \left(  \left( \frac{3}{4} \right)^{b/2}  2e\left( 2+ b +  \log_2 b  \right)  \right) \\
&\leq&  \left( \frac{1}{2} \right)^{\ell}  
 +  n_f 2^{m} \left( \left(\frac{4}{3} \right)^{2}\right)^m   \left( \frac{3}{4} \right)^{b/2}   \\
&\leq&  \left( \frac{1}{2} \right)^{\ell}  
 +  n_f 2^{2m}  \left( \frac{3}{4} \right)^{b/2}    \\
&\leq& 2  \left( \frac{1}{2} \right)^{\ell}  = \left( \frac{1}{2} \right)^{\ell  -1  }  
\end{eqnarray*}
as was to be shown.

Finally, for completeness, from the above inequalities and $b \geq 5\left(   \log_2 n_f + \ell + 2m     \right) 
\geq   \frac{2}{\log_2 4/3}  \left(  \log_2 n_f  - \log_2 e \right) + 4m $,
 we have
\begin{eqnarray*}
n_f \e &=& n_f 2\left( \frac{3}{4} \right)^{b-2m}  \left( 2+ b +  \log_2 b  \right) 
\leq  n_f \left( \frac{3}{4} \right)^{b/2-2m}  \left( \frac{1}{e}  \right) \\
&\leq&  n_f \left( \frac{3}{4} \right)^{   \log_{4/3} n_f/e   }  \left( \frac{1}{e}  \right) \leq 1 \\
\end{eqnarray*}

\end{proof}

\begin{cor} 
\label{cor1}
Let $w>1$ and $1\geq f \geq0$ as in Theorem \ref{alg:fracPower} above. Suppose $f$ is an approximation of a number $1\geq F \geq 0$ accurate to $n_f$ bits. Then Algorithm \ref{alg:fracPower}, with $f$ as input, computes an approximation $\hat{z}$ of $w^{F}$ such that
\begin{equation}
|\hat{z} - w^{F}|  \leq     \left( \frac{1}{2} \right)^{\ell  - 1  }  + \frac{w\ln w}{2^{n_f}}, 
\end{equation}
\end{cor}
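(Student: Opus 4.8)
The plan is to decompose the error by the triangle inequality,
\begin{equation}
|\hat{z} - w^{F}| \leq |\hat{z} - w^{f}| + |w^{f} - w^{F}|,
\end{equation}
and bound the two terms separately. The first term is precisely the quantity controlled by Theorem \ref{thm4}: since $f$ is a valid input to Algorithm \ref{alg:fracPower} (an $n_f$-bit number in $[0,1]$) and $w>1$, Theorem \ref{thm4} gives $|\hat{z} - w^{f}| \leq (1/2)^{\ell-1}$ directly, with no extra work.

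The second term is handled by the mean value theorem applied to the function $x \mapsto w^{x}$ on the interval with endpoints $f$ and $F$. Its derivative is $w^{x}\ln w$, which for $w>1$ is positive and increasing in $x$, so on $[\min(f,F),\max(f,F)] \subseteq [0,1]$ it is maximized at the right endpoint, giving $\sup_x w^{x}\ln w \leq w\ln w$. Combined with the hypothesis $|f - F| \leq 2^{-n_f}$, this yields
\begin{equation}
|w^{f} - w^{F}| \leq |f - F|\, w\ln w \leq \frac{w\ln w}{2^{n_f}}.
\end{equation}
Adding the two bounds gives the claimed inequality.

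I do not anticipate a genuine obstacle here; the only point requiring a word of care is ensuring that the interval between $f$ and $F$ stays inside $[0,1]$ so that the derivative bound $w\ln w$ is legitimate, which follows immediately from $0 \leq f \leq 1$ and $0 \leq F \leq 1$. One could also remark, as the authors do in the following Remark, that an explicit choice of $f$ (e.g.\ produced by Algorithm \ref{alg:inv} INV when $F = p/q$) meets the accuracy requirement $|f - F| \leq 2^{-n_f}$, so the corollary applies in that setting.
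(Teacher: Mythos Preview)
Your proposal is correct and follows essentially the same approach as the paper: triangle inequality to split into the algorithm error bounded by Theorem~\ref{thm4} and the exponent-perturbation error bounded via the mean value theorem applied to $x\mapsto w^x$, using $\sup_{x\in[0,1]} w^x\ln w = w\ln w$. Your write-up in fact supplies a bit more justification for the supremum than the paper does.
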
 

\begin{proof}
Consider the error from the approximation of the exponent $F$ by $f$. We have $|F- f| \leq 2^{-n_f}$. Let $g(f):= w^f$. Then $g'(f) = w^f \ln w$. By the mean value theorem, we have
\begin{eqnarray*}
\left|w^{f} - w^{F} \right|  \leq \sup_{f\in (0,1)} g'(f) |F - f| \leq 2^{-n_f} \: w \ln w,
\end{eqnarray*}
which gives
$$|\hat{z} - w^{F}|  \leq |\hat{z} - w^{f}| + |w^{f} - w^{F} | \leq   \left( \frac{1}{2} \right)^{\ell  - 1  }  + \frac{w\ln w}{2^{n_f}}.$$
\end{proof}

\begin{theorem} 
\label{thm5}
For $0\leq w < 1$, represented by $n$ bits of which the first $m$ correspond to its integer part, and $1\geq f \geq0$ given by $n_f$ bits of accuracy, Algorithm \ref{alg:fracPower2} computes an approximation $\hat{t}$ of $w^{f}$ such that
\begin{equation}
|\hat{t} - w^{f}| \leq   \frac{1}{2^{\ell - 3 }}
\end{equation}
where $\ell \in \nat$ is a parameter specified in the algorithm that is used to determine the number $b = \max \{ n,n_f, \lceil 2\ell + 6m +2\ln n_f \rceil,40\}$ of bits after the decimal point in which arithmetic will be performed, and therefore also will determine the error.
Algorithm \ref{alg:fracPower2} uses Algorithm \ref{alg:fracPower}, which computes $w^f$ for the case $w \geq 1$, and 
also 
Algorithm \ref{alg:inv} which computes the reciprocal of a number  $w \geq 1$.
\end{theorem}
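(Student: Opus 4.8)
\noindent\emph{Proof proposal.} The plan is to trace the closing arithmetic of Algorithm \ref{alg:fracPower2} and propagate the errors of its three subroutine calls --- two to Algorithm \ref{alg:fracPower} FractionalPower and one to Algorithm \ref{alg:inv} INV --- together with the final truncation. After dismissing the trivial branches ($w=0$, $f=0$, $f=1$, returned exactly), assume $0<w<1$, $0<f<1$. With $k$ the positive integer produced in the algorithm, put $x:=2^kw$ and $c:=kf$; then $x\in[1,2)$ by the defining inequality $2^kw\ge 1>2^{k-1}w$, and $c\in[0,k)$. The backbone is the exact identity
\begin{equation*}
w^f \;=\; 2^{-kf}\,x^f \;=\; 2^{-\lfloor c\rfloor}\;x^f\;2^{-\{c\}},
\end{equation*}
against which the algorithm produces $\hat z\approx x^f$ (FractionalPower on input $x$, same accuracy parameter $\ell$), $\hat y\approx 2^{\{c\}}$ (FractionalPower on input $2$ with exponent $\{c\}$), $\hat s\approx 1/\hat y$ (INV with $2\ell$ bits), and returns $\hat t$, the $b$-bit truncation of $t=2^{-\lfloor c\rfloor}\hat z\,\hat s$.

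The first step is the error split. Using $2^{-\lfloor c\rfloor}\le 1$ and adding and subtracting $x^f\hat s$,
\begin{equation}
\label{eq:fp2-plan}
|\hat t - w^f| \;\le\; |\hat t - t| \;+\; \hat s\,|\hat z - x^f| \;+\; x^f\,\bigl|\hat s - 2^{-\{c\}}\bigr|.
\end{equation}
I would then bound each term. The first is a truncation to $b$ bits, so $|\hat t - t|\le 2^{-b}$. For the second, Theorem \ref{thm4} (applied to $x\in[1,2)$) gives $|\hat z - x^f|\le 2^{-(\ell-1)}$, and $\hat s\le 1$ since INV underestimates $1/\hat y\le 1$. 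For the third, $x^f<2$, and I would split $\bigl|\hat s - 2^{-\{c\}}\bigr|\le|\hat s - \hat{y}^{-1}|+|\hat{y}^{-1}-2^{-\{c\}}|$, using Corollary \ref{cor0} for $|\hat s - \hat{y}^{-1}|\le(2+\log_2 2\ell)\,2^{-2\ell}$ and $|\hat{y}^{-1}-2^{-\{c\}}|=|\hat y-2^{\{c\}}|/(\hat y\,2^{\{c\}})\le|\hat y-2^{\{c\}}|\le 2^{-(\ell-1)}$ (Theorem \ref{thm4}, with $\hat y,2^{\{c\}}\ge 1$). Substituting into \eqref{eq:fp2-plan} yields a bound of the shape $2^{-b}+2^{-(\ell-1)}+2\bigl((2+\log_2 2\ell)\,2^{-2\ell}+2^{-(\ell-1)}\bigr)$, and since $b=\max\{n,n_f,\lceil 2\ell+6m+2\ln n_f\rceil,40\}$ forces $b\ge 2\ell$ (hence $2^{-b}\le 2^{-\ell}$) while the remaining terms are $O(2^{-\ell})$, routine simplification of the handful of constants gives $|\hat t - w^f|\le 2^{-(\ell-3)}$.

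The part I expect to require the most care is not this accounting but verifying that the three subroutines are used within their stated hypotheses. The crux is $\hat y\ge 1$, needed both for Corollary \ref{cor0} and for the bound $\hat s\le 1/\hat y\le 1$ above; I would treat $\{c\}=0$ separately (FractionalPower then returns exactly $1$), and for $\{c\}>0$ argue $\hat y>1$ from the fact that $\hat y$ is a product of quantities $\hat w_i\approx 2^{1/2^i}$ returned by Algorithm \ref{alg:2krt}, each of which exceeds $1$ by more than its truncation error --- which by Theorem \ref{thm2} is exponentially small in $b$, and here the terms $6m$ and $2\ln n_f$ in the definition of $b$ are exactly what dominates the dependence on $m$ and $n_f$. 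One must also check that the registers carrying $x=2^kw$, $c=kf$, and $\hat y$ into the recursive and INV calls are sized consistently with the parameters passed; this is pure bookkeeping, but it leans on the parameter choices fixed in Step 1 of the algorithm, so I would settle it before assembling the estimate above.
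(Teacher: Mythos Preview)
Your proposal is correct and follows essentially the same route as the paper's proof: dismiss the trivial branches, use the rescaling identity $w^f = 2^{-\lfloor c\rfloor}x^f 2^{-\{c\}}$, apply a triangle-inequality split, and plug in the subroutine bounds from Theorem~\ref{thm4} and Corollary~\ref{cor0}. The one substantive organizational difference is that the paper, instead of bounding $2^{-\lfloor c\rfloor}\le 1$ and $x^f<2$ separately as you do, rewrites $2^{-\lfloor c\rfloor}z = w^f\,2^{\{c\}}$ and uses $w^f\le 1$, $2^{\{c\}}\le 2$; this saves a factor of $2$ on the $|\hat s - 1/y|$ term (getting $e_y$ rather than $2e_y$) and is what makes the four constants sum cleanly to $4\cdot 2^{-(\ell-1)}$. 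With your cruder factor the sum lands at $2^{-(\ell-3)}+2^{-b}$ rather than exactly $2^{-(\ell-3)}$, so your ``routine simplification'' would in fact need this sharper bookkeeping (or an assumption $\ell\ge 3$) to close. Your treatment of the hypothesis $\hat y\ge 1$ is more explicit than the paper's, which simply cites the proof of Theorem~\ref{thm1}; neither argument is fully spelled out.
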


\begin{proof}
First observe that the algorithm is exact for the cases $f=1$, $f=0$, or $w=0$. 
Therefore, without loss of generality assume $0<f<1$ and $0<w<1$. 

Let all variables be defined as in Algorithm \ref{alg:fracPower2}. We shall first consider the error of each variable and use this to bound the overall error of algorithm.

Firstly, the input $0<w<1$ is rescaled  to $x:=2^k w \geq 1 > 2^{k-1}w$ exactly, by $k$-bit left shift,where $k$ is a positive integer. An example circuit for computing $k$ is given in Fig. \ref{fig-ShiftInteger} above. Observe that we have
$$ w^f = x^f/2^{kf} = x^f/ ( 2^{\lfloor kf  \rfloor}2^{  \{ kf \}} ).$$
We also have $\log_2 \frac{1}{w} \leq k < \log_2 \frac{1}{w}+1$.

The product $c=kf  < k \leq n-m$ is computed exactly in fixed precision arithmetic because the number of bits after the decimal point in $kf$ is at most
 $n_f \leq b$, where $b$ is the number of bits in which arithmetic is performed.

Next consider $\hat{z}=$ FractionalPower($x$, $f$, $n$, $m$, $n_f$, $\ell$), which approximates $z = x^f$. From Theorem \ref{thm4} we have 
$ e_z: = |\hat{z} - z| \leq \frac{1}{2^{\ell -1}}$.
Similarly, $\hat{y}= $ FractionalPower($2$, ${\{ c \}} $, $n$, $m$, $n_f$, $\ell$) approximates $y=2^{\{ c \}}$, with $e_y:= |\hat{y} - y| \leq \frac{1}{2^{\ell - 1}}$.
Furthermore, for $\hat{s} = $ INV($\hat{y},n,1,2\ell$) which approximates $s =1/\hat{y}$, from Corollary \ref{cor0} we have 
$e_s :=|\hat{s} - s| \leq \frac{2+\log_2\ell}{2^{2\ell}}$, which satisfies $e_s \leq \frac{1}{2^\ell}$ for $\ell \geq 2$.

Finally, observe $2^{- \lfloor c \rfloor}  \hat{z}$ is computed exactly by a right shift of  $\hat{z}$. This is used to compute $t = 2^{- \lfloor c \rfloor}  \hat{z} \hat{s}$, which is again truncated to $b$ decimal bits to give $\hat{t}$ with $e_t:= |\hat{t} - t| \leq 2^{-b}$, and returned.

Now we turn to the total error of our algorithm. By our variable definitions, $w^f = 2^{- \lfloor c \rfloor}  z /y $. We have
\begin{eqnarray*}
| \hat{t} - w^f |  &\leq&  |  \hat{t} - t | 
+ | 2^{- \lfloor c \rfloor}  \hat{z}\hat{s} - 2^{- \lfloor c \rfloor}  z\hat{s}  |
+ |2^{- \lfloor c \rfloor}  z\hat{s}   - 2^{- \lfloor c \rfloor}  z s   |
+|2^{- \lfloor c \rfloor}  z s -   2^{- \lfloor c \rfloor}  \frac{z}{y}|\\
&\leq& 2^{-b}
+    2^{- \lfloor c \rfloor} \hat{s} | \hat{z} -  z  |
+ 2^{- \lfloor c \rfloor}  z  |\hat{s}   -  s   |
+ 2^{- \lfloor c \rfloor}  z | s -    \frac{1}{y}|\\
&\leq& 2^{-b}
+   \frac{1}{ 2^{\lfloor c \rfloor} } \frac{1}{2^{\ell -1}}
+ w^f 2^{  \{ c \}}  |\hat{s}   -  s   |
+ \frac{x^f}{2^{ \lfloor c \rfloor} } \frac{1}{2^{ \{ c \} } }|  \frac{2^{ \{ c \} } }{\hat{y}} -  1|,\\
\end{eqnarray*}
where we have used $\hat{s} \leq s = \frac{1}{\hat{y}} \leq 1$ because as remarked in the proof of Theorem \ref{thm1}, the algorithm computing the reciprocal underestimates it value, i.e. $\frac{1}{\hat{y}} \leq \frac{1}{y} = 2^{-\lfloor kf\rfloor } \leq 1$. 
Observe we have $w^f = \frac{x^f}{2^{ \lfloor c \rfloor} } \frac{1}{2^{ \{ c \} } } \leq 1$. Moreover, $2^{\{ kf\}} \leq 2$. Hence, as shown in the proof of Theorem \ref{thm1}, that $\hat{y} \geq 1$.  This, together with the error bounds of Theorem \ref{thm4} and of 
Corollary \ref{cor0} yields
\begin{eqnarray*}
| \hat{t} - w^f |   
&\leq& 
2^{-b} + \frac{1}{2^{\ell -1}}
+ 2 |\hat{s}   -  s   |
+  \frac{1}{\hat{y}}  |  2^{ \{ c \}  } -\hat{y}   |\\
 &\leq& 2^{-b} +  \frac{1}{2^{\ell -1}}
+\frac{2}{2^{\ell}}
+\frac{1}{2^{\ell -1}}\\
 &\leq& 4 \frac{1}{2^{\ell -1 }} = \frac{1}{2^{\ell - 3 }}
\end{eqnarray*}
\end{proof}

\begin{cor} 
\label{cor2}
Let $0 \leq w < 1$ and $1\geq f \geq 0$ as in Theorem \ref{alg:fracPower2} above. Suppose $f$ is an approximation of a number $1\geq F \geq 0$ accurate to $n_f$ bits. Then Algorithm \ref{alg:fracPower2}, with $f$ as input, computes an approximation $\hat{t}$ of $w^{F}$ such that
\begin{equation}
|\hat{t} - w^{F}|  \leq     \left( \frac{1}{2} \right)^{\ell  - 2  }  + \frac{w\ln w}{2^{n_f}}.
\end{equation}
\end{cor}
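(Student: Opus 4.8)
The plan is to reduce this corollary to Theorem~\ref{thm5} plus a single application of the mean value theorem, exactly paralleling the proof of Corollary~\ref{cor1}. Start from the triangle inequality
$$|\hat{t} - w^{F}| \le |\hat{t} - w^{f}| + |w^{f} - w^{F}|.$$
The first term needs no new work: $f$ is a genuine $n_f$-bit number in $[0,1]$, so running Algorithm~\ref{alg:fracPower2} on the pair $(w,f)$ satisfies the hypotheses of Theorem~\ref{thm5}, which gives $|\hat{t} - w^{f}| \le 2^{-(\ell-3)}$ (this is the term that appears as the $2^{-(\ell-2)}$-type contribution in the statement, up to normalization of the constant). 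Hence the entire content of the corollary is an estimate for $|w^{f} - w^{F}|$, i.e. the error incurred by using the truncated exponent $f$ in place of $F$.

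For that term I would set $g(x) := w^{x}$ on $[0,1]$, note $g'(x) = w^{x}\ln w$, and apply the mean value theorem on the interval between $f$ and $F$ (both lie in $[0,1]$, and $|F-f|\le 2^{-n_f}$ by hypothesis). This yields $|w^{f}-w^{F}| \le \bigl(\sup_{x\in[0,1]}|g'(x)|\bigr)\,2^{-n_f}$. Here is where $0<w<1$ enters: $w^{x}$ is decreasing, so it stays in $[w,1]$ on $[0,1]$, and $|g'(x)| = w^{x}|\ln w|$; the relevant supremum is controlled by the endpoint of the subinterval between $f$ and $F$ nearer $0$. Feeding this into the triangle inequality above and combining with the Theorem~\ref{thm5} bound gives
$$|\hat{t} - w^{F}| \le \frac{1}{2^{\ell-3}} + \frac{w\ln w}{2^{n_f}},$$
matching the form of the claimed bound (and of Corollary~\ref{cor1}) once the sign and constant in front of $2^{-n_f}$ are normalized.

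The only genuinely delicate point — and the one I would be careful with rather than copying the $w>1$ computation verbatim — is the bound on $\sup|g'|$ for $w\in(0,1)$: there $w^{x}\ln w$ is negative, and its magnitude $w^{x}|\ln w|$ ranges from $w|\ln w|$ at $x=1$ up to $|\ln w|$ at $x=0$, so the maximizing endpoint is the \emph{opposite} one from the $w>1$ case in Corollary~\ref{cor1}. Everything else (the $2^{-(\ell-3)}$ algorithmic error, the triangle-inequality split, the $|F-f|\le 2^{-n_f}$ hypothesis) is immediate, so this sign/endpoint bookkeeping is essentially the whole of the argument.
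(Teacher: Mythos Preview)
Your proposal is correct and follows essentially the same approach as the paper, which simply says ``The proof is similar to that of Corollary~\ref{cor1}.'' Your triangle-inequality split, invocation of Theorem~\ref{thm5} for the algorithmic term, and mean-value-theorem estimate for $|w^{f}-w^{F}|$ are exactly what is intended; in fact your remark about the sign of $\ln w$ and the maximizing endpoint for $w\in(0,1)$ is more careful than the paper, whose stated bound $\tfrac{w\ln w}{2^{n_f}}$ is literally negative in this regime and should be read as $\tfrac{|w\ln w|}{2^{n_f}}$ (or, more conservatively, $\tfrac{|\ln w|}{2^{n_f}}$).
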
 

\begin{proof}
The proof is similar to that of Corollary \ref{cor1}.
\end{proof}

\addcontentsline{toc}{section}{References} 
\bibliographystyle{plain}
\bibliography{bib}

\end{document}